\documentclass[12pt,authoryear]{article}
\usepackage[margin=2cm]{geometry}
\usepackage{authblk}

\usepackage{algorithm}
\usepackage{algpseudocode}
\usepackage{tikz}
\usetikzlibrary{calc,positioning}
\usepackage{xurl}

\usepackage{natbib}
 \bibpunct[, ]{(}{)}{,}{a}{}{,}%
\usepackage{amsmath,amssymb,amsfonts,amsthm}
\usepackage{dsfont}
\usepackage{bbm}
\usepackage{mathtools}
\usepackage{extarrows}
\usepackage{mathalpha}

\usepackage{acronym}
\usepackage{enumitem}
\usepackage{longtable}
\usepackage{tikz}
\usetikzlibrary{decorations.pathreplacing,decorations.markings}
\usepackage[normalem]{ulem}

\usepackage[dvipsnames,svgnames]{xcolor}
\colorlet{MyBlue}{DodgerBlue!80!Black}
\colorlet{MyGreen}{ForestGreen!80!Black}
\colorlet{MyRed}{OrangeRed!80!Black}

\usepackage{hyperref}
\hypersetup{
colorlinks=true,
citebordercolor=white,
linktocpage=true,
pdfstartview=FitH,
breaklinks=true,
pdfpagemode=UseNone,
pageanchor=true,
pdfpagemode=UseOutlines,
plainpages=false,
bookmarksnumbered,
bookmarksopen=false,
bookmarksopenlevel=1,
hypertexnames=true,
pdfhighlight=/O,
nesting=true,
frenchlinks, urlcolor=MyRed,linkcolor=MyGreen!,citecolor=MyBlue, 
pdfborder={0 0 0},
pdftitle={},
pdfauthor={},
pdfsubject={},
pdfkeywords={},
pdfcreator={pdfLaTeX},
pdfproducer={LaTeX with hyperref}
}

\usepackage[sort&compress,capitalize,nameinlink]{cleveref}

\crefname{enumi}{Condition}{Conditions}

\theoremstyle{plain}
\newtheorem{theorem}{Theorem}

\newtheorem{lemma}[theorem]{Lemma}
\newtheorem{proposition}[theorem]{Proposition}

\theoremstyle{definition}
\newtheorem{definition}[theorem]{Definition}

\theoremstyle{remark}
\newtheorem{remark}[theorem]{Remark}

\numberwithin{theorem}{section}

\usepackage[textwidth=30mm]{todonotes}

\DeclarePairedDelimiter{\braces}{\{}{\}}
\DeclarePairedDelimiter{\bracks}{[}{]}
\DeclarePairedDelimiter{\parens}{(}{)}

\newcommand{\debug}[1]{#1}

\newcommand{\argdot}{\,\cdot\,}

\newcommand{\diff}{\ \textup{\debug d}}
\newcommand{\ind}{\mathds{\debug 1}}
\newcommand{\eg}{e.g., }
\newcommand{\ie}{i.e., }
\newcommand{\iid}{i.i.d.\ }

\DeclareMathOperator*{\argmax}{arg\,max}

\newcommand{\naturals}{\mathbb{\debug N}}

\newcommand{\reals}{\mathbb{\debug R}}
\newcommand{\rhs}{r.h.s.\ }

\DeclareMathOperator{\Expect}{\mathsf{\debug E}}
\DeclareMathOperator{\Prob}{\mathsf{\debug P}}

\DeclareMathOperator{\Expectgen}{\mathcal{\debug E}}

\newcommand{\ahyp}{\debug a}

\newcommand{\Aset}{\debug A}
\newcommand{\auxf}{\debug \psi}

\newcommand{\bhyp}{\debug b}

\newcommand{\Bset}{\debug B}

\newcommand{\contract}{\debug \Psi}
\newcommand{\corresp}{\debug \Xi}
\newcommand{\cost}{\debug c}

\newcommand{\demand}{\debug d}
\newcommand{\Demand}{\debug D}
\newcommand{\distrdemand}{\debug F}

\newcommand{\horizon}{\debug T}

\newcommand{\ffunc}{\debug f}

\newcommand{\game}{\mathcal{\debug G}}
\newcommand{\gameenv}{\game_{\contract}}

\newcommand{\hvalue}{\debug H}
\newcommand{\hvalueret}{\hvalue^{\retailer}}
\newcommand{\hvalueretbreve}{\breve{\hvalue}^{\retailer}}
\newcommand{\hvaluemanuf}{\hvalue^{\manufacturer}}

\newcommand{\lnewsv}{\debug \ell}

\newcommand{\maximiz}{\debug \zeta}

\newcommand{\nuncens}{\debug n}
\newcommand{\uvalue}{\debug U}
\newcommand{\uvalueret}{\uvalue^{\retailer}}
\newcommand{\uvaluesmall}{\debug u}
\newcommand{\uvaluemanuf}{\uvalue^{\manufacturer}}
\newcommand{\uvaluemanufsmall}{\uvaluesmall^{\manufacturer}}

\newcommand{\oneplus}{\debug \eta}
\newcommand{\order}{\debug q}
\newcommand{\orders}{\debug Q}

\newcommand{\parameter}{\debug \theta}
\newcommand{\Parameter}{\debug \Theta}

\newcommand{\payoff}{\debug \Pi}
\newcommand{\payoffmanufacturer}{\payoff^{\manufacturer}}
\newcommand{\payoffretailer}{\payoff^{\retailer}}

\newcommand{\preddemand}{\debug G}
\newcommand{\prior}{\debug \phi}

\newcommand{\scalewholesaleprice}{\widehat{\wholesaleprice}}
\newcommand{\scaleorder}{\widehat{\order}}

\newcommand{\retailer}{\textup{\debug R}}
\newcommand{\retailprice}{\debug p}

\newcommand{\sale}{\debug s}
\newcommand{\Sale}{\debug S}

\newcommand{\sorder}{\tilde{\order}}
\newcommand{\sorders}{\tilde{\orders}}
\newcommand{\sordertwo}{ \breve{\order}}

\newcommand{\spayoff}{\debug \pi}
\newcommand{\spayoffmanufacturer}{\spayoff^{\manufacturer}}
\newcommand{\spayoffretailer}{\spayoff^{\retailer}}

\newcommand{\manufacturer}{\textup{\debug M}}

\newcommand{\survdemand}{\overline{\preddemand}}

\newcommand{\svalue}{\debug v}

\newcommand{\svalueret}{\svalue^{\retailer}}
\newcommand{\svalueretbreve}{\breve{\svalue}^\retailer}
\newcommand{\svaluemanuf}{\svalue^{\manufacturer}}
\newcommand{\swprice}{\tilde{\wholesaleprice}}

\newcommand{\transition}{\debug \Phi}
\newcommand{\ttime}{\debug t}

\newcommand{\Value}{\debug V}

\newcommand{\Valueret}{\Value^{\retailer}}
\newcommand{\Valueretbreve}{\breve{\Value}^\retailer}
\newcommand{\Valuemanuf}{\Value^{\manufacturer}}

\newcommand{\weibull}{\debug k}
\newcommand{\wholesaleprice}{\debug w}
\newcommand{\wprices}{\debug W}

\newcommand{\xvar}{\debug x}
\newcommand{\yvar}{\debug y}

\DeclareMathOperator{\Gr}{\mathsf{\debug{Gr}}}

\newcommand{\acdef}[1]{\acfi{#1}\acused{#1}}

\newacro{SPE}{Subgame perfect equilibrium}
\newacroplural{SPE}[SPE]{Subgame perfect equilibria}

\newacro{MPE}{Markov perfect equilibrium}
\newacroplural{MPE}[MPE]{Markov perfect equilibria}

\newacro{PBE}{perfect Bayesian equilibrium}
\newacroplural{PBE}[PBE]{perfect Bayesian equilibria}

\newacro{SE}{sequential equilibrium}
\newacroplural{SE}[SE]{sequential equilibria}

\newacro{NE}{Nash equilibrium}
\newacroplural{NE}[NE]{Nash equilibria}
\newacro{POS}{point-of-sale}
\newacroplural{POS}[POS]{points of sale}

\newacro{SMPS}{standardized Markov perfect solution}

\newacro{CE}{channel efficiency}

\begin{document}

\title{Dynamic Wholesale Pricing under Censored-Demand Learning} 
\author[1]{Michalis Deligiannis}
\author[2]{Marco Scarsini}
\author[3]{Xavier Venel}

\affil[1]{Department of Economics and Financial Markets, Luiss University\\
\texttt{mdeligiannis@luiss.it}
}

\affil[2]{Department of Economics and Financial Markets, Luiss University\\
\texttt{marco.scarsini@luiss.it}
}

\affil[3]{Department of AI, Data, and Decision Sciences\\ Luiss University\\
\texttt{xvenel@luiss.it}
}

\maketitle

\begin{abstract}
This paper studies dynamic wholesale pricing and ordering in a two-tier supply chain where firms share \ac{POS} data and learn about demand from censored demand data. 
When stockouts occur, unmet demand is unobserved, so the retailer’s order quantity affects not only current profits but also the informativeness of future demand signals. 
This creates a strategic interaction between pricing, ordering, and learning: the manufacturer can influence the pace of learning through wholesale prices, whereas the retailer internalizes the effect of inventory decisions on future information.
We analyze a finite-horizon dynamic game in which a manufacturer sets a wholesale price, the retailer then chooses an order quantity, demand is realized, and both firms observe sales. 
For Weibull demand with a conjugate prior, we extend a dimensionality-reduction approach from single-agent inventory learning models to a strategic supply-chain setting and use it to establish the existence of a Markov perfect equilibrium. 
For exponential demand, we further show that the equilibrium is unique and admits a recursive characterization.
Our numerical analysis shows that public learning can create conflicting incentives in the supply chain: 
In order to induce larger orders and reduce future censoring, the manufacturer chooses a wholesale price that is lower than a myopic benchmark. 
By contrast, because of its forward-looking ordering incentive, the retailer may prefer slower learning to avoid strengthening the manufacturer's future wholesale-pricing position.

\medskip
\noindent
\textbf{Keywords:}
supply chain contracting; censored demand; Bayesian learning; public learning; Markov perfect equilibrium.
\end{abstract}








%
%
%
%

\section{Introduction}

In many retail and supply-chain settings, firms observe sales rather than actual demand. 
When a retailer stocks out, lost demand is unobserved, so demand observations are censored. 
Frequent stockouts reduce the information content of observed sales data. 
Short product life cycles further limit the amount of data that can be accumulated, leading firms to rely heavily on recent sales that are only imperfectly informative about the underlying demand distribution \citep{ChuKim:OR2023}.

This paper studies a finite-horizon dynamic wholesale-pricing contract between a manufacturer and a retailer facing stochastic demand whose distribution depends on an unknown parameter. 
In each period, the manufacturer sets a wholesale price, and the retailer observes this price before choosing an order quantity.  Realized sales are publicly observed, and both parties update a common posterior over the unknown parameter from censored demand observations. 
This formulation extends the classical ``selling to the newsvendor'' framework of \citet{LarPorMSOM2001} by incorporating strategic dynamic learning from censored demand data.

The common-information structure is motivated by retailer data platforms that make realized sales and inventory information visible to upstream suppliers. Walmart describes Luminate as a first-party data analytics suite that gives merchants and suppliers a shared view of customers and products \citep{WalmartLuminate2024}. A trade-press report notes that Walmart required suppliers to move to Luminate as a replacement for a system that suppliers used to track store sales, view inventory at stores and distribution centers, and forecast store-level demand \citep{talkbusiness2023walmart}. As an illustrative example, Walmart Data Ventures reports that E.T. Browne used Scintilla data to observe declining in-stock levels and higher sales, identify a demand surge through \ac{POS} data, revise forecasts, and coordinate store-specific replenishment for stores that had run out of inventory \citep{WalmartETBrowneScintilla2025}.

The finite-horizon assumption reflects the fact that supply relationships are often revisited over time rather than fixed indefinitely.
This is consistent with evidence that contract duration in business-to-business relationships depends on economic factors and learning, and with recent evidence from the European grocery industry highlighting intensified retailer--supplier negotiations \citep{FenKri:JSR2022,McKinseyGrocery2024}.

In this strategic environment, the evolution of the common posterior belief is endogenous. 
The wholesale price affects the retailer's order quantity, and the order quantity affects the probability that the period's demand observation is censored by a stockout. 
Larger orders reduce the likelihood of censoring, thereby accelerating public learning about the demand distribution. 
Because the horizon is finite, learning has different continuation consequences depending on when it occurs: 
information revealed early in the relationship can affect many future pricing and ordering decisions, whereas information revealed near the end has a smaller impact on subsequent behavior. 
Therefore, the manufacturer can use the wholesale price not only to extract current margin, but also to influence the evolution of the common posterior belief and its own future wholesale-pricing incentives. 
Our central question is how public learning from censored demand data shapes the manufacturer's wholesale-pricing incentives and the retailer's ordering incentives, and whether the two firms value public information differently.

To our knowledge, the literature has not directly analyzed the interaction between dynamic wholesale pricing and public learning from censored demand observations. 
The Bayesian inventory literature studies learning from censored demand primarily in single-agent settings, whereas the supply-chain contracting literature typically assumes either known demand or exogenous information updates. 
Our paper brings these two strands together by studying a dynamic wholesale-pricing problem in which both firms learn from the same public history of censored demand observations.

Learning from censored demand creates substantial analytical challenges because many standard demand families do not preserve conjugacy under censoring. 
To maintain tractability, we work within the newsvendor family introduced by \citet{BraFre:MS1991}, which is closed under Bayesian updating with censored demand observations. 

We frame our problem as a dynamic Stackelberg game, and we study \acp{MPE} in which the common posterior belief is the payoff-relevant state variable. 
The analysis shows that public learning can generate conflicting incentives even when both firms observe the same sales history and share the same posterior belief.

Our main contributions are threefold.
\begin{enumerate}
\item \textbf{Dynamic contracting with censored-demand learning.}
In our finite-horizon dynamic Stackelberg game, the manufacturer sets the wholesale price before the retailer chooses the order quantity in each period. 
The manufacturer's pricing and the retailer's ordering decisions are jointly shaped by Bayesian learning from a shared history of censored demand observations. 
The model captures a public-learning environment in which neither party observes true demand directly, and current decisions affect both current payoffs and the rate of public learning through their effect on demand censoring.

\item \textbf{Equilibrium existence and properties via dimensionality reduction.}
For Weibull demand, we extend the dimensionality-reduction technique of \citet{LarPor:MS1999} to a strategic two-tier supply chain and establish the existence of an \ac{MPE} through a standardized one-parameter recursion. 
In the exponential case, the same reduction yields a unique \ac{MPE} together with a recursive characterization of equilibrium. 
A notable byproduct is that the manufacturer's equilibrium wholesale price depends on the belief state only through the number of uncensored demand observations, not through realized sales magnitudes.

\item \textbf{Economic implications of strategic learning.}
In the exponential-demand case, the recursive equilibrium characterization allows us to identify how forward-looking learning incentives distort wholesale pricing and ordering decisions. 
We compare the equilibrium with a sequence of one-period Stackelberg outcomes evaluated at the posterior updated by Bayes’ rule. 
This comparison highlights three implications. 
First, the manufacturer may lower the wholesale price relative to this myopic benchmark, to induce larger orders and reduce demand censoring. 
Second, the value of public information is asymmetric: in our numerical examples, an additional uncensored observation benefits the manufacturer, whereas its value to the retailer can be positive or negative depending on the horizon. 
Third, the retailer's forward-looking ordering need not favor learning. 
Although larger orders reduce the likelihood of stockouts and make uncensored demand observations more likely, the retailer need not value the resulting public learning positively. 
Unlike in a single-agent censored-demand newsvendor model, a more precise public belief can strengthen the manufacturer's future wholesale-pricing position, giving the retailer an incentive to slow down learning.
\end{enumerate}

The remainder of the paper is organized as follows. 
\cref{sec:literat} reviews the related literature. 
\cref{se:problem} formulates the dynamic game and introduces the solution concept. 
\cref{se:Weibull} presents the dimensionality-reduction result, establishes the existence of an \ac{MPE}, and discusses its implications. 
\cref{se:charact} characterizes the equilibrium in the exponential-demand case, \cref{sec:Numerics} presents the numerical analysis, and \cref{se:conclusions} concludes. 
Finally, \cref{se:proofs} contains the proofs.

%
%
%
%

\section{Literature Review}
\label{sec:literat}

Our work relates to research on inventory management under censored demand and to dynamic wholesale-price contracts in supply chains.

A substantial literature in inventory management studies Bayesian inventory models in which decision makers update beliefs about the parameters of demand over time.
Early contributions by \citet{Sca:AMS1959,Sca:NRLQ1960} establish state-space reductions for Bayesian inventory models under suitable structural conditions, thereby simplifying computation.
Related reductions for other observable-demand models are developed by \citet{Azo:MS1985}.
Censored demand introduces an additional difficulty: when stockouts occur, lost demand is not observed, so the decision maker learns from sales rather than from exact demand realizations.
For many standard demand families, this makes Bayesian updating analytically difficult.
To address this challenge, \citet{BraFre:MS1991} introduce the newsvendor family, which preserves conjugacy under censoring and enables tractable Bayesian inventory models in finite-horizon settings \citep[see, \eg][]{LarPor:MS1999,BisDadTok:MSOM2011,Mer:MSOM2015,BesChaMoaSS2022,ChuKim:OR2023,ZhaLiQinXuZhu:arXiv2026}.
Within this family, \citet{LarPor:MS1999} show that Weibull demand with a gamma prior admits a dimensionality-reduction result, which renders single-firm dynamic inventory problems tractable.
We build on this property and extend it from a single decision maker to a strategic two-tier supply chain.

Our paper builds on the ``selling to the newsvendor'' framework of \citet{LarPorMSOM2001}, who study wholesale-price  contracts between a manufacturer and a retailer. 
More broadly, supply chain contracting in this environment is reviewed by \citet{Cac:HORMS2003}. 

Several papers study multi-period wholesale pricing or repeated manufacturer--retailer interactions under a known demand distribution or exogenous information updates, and characterize subgame-perfect equilibria \citep[\eg][]{AnaAnuBas:MS2008,ErhKesTay:IIET2008,MarSim:POM2013,SheGraHuhNag:ORL2017}. 
Related work examines how exogenous forecast updates affect pricing and ordering decisions \citep[\eg][] {OzeUncWei:EJOR2007}. 
We refer to  \citet{SheChoMin:IJPR2019} for a broader review of supply chain contracting with information updating.
This stream of research assumes that demand information is known or evolves exogenously, rather than being learned from censored demand observations.
\citet{Han:PhD2016} studies a two-period supplier--retailer Stackelberg model with censored-demand learning under an ex-ante constant wholesale price, using a binary Bernoulli--Beta demand model. 
In contrast, we study a finite-horizon model in which wholesale prices adjust endogenously over time and we characterize the resulting Markov perfect equilibrium.

A related body of work studies dynamic contracting with information asymmetry about inventory and demand 
\citep[see, \eg][]{ZhaNagSos:OR2010,LobXia:OR2017,BenSetWan:OR2025}. 
\citet{KadOzeBen:MS2020} study a long-term vendor-managed inventory arrangement in which the manufacturer observes the retailer’s \ac{POS} and inventory data---which, as in our model, imply censored demand when stockouts occur---but the retailer also possesses additional private demand information; they propose a learn-and-screen mechanism that determines when the manufacturer should elicit this private information. 
Recent work uses online learning and bandit methods to design dynamic price contracts and study regret  \citep[see, \eg][]{CesCesOsoScaWas:IFAAMS2023,LiuRon:arXiv2024,ZhaZhuHas:MS2026}. 
These papers differ from ours because they focus either on private information or on algorithmic learning behavior, whereas our model is a dynamic game with public incomplete information in which both firms update the same posterior from the same censored demand history.

A conceptually related literature studies ratchet effects. 
Classical references include \citet{Weitzman1980,FreGuesTir:RES1985}, and \citet{LafTir:E1988}. 
In a supply-chain setting, \citet{MitShinYoon:JMR2022} show that a privately informed retailer may restrain current investment to limit future wholesale-price ratcheting.  
Our mechanism differs in that the unknown demand parameter is exogenous to both firms, the retailer's order quantity is publicly observed, and both firms update the same posterior from the same censored demand data. 
Therefore, the  resulting distortion is driven by symmetric public learning from censored demand rather than by private information or hidden actions.
Related ratchet phenomena  also arise under common uncertainty \citep{Cis:RES2018} or under symmetric learning  \citep{BhaskarRoketskiy:JET2023}. Unlike our model, however, those papers rely on hidden actions under imperfect monitoring: in \citet{Cis:RES2018}  an unobserved action affects a noisy public signal, whereas in \citet{BhaskarRoketskiy:JET2023}  effort is unobservable. 
In our model, by contrast, the retailer's order quantity is publicly observed and affects learning through demand censoring. 

Our model is a dynamic game with symmetric incomplete information and a common posterior belief that evolves publicly over time. 
More general extensive-form solution concepts for such environments include perfect Bayesian equilibrium and sequential equilibrium \citep{KreWil:E1982,FudTir:MITP1991}. 
Because in our setting the common posterior belief summarizes the payoff-relevant public history, we restrict attention to Markov strategies---that is, strategies that depend on the history only through the current common posterior belief---and we  study the corresponding \acp{MPE} \citep{MasTir:E1988-a,MasTir:JET2001}.

%
%
%
%

\section{Problem Formulation}
\label{se:problem}

We consider a dynamic wholesale-price contract between a manufacturer $\manufacturer$
and a retailer $\retailer$.
At time $0$, the two parties sign a finite-horizon  contract that specifies the following: 
In each period, the manufacturer first chooses a per-unit wholesale price. 
After observing this price, the retailer chooses an order quantity of a perishable good to meet random customer demand.
The manufacturer produces and delivers the quantity ordered by the retailer. 
The contract does not commit either party to future prices or quantities; rather, it defines the rules of interaction governing these decisions.
Inventory is not carried over across periods. 
If customer demand exceeds the order quantity in a given period,
unmet demand is not observed and not backlogged.  
The retailer shares sales data with the manufacturer, so both parties observe only realized
sales rather than the full demand realization.

More formally, we consider a finite horizon $\horizon$, and for  $\ttime\in\braces*{1,\dots,\horizon}$, the demands $\Demand_{\ttime}$ are conditionally \iid random variables given the unknown parameter $\Parameter$.
Following the Bayesian inventory literature \citep[\eg][]{LarPor:MS1999,BisDadTok:MSOM2011,Mer:MSOM2015,BesChaMoaSS2022,ChuKim:OR2023}, we assume a demand distribution of  the newsvendor family:
\begin{equation}
\label{eq:demand-distribution}	
\distrdemand(\yvar \mid \parameter) \coloneqq 
\Prob(\Demand_{\ttime} \le \yvar \mid\Parameter = \parameter) = 1 - e^{- \parameter \lnewsv(\yvar)}, \ \text{ for } \yvar \geq 0, \ \parameter > 0 , 
\end{equation}
where $\lnewsv \colon \reals_{+} \to \reals_{+}$ is a differentiable increasing function satisfying $\lim_{\yvar \searrow 0} \lnewsv(\yvar) = 0$ and $\lim_{\yvar \to \infty} \lnewsv(\yvar)$ $ = \infty$. 
An example of a newsvendor distribution is the Weibull distribution, where $\lnewsv(\yvar) = \yvar^{\weibull}$, with $\weibull > 0$. 
For $\weibull = 1$, the Weibull distribution reduces to the exponential distribution with mean $1/\parameter$.
The retailer sells at the exogenous retail price $\retailprice$, and the manufacturer incurs a per-unit production cost $\cost$, with $\retailprice>\cost>0$.

The interaction between the manufacturer and the  retailer is described by a dynamic game, which unfolds as follows: 
At time 0, the unknown parameter $\Parameter > 0$ is drawn from a prior distribution $\prior$.
In each period $\ttime = 1, \dots, \horizon$, the manufacturer sets a wholesale price $\wholesaleprice_{\ttime} \in \wprices \subset \reals_{++}$; 
the retailer observes this price and orders a quantity $\order_{\ttime} \in \orders=\reals_{+} $, which  
the manufacturer produces and delivers.
The retailer sells $ \Sale_{\ttime} = \min(\order_{\ttime}, \Demand_{\ttime}) $ units, with any unmet demand lost and unobserved. 
Both parties observe the realized sales $\sale_{\ttime}$;  hence, at  period $\ttime+1$, they have the same posterior belief about $\Parameter$, which is updated using Bayes' rule.    

We assume a conjugate prior density $\prior(\argdot)$ for the parameter $\Parameter$:
\begin{equation} 
\label{eq:conjugate}    \prior(\parameter \mid \ahyp, \bhyp) \coloneqq \frac{\bhyp^{\ahyp}\parameter^{\ahyp - 1}e^{- \bhyp\parameter}}{\Gamma(\ahyp)}\ \text{ for } \parameter > 0, \ \ahyp > 0,\  \bhyp > 0, 
\end{equation}
where $\Gamma(\argdot)$ is the Gamma function and $\ahyp$ and $\bhyp$ denote the \emph{shape} and \emph{rate} hyperparameters, respectively. 

The initial predictive  distribution function of the demand is
\begin{equation}
\label{eq:predictive-distr-func}    
\preddemand(\yvar \mid \ahyp, \bhyp) \coloneqq \Prob(\Demand_{\ttime} \le \yvar)
= 1 - \left(\frac{\bhyp}{\bhyp + \lnewsv(\yvar)} \right)^{\ahyp} \quad \text{ for } \yvar \geq 0,\ \ahyp > 0,\ \bhyp > 0, 
\end{equation}
and $\survdemand (\argdot\mid \ahyp, \bhyp) \coloneqq 1-\preddemand(\argdot\mid \ahyp,\bhyp)$ is the corresponding survival function. 
The posterior distribution of $\Parameter$ and the predictive distribution of $\Demand_{\ttime}$ have the same functional form as the prior, with different hyperparameters \citep[see][]{BraFre:MS1991}. 
If, at period $1$, the initial hyperparameters are $\ahyp_{1}$ and $\bhyp_{1}$, then, at period  $\ttime+1$, the hyperparameters become
\begin{align} 
\label{eq:a-t+1}
\ahyp_{\ttime + 1} 
&= \ahyp_{\ttime} + \ind_{ \{ { \sale_{\ttime}} < \order_{\ttime}\}}=\ahyp_{1}+\nuncens_{\ttime} \\
\label{eq:b-t+1}
\bhyp_{\ttime + 1} 
&= \bhyp_{\ttime} + \lnewsv(\sale_{\ttime})  = \bhyp_{1} + \sum_{j = 1}^{\ttime}{\lnewsv(\sale_{j})}  \quad \text{for } \ttime = 1, \ldots, \horizon-1, 
\end{align}
where $\ind_{\{\argdot \}}$ is the indicator function, $\nuncens_{\ttime}$ denotes the number of uncensored (exact-demand) observations up to period $\ttime$,  and $\sale_{\ttime}$ is the sales realization at time $\ttime$. 
The sequences $\ahyp_{\ttime}$ and $\bhyp_{\ttime}$ are weakly increasing, which implies that, for every $\ttime \ge 1$, we have $\ahyp_{\ttime} \in \Aset \coloneqq \braces*{\ahyp_{1} + \nuncens \colon \nuncens\in\naturals}$ and $\bhyp_{\ttime} \in \Bset \coloneqq [\bhyp_{1},\infty)$.

At the start of period $\ttime$, both players share a common posterior belief over $\Parameter$, summarized by the hyperparameters 
$(\ahyp_{\ttime},\bhyp_{\ttime})$. 
The manufacturer chooses $\wholesaleprice_{\ttime}$, then the retailer responds with the choice of $\order_{\ttime}$. 
The \emph{retailer's expected $\ttime$-period profit} is
\begin{equation}
\label{eq:expected-per-period-profit-retailer}
\payoffretailer(\wholesaleprice_{\ttime},\order_{\ttime} \mid \ahyp_{\ttime},\bhyp_{\ttime}) = \retailprice \, \Expect \bracks*{\min(\Demand_{\ttime}, \order_{\ttime})\mid \ahyp_{\ttime},\bhyp_{\ttime}} - \wholesaleprice_{\ttime} \order_{\ttime}.
\end{equation}
The \emph{manufacturer's $\ttime$-period profit} is
\begin{equation}
\label{eq:expected-per-period-profit-manufacturer}
\payoffmanufacturer(\wholesaleprice_{\ttime},\order_{\ttime}) = (\wholesaleprice_{\ttime}-\cost)\order_{\ttime}.
\end{equation}
We focus on Markov strategies because 
$(\ahyp_{\ttime},\bhyp_{\ttime})$ is a sufficient statistic for the conjugate posterior.  
A manufacturer's Markov strategy is a sequence of measurable functions $(\wholesaleprice_{\ttime})_{1 \le \ttime \le \horizon}$ with 
$\wholesaleprice_{\ttime} : \Aset \times \Bset \to \wprices$, whereas a retailer's 
Markov strategy is a sequence of measurable functions $(\order_{\ttime})_{1 \le \ttime \le \horizon}$ with 
$\order_{\ttime} : \Aset \times \Bset \times \wprices \to \orders$. 
The sequence of events is described in \Cref{fig:Sequ_single_p}.

\begin{figure} 
\centering
\begin{tikzpicture}[
    node distance=0.85cm,
    inner sep=2pt,
    outer sep=0pt,
    every node/.style={
      font=\sffamily\small,
      align=center
    },
    action/.style={
      draw,
      rectangle,
      rounded corners=3pt,
      minimum height=1cm,
      fill=gray!10
    },
    state/.style={
      draw,
      rectangle,
      minimum height=1cm,
      fill=white
    },
    arrow/.style={-stealth, thick},
    dashed_arrow/.style={-stealth, thick, dashed}
]

    \node (state) [state, text width=2cm]
      {State $(\ahyp_{\ttime}, \bhyp_{\ttime})$};

    \node (manufacturer) [action, right=of state, text width=2.58cm]
      {Manufacturer sets $\wholesaleprice_{\ttime}$};

    \node (retailer) [action, right=of manufacturer, text width=2.46cm]
      {Retailer sets $\order_{\ttime}$};

    \node (sales) [action, right=of retailer, text width=1.29cm]
      {Sales $\sale_{\ttime}$};

    \node (newstate) [state, right=of sales, text width=3.39cm]
      {New state $(\ahyp_{\ttime + 1}, \bhyp_{\ttime + 1})$};

    \node (augmented) [state, text width=4.7cm]
      at ($ (manufacturer)!0.5!(retailer) + (0,-1.9cm) $)
      {Augmented state $(\ahyp_{\ttime}, \bhyp_{\ttime}, \wholesaleprice_{\ttime})$};

    \draw [arrow] (state) -- (manufacturer);
    \draw [arrow] (retailer) -- (sales);
    \draw [arrow] (sales) -- (newstate);

    \draw [dashed_arrow] (manufacturer) -- (augmented);
    \draw [dashed_arrow] (augmented) -- (retailer);

\end{tikzpicture}
\vspace{-0.03cm}
  \caption{Sequential decisions and state transitions with an augmented state.}
  \label{fig:Sequ_single_p}
\end{figure}

We use the symbol $\contract$ to indicate the tuple $(\Parameter, \prior, (\distrdemand(\argdot\mid\parameter))_{\parameter}, \retailprice, \cost, \horizon, \wprices, \orders)$, and the symbol $\gameenv$ to indicate the corresponding game.

Following \citet{IskRusSch:RES2016}, we define an \ac{MPE} via dynamic programming, using the common posterior hyperparameters $(\ahyp,\bhyp)$ as the state.

Given order $\order$ and realized demand $\demand$, realized sales are $\sale=\min\{\demand,\order\}$, and the belief update is
\begin{equation}
\label{eq:Transi_funnction}
\transition(\ahyp,\bhyp,\order,\demand)
=\Bigl(\ahyp+\ind_{\{\sale<\order\}},\;\bhyp+\lnewsv(\sale)\Bigr).
\end{equation}

\begin{definition}
\label{de:belief}
Fix initial hyperparameters $(\ahyp_{1},\bhyp_{1})$ and  horizon $\horizon$.
A pair of Markov strategies $\parens*{\wholesaleprice_{\ttime}^{*},\order_{\ttime}^{*}}_{1 \le \ttime \le \horizon}$ is a \acdef{MPE} of  $\gameenv$ if there exist three sequences of value functions $\parens*{\Valuemanuf_{\ttime},\Valueretbreve_{\ttime}, \Valueret_{\ttime}}_{1 \le \ttime \le \horizon+1}$ such that, for every $\ahyp\in \Aset,\bhyp\in \Bset,\wholesaleprice \in \wprices$, the following conditions hold:
\begin{enumerate}[label={\rm (\roman*)}, ref=(\roman*)]
\item 
\label{it:cond-def-belief-i}
$\Valuemanuf_{\horizon+1}(\ahyp,\bhyp)=\Valueretbreve_{\horizon+1}(\ahyp,\bhyp,\wholesaleprice)=\Valueret_{\horizon+1}(\ahyp,\bhyp)=0$;

\item 
\label{it:cond-def-belief-ii}
for every $\ttime \in\braces*{1,\dots,\horizon}$, the manufacturer's problem is
\begin{align}
\label{eq:manufacturer-value-def} 
\begin{split}
\Valuemanuf_{\ttime}(\ahyp,\bhyp) 
&=
\max_{\wholesaleprice \in \wprices} \braces*{\payoffmanufacturer(\wholesaleprice,\order_{\ttime}^{*}(\ahyp,\bhyp,\wholesaleprice))  
+ \Expect \bracks*{ \Valuemanuf_{\ttime+1}\bigl(\transition(\ahyp, \bhyp,\order_{\ttime}^{*}(\ahyp,\bhyp,\wholesaleprice), \Demand)\bigr)\mid \ahyp,\bhyp}},  \\
\wholesaleprice_{\ttime}^{*}(\ahyp,\bhyp)
&\in
\argmax_{\wholesaleprice \in \wprices} \braces*{\payoffmanufacturer(\wholesaleprice,\order_{\ttime}^{*}(\ahyp,\bhyp,\wholesaleprice))  
+ \Expect \bracks*{ \Valuemanuf_{\ttime+1}\bigl(\transition(\ahyp, \bhyp,\order_{\ttime}^{*}(\ahyp,\bhyp,\wholesaleprice), \Demand)\bigr)\mid \ahyp,\bhyp}};
\end{split}\\
\intertext{and the retailer's problem is}
\label{eq:retailer-value-def}
\begin{split}
\Valueretbreve_{\ttime}(\ahyp,\bhyp,\wholesaleprice) 
&=
\max_{\order \in \orders} \braces*{\payoffretailer(\wholesaleprice, \order \mid \ahyp, \bhyp) 
+ \Expect \bracks*{ \Valueret_{\ttime+1}\bigl(\transition(\ahyp,\bhyp,\order,\Demand)\bigr)\mid \ahyp,\bhyp }},
\\
\order_{\ttime}^{*}(\ahyp,\bhyp,\wholesaleprice)
&\in
\argmax_{\order \in \orders}
\braces*{\payoffretailer(\wholesaleprice, \order \mid \ahyp, \bhyp) 
+ \Expect \bracks*{ \Valueret_{\ttime+1}\bigl(\transition(\ahyp,\bhyp,\order,\Demand)\bigr)\mid \ahyp,\bhyp }};
\end{split}\\
\label{eq:V-R-V-R-breve}
\Valueret_{\ttime}(\ahyp,\bhyp) 
&= \Valueretbreve_{\ttime}(\ahyp,\bhyp,\wholesaleprice_{\ttime}^{*}(\ahyp,\bhyp)).
\end{align}    
\end{enumerate}
\end{definition}

At any period $\ttime$, \cref{de:belief} involves three value functions because of the Stackelberg timing.
Because the retailer chooses an order after observing the current wholesale price, we have two value functions.
More precisely, the equilibrium is constructed by backward induction on the public belief state
$(\ahyp,\bhyp)$.  \cref{eq:manufacturer-value-def,eq:retailer-value-def}
imply that $\Valuemanuf_{\ttime}(\ahyp,\bhyp)$ (resp. $\Valueret_{\ttime}(\ahyp,\bhyp)$) is the manufacturer's (resp. retailer's) expected payoff at the start of period $\ttime$ for the strategy profile  $\parens*{\wholesaleprice_{i}^{*},\order_{i}^{*}}_{  \ttime \le i \le \horizon}$, conditionally on the belief state $(\ahyp,\bhyp)$,  whereas  $\Valueretbreve_{\ttime}(\ahyp,\bhyp,\wholesaleprice)$ is the retailer's expected payoff at period $\ttime$ after the manufacturer's  choice $\wholesaleprice$.
When $\horizon = 1$, the continuation terms are zero, and the definition reduces to the single-period Stackelberg problem studied by \citet{LarPorMSOM2001}. 
When $\horizon > 1$, the posterior hyperparameters $(\ahyp_{\ttime},\bhyp_{\ttime})$ link current decisions to future incentives through the updating equations \cref{eq:a-t+1,eq:b-t+1}. 

%
%
%
%

\section{Existence of Equilibrium and Dimensionality Reduction}
\label{se:Weibull}
Weibull demand is common in the literature. 
Building on earlier insights, \citet{LarPor:MS1999} show that a dimensionality-reduction technique can be used to analyze a single-agent newsvendor problem by representing beliefs through the shape hyperparameter~$\ahyp$ alone, thereby yielding a scalable dynamic program.
In this section, we extend this idea to our strategic setting. 
We construct a standardized newsvendor model using dimensionality reduction, and show that its solution concept can be used to obtain an \ac{MPE} of the original game. 
This approach offers two advantages:
\begin{enumerate}
[label={\rm (\roman*)}, ref=(\roman*)]
\item it simplifies the computation of the resulting \acp{MPE};
\item it enables us to establish an existence result for the equilibrium.
\end{enumerate}

To the best of our knowledge, this is the first application of dimensionality reduction to prove the existence of an equilibrium in a strategic newsvendor context. 
Moreover, solving \cref{eq:manufacturer-value-def,eq:retailer-value-def,eq:V-R-V-R-breve} via backward induction is intractable because the state space grows rapidly. 
The dimensionality reduction overcomes this challenge by leveraging the structural properties of the Weibull demand distribution.
From now on, we assume $\ahyp_{1} > \max(1,1/\weibull)$.
Let $\sorders =  \reals_{+}$.
If the conditional demand distribution $\distrdemand$ is Weibull and the prior $\prior$ is conjugate with parameter $\bhyp=1$, then the predictive demand distribution function in \cref{eq:predictive-distr-func} has the  \emph{standardized} form:
\begin{equation}
\label{eq:standardized-distribution}
\preddemand(\yvar \mid \ahyp,1) = 1 - \left( \frac{1}{1 + \yvar^{\weibull}} \right)^{\ahyp} \quad \text{for } \yvar \geq 0,\ \ahyp \in \Aset.
\end{equation}

For $\swprice\in\wprices$, $\sorder\in\sorders$, $\ahyp\in\Aset$, the  \emph{retailer’s standardized per-period expected profit} is
\begin{equation}
\label{eq:standardized-payoff-retailer}
\spayoffretailer(\swprice,\sorder\mid \ahyp)=(\ahyp - 1) \parens*{\retailprice \Expect\bracks*{\min(\Demand,\sorder) \mid \ahyp, 1}-\swprice\sorder}, 
\end{equation}
and the \emph{manufacturer’s standardized per-period profit} is
\begin{equation}
\label{eq:standardized-payoff-manufacturer}
\spayoffmanufacturer(\swprice,\sorder\mid \ahyp)=(\ahyp-1)\left(\swprice -\cost\right)\sorder.
\end{equation}

We now define the \acdef{SMPS} for  $\contract$.
A \emph{manufacturer's standardized Markov strategy}  is a sequence $\parens*{\swprice_{\ttime}}_{1 \le \ttime \le \horizon}$ with $\swprice_{\ttime} \colon \Aset\to\wprices$.
A \emph{retailer's standardized Markov strategy} is a sequence $\parens*{\sorder_{\ttime}}_{1 \le \ttime \le \horizon}$ with $\sorder_{\ttime} \colon \Aset\times\wprices\to\sorders$.  
Given $\sorder \in \sorders$, $\ahyp \in \Aset$, and $\ffunc \colon \Aset\to\reals$, we define the operator $\Expectgen$ as 
\begin{equation}
\Expectgen[\ffunc,\sorder \mid \ahyp] 
\coloneqq \frac{\ahyp-1}{\ahyp - 1/\weibull}  \preddemand(\sorder\mid \ahyp - 1/\weibull,1) \ffunc(\ahyp + 1)+ \survdemand (\sorder \mid \ahyp - 1/\weibull,1)\ffunc(\ahyp).
\end{equation}

\begin{definition}
\label{de:standardized}
Fix an initial hyperparameter $\ahyp_{1}$ and a horizon $\horizon$.
A pair of standardized Markov strategies $\parens*{\swprice_{\ttime}^{*},\sorder_{\ttime}^{*}}_{1 \le \ttime \le \horizon}$ is an  \ac{SMPS} of $\contract$ if there exist three   sequences of standardized value functions $\parens*{\svaluemanuf_{\ttime},\svalueretbreve_{\ttime},\svalueret_{\ttime}}_{1 \le \ttime \le \horizon+1}$ such that, for every $\ahyp \in \Aset,\swprice \in \wprices$, the following conditions hold: 
\begin{enumerate}[label={\rm (\roman*)}, ref=(\roman*)]
\item 
\label{it-de:standardized-1}
$\svaluemanuf_{\horizon + 1}(\ahyp) = \svalueretbreve_{\horizon + 1}(\ahyp,\swprice) = \svalueret_{\horizon + 1}(\ahyp) = 0$;
\item 
\label{it-de:standardized-2}
for every $\ttime\in\braces*{1,\dots,\horizon}$, the manufacturer's problem is
\begin{align}  
\label{eq:standardized-manufacturer}
\begin{split}
\svaluemanuf_{\ttime}(\ahyp)
&=
\max_{\swprice\in\wprices}
\braces*{\spayoffmanufacturer(\swprice,\sorder^{*}_{\ttime}(\ahyp,\swprice)\mid \ahyp)+  
\Expectgen\bracks*{ \svaluemanuf_{\ttime + 1},\sorder_{\ttime}^{*}(\ahyp,\swprice) \mid \ahyp}},\\   
\swprice_{\ttime}^{*}(\ahyp)
&\in
\argmax_{\swprice\in\wprices}
\braces*{\spayoffmanufacturer(\swprice,\sorder^{*}_{\ttime}(\ahyp,\swprice)\mid \ahyp) + \Expectgen\bracks*{ \svaluemanuf_{\ttime + 1},\sorder_{\ttime}^{*}(\ahyp,\swprice) \mid \ahyp}};
\end{split}
\intertext{and the retailer's problem is}
\label{eq:standardized-retailer}
\begin{split}
\svalueretbreve_{\ttime}(\ahyp,\swprice)
&= 
\max_{\sorder\in\sorders}
\braces*{\spayoffretailer(\swprice,\sorder \mid \ahyp) + \Expectgen\bracks*{\svalueret_{\ttime + 1}, \sorder \mid \ahyp}},\\
\sorder_{\ttime}^{*}(\ahyp,\swprice)
&\in 
\argmax_{\sorder\in\sorders}
\braces*{\spayoffretailer(\swprice,\sorder \mid \ahyp) + \Expectgen\bracks*{\svalueret_{\ttime + 1}, \sorder \mid \ahyp}};
\end{split}\\
\label{eq:v-v-breve}
    \svalueret_{\ttime}(\ahyp) 
    &= \svalueretbreve_{\ttime}(\ahyp,\swprice_{\ttime}^{*}(\ahyp)).
\end{align}
\end{enumerate}
\end{definition}

As in the single-agent problem \citep[see][]{LarPor:MS1999}, \cref{eq:standardized-manufacturer,eq:standardized-retailer,eq:v-v-breve} bear a  strong resemblance to dynamic programming, although in the standardized problem the ``transition probabilities'' do not generally sum to one; therefore, the operator $\Expectgen$ is not an expectation.

\begin{theorem}
\label{th:main}
Assume that $ \lnewsv(\yvar) = \yvar^{\weibull} $, $\wprices \coloneqq[\underline{\wholesaleprice},\overline{\wholesaleprice}]$, with $0 < \underline{\wholesaleprice} \le \cost < \retailprice \le \overline{\wholesaleprice}$.
Then:
\begin{enumerate}[label={\rm (\alph*)}, ref=(\alph*)]
\item 
\label{it:th:main-a}
There exists an \ac{SMPS} of  $\contract$. 

\item 
\label{it:th:main-b}
There exists an \ac{MPE} of $\gameenv$.

\item 
\label{it:th:main-c}
If $\parens*{\swprice_{\ttime}^{*},\sorder_{\ttime}^{*}}_{1 \le \ttime \le \horizon}$ is an \ac{SMPS} of $\contract$ with standardized value functions $\parens*{\svaluemanuf_{\ttime},\svalueretbreve_{\ttime},\svalueret_{\ttime}}_{1 \le \ttime \le \horizon+1}$,  
\begin{equation}
\label{eq:traduction}
\wholesaleprice_{\ttime}^{*}(\ahyp,\bhyp)
\coloneqq \swprice_{\ttime}^{*}(\ahyp), \quad \text{and}\quad \order_{\ttime}^{*}(\ahyp,\bhyp,\wholesaleprice)
\coloneqq \bhyp^{1/\weibull}\sorder_{\ttime}^{*}(\ahyp,\wholesaleprice),
\end{equation}
then $\parens{\wholesaleprice_{\ttime}^{*},\order_{\ttime}^{*}}_{1 \le \ttime \le \horizon}$ is an \ac{MPE} of $\gameenv$  with  value functions 
\begin{equation}
\label{eq:reduction-formulas-theorem_a}
\Valuemanuf_{\ttime}(\ahyp,\bhyp) =\frac{\bhyp^{1/\weibull}}{\ahyp-1}\svaluemanuf_{\ttime}(\ahyp),\ \Valueretbreve_{\ttime}(\ahyp,\bhyp,\wholesaleprice) =\frac{\bhyp^{1/\weibull}}{\ahyp-1}\svalueretbreve_{\ttime}(\ahyp,\wholesaleprice),  \text{ and } \Valueret_{\ttime}(\ahyp,\bhyp) =\frac{\bhyp^{1/\weibull}}{\ahyp-1}\svalueret_{\ttime}(\ahyp).
\end{equation}
\end{enumerate}
\end{theorem}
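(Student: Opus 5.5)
The plan is to prove \ref{it:th:main-c} first, since it is a purely algebraic verification, then \ref{it:th:main-a} by backward induction with compactness and measurable selection; \ref{it:th:main-b} then follows by combining the two.

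\textbf{Scaling identities for \ref{it:th:main-c}.} The key computations all come from the substitution $\yvar=\bhyp^{1/\weibull}\xvar$. Since $\survdemand(\yvar\mid\ahyp,\bhyp)=(1+\yvar^{\weibull}/\bhyp)^{-\ahyp}=\survdemand(\yvar\bhyp^{-1/\weibull}\mid\ahyp,1)$, we get
\[
\Expect[\min(\Demand,\bhyp^{1/\weibull}\sorder)\mid\ahyp,\bhyp]=\bhyp^{1/\weibull}\Expect[\min(\Demand,\sorder)\mid\ahyp,1].
\]
Consequently, for $\order=\bhyp^{1/\weibull}\sorder$ both current payoffs equal $\frac{\bhyp^{1/\weibull}}{\ahyp-1}$ times the standardized payoffs \eqref{eq:standardized-payoff-retailer}–\eqref{eq:standardized-payoff-manufacturer}.

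Next, suppose $\ffunc_{\ttime+1}(\ahyp,\bhyp)=\frac{\bhyp^{1/\weibull}}{\ahyp-1}\svalue(\ahyp)$. I will compute $\Expect[\ffunc_{\ttime+1}(\transition(\ahyp,\bhyp,\order,\Demand))\mid\ahyp,\bhyp]$ by splitting into the two events of the update \eqref{eq:Transi_funnction}.

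On the censored event $\{\Demand\ge\order\}$, which has probability $(\bhyp/(\bhyp+\order^{\weibull}))^{\ahyp}$, the new state is $(\ahyp,\bhyp+\order^{\weibull})$. The contribution is
\[
\tfrac{1}{\ahyp-1}\,\bhyp^{\ahyp}(\bhyp+\order^{\weibull})^{1/\weibull-\ahyp}\,\svalue(\ahyp)=\tfrac{\bhyp^{1/\weibull}}{\ahyp-1}\,\survdemand(\sorder\mid\ahyp-1/\weibull,1)\,\svalue(\ahyp).
\]

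On the uncensored event, with predictive density $\ahyp\bhyp^{\ahyp}\weibull\yvar^{\weibull-1}(\bhyp+\yvar^{\weibull})^{-\ahyp-1}$, the contribution is
\[
\frac{\svalue(\ahyp+1)}{\ahyp}\int_{0}^{\order}(\bhyp+\yvar^{\weibull})^{1/\weibull}\,\ahyp\bhyp^{\ahyp}\weibull\yvar^{\weibull-1}(\bhyp+\yvar^{\weibull})^{-\ahyp-1}\diff\yvar .
\]
Substituting $u=\yvar^{\weibull}$, this integral equals $\frac{\ahyp}{\ahyp-1/\weibull}\bhyp^{1/\weibull}\preddemand(\sorder\mid\ahyp-1/\weibull,1)$.

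Together the two contributions give exactly $\frac{\bhyp^{1/\weibull}}{\ahyp-1}\Expectgen[\svalue,\sorder\mid\ahyp]$, with the prefactor $\frac{\ahyp-1}{\ahyp-1/\weibull}$ matching. The assumption $\ahyp_1>\max(1,1/\weibull)$ makes all these quantities positive and finite.

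\textbf{Verifying the equilibrium conditions.} With these identities I check \cref{de:belief} by backward induction on $\ttime$, defining the value functions by \eqref{eq:reduction-formulas-theorem_a}. The terminal condition is trivial. At stage $\ttime$, the retailer's objective at $(\ahyp,\bhyp,\wholesaleprice)$ with $\order=\bhyp^{1/\weibull}\sorder$ equals $\frac{\bhyp^{1/\weibull}}{\ahyp-1}$ times the standardized objective in \eqref{eq:standardized-retailer}. Since $\sorder\mapsto\bhyp^{1/\weibull}\sorder$ is a bijection of $\reals_{+}$ and the factor is positive, the maximizers correspond and the values scale. Because $\order^{*}_{\ttime}$ is the scaled $\sorder^{*}_{\ttime}$, the manufacturer's objective is likewise the scaled standardized objective in \eqref{eq:standardized-manufacturer}, so $\swprice^{*}_{\ttime}(\ahyp)$ is a maximizer and \eqref{eq:V-R-V-R-breve} follows from \eqref{eq:v-v-breve}. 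Measurability of the strategies in $\bhyp$ is immediate.

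\textbf{Existence, \ref{it:th:main-a}.} I construct an \ac{SMPS} by backward induction. At each $\ttime$, given bounded $\svalueret_{\ttime+1}$ and $\svaluemanuf_{\ttime+1}$ on $\Aset$, the retailer's objective is continuous in $(\swprice,\sorder)$.
\begin{itemize}
\item Orders can be restricted to a compact set $[0,\bar q(\ahyp)]$: the term $-\swprice\sorder(\ahyp-1)$ with $\swprice\ge\underline{\wholesaleprice}>0$ eventually dominates, since $\Expect\min(\Demand,\sorder)$ and the $\Expectgen$ term are bounded in $\sorder$.
\item Berge's maximum theorem then gives a continuous $\svalueretbreve_{\ttime}(\ahyp,\cdot)$ and an upper hemicontinuous, compact-valued argmax.
\end{itemize}

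\textbf{Main obstacle.} The hard step is the manufacturer's problem, because $\swprice\mapsto$ objective$(\sorder^{*}_{\ttime}(\ahyp,\swprice))$ need not be continuous when the retailer's argmax is multivalued. I would handle this with the standard Stackelberg/Simon–Zame-type device. Consider the correspondence of achievable manufacturer payoffs $\swprice\mapsto\{\spayoffmanufacturer+\Expectgen[\svaluemanuf_{\ttime+1},\sorder]:\sorder\in\text{argmax}\}$; it is upper hemicontinuous and compact-valued. Take the selection that, for each $\swprice$, picks the retailer's best response maximizing the manufacturer's objective (a measurable selection exists by the measurable maximum theorem). The resulting function is upper semicontinuous in $\swprice$, so it attains its max on the compact $\wprices$.

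This choice of tie-breaking is consistent with retailer optimality, and $\Aset$ is countable, so no measurability issue arises in $\ahyp$. The values are bounded because $\svalue\ge0$ (zero orders are feasible and $\underline{\wholesaleprice}\le\cost$) and payoffs are bounded on $\Aset$ at each step. Statement \ref{it:th:main-b} then follows from \ref{it:th:main-a} and \ref{it:th:main-c}.
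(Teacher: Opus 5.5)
Your proposal is correct and follows essentially the same route as the paper: for part (c), the change of variable $\yvar=\bhyp^{1/\weibull}\xvar$ shows that each objective in \cref{de:belief} equals $\frac{\bhyp^{1/\weibull}}{\ahyp-1}$ times its standardized counterpart; for part (a), a backward induction restricts orders to a compact set, applies Berge's theorem to the retailer, uses leader-favorable tie-breaking to obtain an upper semicontinuous manufacturer objective (\cref{le:up-up-up}), and takes a measurable selection (\cref{thm:measurable-maximum}). The only slip is your claim that the standardized values are bounded on $\Aset$, which is unneeded because the induction step at a fixed $\ahyp$ uses only the finite numbers $\svalue_{\ttime+1}(\ahyp)$ and $\svalue_{\ttime+1}(\ahyp+1)$, and which is false for $\weibull>1$ (e.g., $\svalueret_{\horizon}(\ahyp)$ grows like $\ahyp^{1-1/\weibull}$).
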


The \ac{MPE} in \cref{th:main}\ref{it:th:main-c} is obtained by solving the standardized dynamic programming equations 
\cref{eq:standardized-manufacturer,eq:standardized-retailer,eq:v-v-breve}.
These equations depend only on $\ahyp$, which evolves in unit increments. 
The continuous rate hyperparameter $\bhyp$ enters this \ac{MPE} and the value functions only through the scaling relations in \cref{eq:traduction,eq:reduction-formulas-theorem_a}.

An important implication of \cref{th:main}\ref{it:th:main-c} is that, in the \ac{MPE} constructed from an \ac{SMPS}, the manufacturer’s equilibrium wholesale price is invariant to the rate hyperparameter.
While $\bhyp_{\ttime}$ is still updated from sales magnitudes via \eqref{eq:b-t+1}, the manufacturer need not condition its price on $\bhyp_{\ttime}$.
The manufacturer’s strategy in this \ac{MPE} depends only on $\ttime$ and $\ahyp_\ttime$, which can be computed from $\ahyp_{1}$ and the number of uncensored demand observations that have occurred by period $\ttime$.
Moreover, for fixed $\ahyp$, the demand distribution is stochastically increasing in  $\bhyp$ (see \cref{eq:predictive-distr-func}), so that both the mean demand and the retailer’s equilibrium order quantity increase with $\bhyp$ (scaling with $\bhyp^{1/\weibull}$). This monotonicity also extends to the value functions. 
These results reveal an asymmetry: the manufacturer’s pricing policy is insensitive to demand scale, whereas both the retailer’s and the manufacturer’s profits increase with it through larger order quantities and a higher mean demand.

\begin{remark}
\label{re:tie-breaking}  
The difficulty in proving existence of the equilibrium is due to the fact that the game is non-zero-sum, and both the belief space and the wholesale price set are parameterized by continuous variables. 
To overcome these issues, the problem is reformulated in terms of a \acl{SMPS}; this removes the complications associated with the scale parameter. 
The existence of such an \ac{SMPS} is then established by backward induction. 
In each period $\ttime$, the retailer’s best-response correspondence may admit multiple maximizers.
Unlike the case of a  single decision maker, the maximizer selection is relevant because it affects the manufacturer's continuation value. 
To ensure well-defined continuation values for the manufacturer, a leader-favorable tie-breaking rule is imposed, under which the retailer selects the order quantity that maximizes the manufacturer’s payoff. 
This guarantees that the manufacturer’s objective attains a maximum over the feasible price set. 
This version of Stackelberg equilibrium is standard in the literature \citep[see, \eg][]{Lei:JOTA1978,BreAljHau:JOTA1988,vonZam:GEB2010,KamGen:AER2011}.
\end{remark}

%
%
%
%

\section{A Characterization of \acp{MPE}}
\label{se:charact}

In this section, we focus on the exponential case, $\lnewsv(\yvar) = \yvar$. 
We prove existence and uniqueness of the \ac{SMPS} and characterize it.
The proof of \cref{th:main}\ref{it:th:main-c}  shows that an \ac{SMPS} yields a corresponding \ac{MPE}. The theorem below further shows that this \ac{MPE} is unique.
We define
$\sordertwo_{\ttime}^{*}(\ahyp) \coloneqq \sorder_{\ttime}^{*}(\ahyp,\swprice_{\ttime}^{*}(\ahyp)),\ \text{ for all }\ttime \le \horizon, \text{ and }\ahyp\in \Aset.$

\begin{theorem}
\label{th:exponential}
Assume that $ \lnewsv(\yvar) = \yvar$, $\wprices=\reals_{++}$, and $\ahyp_1>1$. 
Then
\begin{enumerate}[label={\rm (\alph*)}, ref=(\alph*)]

\item 
\label{it:th:exponential-a}
There exists a unique \ac{SMPS} $(\swprice^{*}_{\ttime}(\ahyp), \sorder^{*}_{\ttime}(\ahyp,\swprice))$ of $\contract$, which, for all $\ttime \leq \horizon$, and all $\ahyp\in \Aset$, satisfies the following equations:
\begin{align}
\swprice_{\ttime}^{*}(\ahyp) 
&= \left(\retailprice + \svalueret_{\ttime + 1}(\ahyp+1) - \svalueret_{\ttime + 1}(\ahyp)\right) \nonumber\\ 
\label{eq:optimal_wholes_exp_c>0}
&\quad \cdot  \left[ 1 - \frac{1}{\ahyp} +\frac{\cost}{\ahyp \swprice_{\ttime}^{*}(\ahyp)}
   - \frac{\svaluemanuf_{\ttime + 1}(\ahyp + 1) - \svaluemanuf_{\ttime + 1}(\ahyp)}
   {\ahyp \left(\retailprice + \svalueret_{\ttime + 1}(\ahyp+1) - \svalueret_{\ttime + 1}(\ahyp)\right)} \right]^{\ahyp} > 0,  \\
\label{eq:z-t^{*}(a)_c>0}
\sorder^{*}_{\ttime}(\ahyp,\swprice) 
&= \max \parens*{\parens*{\frac{\retailprice + \svalueret_{\ttime + 1}(\ahyp+1) - \svalueret_{\ttime + 1}(\ahyp)}{\swprice}}^{1/\ahyp} - 1,0}, \\
\label{eq:optimal-standa-order}
\sordertwo^{*}_{\ttime}(\ahyp)
&=\parens*{\frac{\retailprice + \svalueret_{\ttime + 1}(\ahyp+1) - \svalueret_{\ttime + 1}(\ahyp)}{\swprice^{*}_{\ttime}(\ahyp)}}^{1/\ahyp} - 1 > 0.
\end{align}
The standardized value functions, $\svalueret_{\ttime}(\ahyp)$ and $ \svaluemanuf_{\ttime}(\ahyp) $ are given by  
\begin{align}
\label{eq:fs-t_exp}  \svaluemanuf_{\ttime}(\ahyp) 
&= \sum_{i = 0}^{\horizon - \ttime} \bracks*{\swprice^{*}_{\ttime + i}(\ahyp + i) - \cost\parens*{1+(\ahyp + i)\sordertwo^{*}_{\ttime + i}(\ahyp + i)}},\\
\label{eq:fr-t_exp}
\svalueret_{\ttime}(\ahyp) 
&= \sum_{i = 0}^{\horizon - \ttime} \bracks*{\retailprice - \swprice^{*}_{\ttime + i}(\ahyp + i)\parens*{1+(\ahyp + i)\sordertwo^{*}_{\ttime + i}(\ahyp + i)}},  
\end{align}
with terminal conditions $\svalueret_{\horizon+1}(\ahyp)=\svaluemanuf_{\horizon+1}(\ahyp)=0$.
\item 
\label{it:th:exponential-b}
There exists a unique \ac{MPE} $\parens*{\wholesaleprice^{*}_{\ttime}(\ahyp,\bhyp), \order^{*}_{\ttime}(\ahyp,\bhyp,\wholesaleprice)}$ of $\gameenv$, given by
\begin{equation}
\label{eq:MPE-SMPS-exponential}
\wholesaleprice_{\ttime}^{*}(\ahyp,\bhyp)
\coloneqq \swprice_{\ttime}^{*}(\ahyp) \quad \text{and}\quad \order_{\ttime}^{*}(\ahyp,\bhyp,\wholesaleprice)
\coloneqq \bhyp\sorder_{\ttime}^{*}(\ahyp,\wholesaleprice).
\end{equation}
\end{enumerate}
\end{theorem}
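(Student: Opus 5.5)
The plan is to specialize the standardized recursion of \cref{de:standardized} to $\weibull=1$, write everything in closed form, and run a backward induction on $\ttime$ that establishes existence, uniqueness and the displayed formulas together.

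\textbf{Closed forms.} With $\lnewsv(\yvar)=\yvar$ we have $\survdemand(\sorder\mid\ahyp,1)=(1+\sorder)^{-\ahyp}$ and
\begin{equation*}
\Expect[\min(\Demand,\sorder)\mid\ahyp,1]=\frac{1-(1+\sorder)^{1-\ahyp}}{\ahyp-1}.
\end{equation*}
For $\weibull=1$ the operator simplifies to
\begin{equation*}
\Expectgen[\ffunc,\sorder\mid\ahyp]=\ffunc(\ahyp+1)-(1+\sorder)^{1-\ahyp}\bigl(\ffunc(\ahyp+1)-\ffunc(\ahyp)\bigr).
\end{equation*}
Write $u=1+\sorder$, $\Delta^{\retailer}=\svalueret_{\ttime+1}(\ahyp+1)-\svalueret_{\ttime+1}(\ahyp)$, $\Delta^{\manufacturer}$ analogously, and $K=\retailprice+\Delta^{\retailer}$. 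The retailer's standardized objective becomes
\begin{equation*}
\retailprice+\svalueret_{\ttime+1}(\ahyp+1)-K u^{1-\ahyp}-(\ahyp-1)\swprice(u-1).
\end{equation*}

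\textbf{Retailer stage.} If $K>0$, this objective is strictly concave in $u$. Its first-order condition $K u^{-\ahyp}=\swprice$, truncated at $u=1$, gives \eqref{eq:z-t^{*}(a)_c>0} as the unique maximizer, so no tie-breaking is needed. Substituting $K u^{1-\ahyp}=\swprice u$ shows the optimal value equals $\retailprice-\swprice(1+\ahyp\sorder)+\svalueret_{\ttime+1}(\ahyp+1)$. Iterating this one-step relation produces \eqref{eq:fr-t_exp}.

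\textbf{Manufacturer stage.} Since $\swprice\mapsto\sorder^{*}$ is a decreasing bijection on the region where $\sorder^{*}>0$, I reparametrize the manufacturer's problem by $u>1$, with $\swprice=K u^{-\ahyp}$. Prices giving $\sorder^{*}=0$ yield $\svaluemanuf_{\ttime+1}(\ahyp)$, which is the limit as $u\downarrow1$, so they are dominated once an interior improvement exists. The objective is
\begin{equation*}
h(u)=(\ahyp-1)\bigl(K u^{-\ahyp}-\cost\bigr)(u-1)+\svaluemanuf_{\ttime+1}(\ahyp+1)-\Delta^{\manufacturer}u^{1-\ahyp}.
\end{equation*}
A direct computation gives $u^{\ahyp+1}h'(u)=(\ahyp-1)g(u)$, where
\begin{equation*}
g(u)=\ahyp K-u\bigl((\ahyp-1)K-\Delta^{\manufacturer}\bigr)-\cost u^{\ahyp+1}.
\end{equation*}
The function $g$ is strictly concave, tends to $-\infty$, and satisfies $g(1)=K+\Delta^{\manufacturer}-\cost$. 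Hence if $g(1)>0$ there is exactly one root $u^{*}>1$, and $h$ has a unique maximizer there. Dividing $g(u^{*})=0$ by $\ahyp K u^{*}$ and using $u^{*\ahyp}=K/\swprice^{*}$ yields exactly \eqref{eq:optimal_wholes_exp_c>0}, together with $\sordertwo^{*}>0$ in \eqref{eq:optimal-standa-order}. Plugging the first-order condition back into $h$ gives $\svaluemanuf_{\ttime}(\ahyp)=\swprice^{*}-\cost(1+\ahyp\sordertwo^{*})+\svaluemanuf_{\ttime+1}(\ahyp+1)$, and iterating gives \eqref{eq:fs-t_exp}.

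\textbf{Main obstacle: the invariants.} Everything above rests on two conditions holding at every $(\ttime,\ahyp)$:
\begin{equation*}
K=\retailprice+\Delta^{\retailer}>0,\qquad K+\Delta^{\manufacturer}>\cost.
\end{equation*}
Both hold trivially at $\ttime=\horizon$, where the continuation values vanish. I would propagate them backward using the closed-form sums. Concretely, I would write $\svalueret_{\ttime}(\ahyp+1)-\svalueret_{\ttime}(\ahyp)$ as a sum of differences of stage terms at shape $\ahyp+i+1$ versus $\ahyp+i$. I would then bound these differences through the comparative statics of $u^{*}$ and $\swprice^{*}$ in $\ahyp$, obtained by implicit differentiation of $g=0$.

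If this direct approach is too messy, the fallback is an economic argument for the second inequality. The quantity $K+\Delta^{\manufacturer}-\cost$ is the marginal channel value at $\sorder=0$. In the original game, by the scaling in \cref{th:main}\ref{it:th:main-c}, each player can guarantee a continuation value at least that of the no-order path. This should give $\retailprice+\Delta^{\retailer}+\Delta^{\manufacturer}\ge\retailprice>\cost$ via a feasibility or monotonicity comparison of value functions in $\ahyp$.

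\textbf{Part (b).} Existence follows from \cref{th:main}\ref{it:th:main-c}, whose proof does not use the compactness of $\wprices$ once maximizers exist. For uniqueness, I would take an arbitrary MPE and show by backward induction on $\ttime$ that its value functions must have the scaling form \eqref{eq:reduction-formulas-theorem_a}. Given that form at $\ttime+1$, the period-$\ttime$ problems at $(\ahyp,\bhyp)$ are, after the change of variables $\order=\bhyp\sorder$, positive multiples of the standardized problems. Those have unique maximizers by the argument above, so the strategies coincide with \eqref{eq:MPE-SMPS-exponential}.
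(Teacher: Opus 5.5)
Your stage computations are correct. Reparametrizing the manufacturer's problem by $u=1+\sorder$, so that uniqueness reduces to strict concavity of $g$, is tidier than the paper's two-case sign argument on $\partial\uvaluemanuf_{\ttime}/\partial\swprice$. Your part (b) is the paper's \cref{pr:unique-SMPS} argument, and you are right that the scaling in \cref{th:main}\ref{it:th:main-c} does not need a compact $\wprices$.

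\textbf{The gap.} It is the one you flag yourself: the invariants $K>0$ and $K+\Delta^{\manufacturer}>\cost$ are never proved, and they are indispensable. If $K\le 0$, the retailer never orders. If $K>0\ge g(1)$, then $g<0$ on $(1,\infty)$ and $h$ is decreasing. In both cases a continuum of prices inducing a zero order maximizes the manufacturer's objective, so uniqueness and $\sordertwo^{*}_{\ttime}(\ahyp)>0$ fail. Your primary plan, comparing stage terms at shapes $\ahyp+i+1$ and $\ahyp+i$ via comparative statics, is not carried out and would have to control the whole recursion. Your fallback targets $\Delta^{\retailer}+\Delta^{\manufacturer}\ge 0$, i.e.\ monotonicity of the joint standardized value in $\ahyp$. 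That is stronger than needed and you give no proof of it. Moreover, the no-order guarantee cannot deliver it: ordering nothing keeps the state fixed and only advances time, so it compares values across periods, not across states.

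\textbf{The missing idea.} Use exactly that time comparison at a fixed state, together with the one-step recursions you already derived. At period $\ttime$ and state $\ahyp+1$, the retailer can order $\sorder=0$. The manufacturer can post any $\swprice\ge\retailprice+\svalueret_{\ttime+1}(\ahyp+2)-\svalueret_{\ttime+1}(\ahyp+1)$, which is feasible because $\wprices=\reals_{++}$ and induces a zero order. Since $\Expectgen[\ffunc,0\mid\ahyp+1]=\ffunc(\ahyp+1)$, this gives $\svalueret_{\ttime}(\ahyp+1)\ge\svalueret_{\ttime+1}(\ahyp+1)$ and $\svaluemanuf_{\ttime}(\ahyp+1)\ge\svaluemanuf_{\ttime+1}(\ahyp+1)$. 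Now subtract $\svalueret_{\ttime}(\ahyp)=\retailprice-\swprice^{*}_{\ttime}(\ahyp)(1+\ahyp\sordertwo^{*}_{\ttime}(\ahyp))+\svalueret_{\ttime+1}(\ahyp+1)$ and its manufacturer analogue. This yields
\begin{align*}
\retailprice+\svalueret_{\ttime}(\ahyp+1)-\svalueret_{\ttime}(\ahyp)
&=\swprice^{*}_{\ttime}(\ahyp)\bigl(1+\ahyp\sordertwo^{*}_{\ttime}(\ahyp)\bigr)+\bigl[\svalueret_{\ttime}(\ahyp+1)-\svalueret_{\ttime+1}(\ahyp+1)\bigr]>0,\\
\retailprice-\cost+\svalueret_{\ttime}(\ahyp+1)+\svaluemanuf_{\ttime}(\ahyp+1)-\svalueret_{\ttime}(\ahyp)-\svaluemanuf_{\ttime}(\ahyp)
&=\ahyp\sordertwo^{*}_{\ttime}(\ahyp)\bigl(\swprice^{*}_{\ttime}(\ahyp)+\cost\bigr)+\bigl[\svalueret_{\ttime}(\ahyp+1)-\svalueret_{\ttime+1}(\ahyp+1)\bigr]\\
&\quad+\bigl[\svaluemanuf_{\ttime}(\ahyp+1)-\svaluemanuf_{\ttime+1}(\ahyp+1)\bigr]>0.
\end{align*}
These are the invariants at period $\ttime$, so the backward induction (over $\ttime$, for all $\ahyp$ simultaneously) closes. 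The paper uses the same two time-monotonicity inequalities. It reaches the second invariant more laboriously, via the inductive hypothesis at $\ttime+1$ and the auxiliary function $\auxf(\sorder)<0$; with your closed-form recursions it is immediate.
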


In the exponential case, \cref{th:exponential} provides a recursive characterization of the equilibrium. 
Notice that \cref{eq:optimal_wholes_exp_c>0} characterizes $\swprice_{\ttime}^{*}(\ahyp)$ as the unique fixed-point in $\parens{0,\retailprice + \svalueret_{\ttime + 1}(\ahyp+1) - \svalueret_{\ttime + 1}(\ahyp)}$.
Moreover, \cref{eq:z-t^{*}(a)_c>0} shows that the retailer's maximizer is unique; therefore no selection issues arise. 
Although \cref{eq:fs-t_exp} does not have the same form as the one-period manufacturer profit, it follows from substituting the manufacturer’s first-order condition into the continuation-value recursion.
A similar consideration holds for \cref{eq:fr-t_exp}.

Therefore, the standardized prices, order quantities, and standardized value functions can  be obtained by backward induction over the reachable values of $\ahyp$. 

\cref{th:exponential} also identifies the economic forces that shape equilibrium behavior. 
By \cref{eq:z-t^{*}(a)_c>0,eq:optimal-standa-order}, the retailer's ordering decision depends on the term $\retailprice+\svalueret_{\ttime+1}(\ahyp+1)-\svalueret_{\ttime+1}(\ahyp)$. 
The first component, $\retailprice$, captures the current return from selling an additional unit. The second component, $\svalueret_{\ttime+1}(\ahyp+1)-\svalueret_{\ttime+1}(\ahyp)$, captures the retailer's standardized value function of moving from $\ahyp$ to $\ahyp+1$.
Thus, the retailer's ordering decision reflects both current sales incentives and the value of improved future information.

In the single-agent setting, \citet[theorem~3(c)]{LarPor:MS1999} show that the standardized return is strictly increasing in $\ahyp$, which implies that forward-looking ordering always exceeds the myopic benchmark. 
In our strategic setting, this monotonicity need not hold for the retailer, because improved public information also affects the manufacturer's future pricing. 
As the numerical analysis in \cref{sec:Numerics} shows, the retailer's continuation-value difference can be negative, so the classical ``stock more'' result---ordering above the myopic benchmark---is not necessarily true.

The manufacturer's pricing decision reflects a related but distinct force. 
By lowering the wholesale price, the manufacturer induces a larger retailer order, which makes censoring less likely and future public beliefs less dispersed. 
Therefore, the manufacturer faces an intertemporal trade-off: raising the wholesale price increases the per-unit margin, but may reduce the retailer's order quantity, whereas lowering the wholesale price sacrifices current margin to induce larger orders, improve future information, and thereby affect future pricing opportunities. This trade-off is reflected in \cref{eq:optimal_wholes_exp_c>0}, where the equilibrium wholesale price depends on the continuation-value difference $\svaluemanuf_{\ttime+1}(\ahyp+1)-\svaluemanuf_{\ttime+1}(\ahyp)$, which captures the manufacturer's standardized gain from a more concentrated public belief.

\cref{eq:optimal-standa-order} implies that the equilibrium order quantity is always strictly positive; hence, trade remains active in every  period from $1$ to $\horizon$.
Interestingly, in the exponential case, the equilibrium wholesale price $\swprice_{\ttime}^*$ can be smaller than the production cost $\cost$ at some time $\ttime$. 
For example, when $\horizon=100$, $\ahyp_1=1.03$, $\retailprice=1$, and $\cost=0.2$, the initial equilibrium wholesale price is $\swprice_1^*(\ahyp_1)=0.1861<\cost$. 
At the initial state, the manufacturer accepts a temporary per-unit loss to induce a larger order and reduce future censoring:
This can be viewed as short-term wholesale support aimed at increasing early orders and improving demand information.
The short-run loss is offset by future profit gains generated by faster learning.
This cannot occur in the single-period model, in which there is no future learning motive.

%
%
%

\section{Numerical Results}
\label{sec:Numerics}

In this section, we compare two strategic settings.
The first is the forward-looking equilibrium described in \cref{se:charact}.
The second is a myopic benchmark in which agents update beliefs over time but, in each period, play the one-shot Stackelberg game and ignore continuation values. 
Forward-looking agents use current decisions to affect the learning process; myopic agents do not.

We focus on  the exponential-demand case ($\lnewsv(\yvar)=\yvar$), and we explore the economic implications of the unique equilibrium characterized in \cref{th:exponential}. 
Our numerical analysis focuses on three questions: 
First, how does the manufacturer use wholesale pricing to affect the informativeness of future demand observations?
Second, does better public information benefit both parties, or does the strategic interaction create asymmetries in the value of learning? 
Third, how does the retailer's ordering behavior respond to the interplay between the manufacturer's pricing and the evolving public belief?

%
%
%

\subsection{Computation and Myopic Benchmark}
\label{subsec:computation-myopic}

By \cref{th:exponential}, the unique equilibrium is computed by backward induction over the reachable values of the shape hyperparameter, whose initial value is $\ahyp_{1}$. 
For every $\ttime \le \horizon$, the set of reachable states for $
\ahyp_{\ttime}$ is $\braces*{\ahyp_{1},\ahyp_{1}+1,\dots,\ahyp_{1}+\ttime-1}$.
At each period and reachable state, we compute the equilibrium wholesale price, the retailer's best response, and the associated standardized value functions; then we iterate backward to period~$1$.

In the myopic benchmark, at each period and current public belief state, both players maximize  their current-period expected profits. 
Realized sales are still used to update beliefs, but neither player internalizes how current decisions affect future information or continuation payoffs. 
Because there is no inventory carryover or other intertemporal state variables beyond the public belief, comparing the equilibrium with this myopic benchmark isolates the effect of forward-looking learning.

Define the myopic optimal standardized order quantity as
\begin{equation}
\label{eq:myopic-standardized-order}
\sorder^{\circ}(\ahyp,\swprice)
\in
\argmax_{\sorder \in \sorders}\spayoffretailer(\swprice,\sorder \mid \ahyp),
\end{equation}
where $\spayoffretailer$ is defined as in  \cref{eq:standardized-payoff-retailer}. 
In the exponential case, this becomes
\begin{equation}
\label{eq:myopic-standardized-order-closed}
\sorder^{\circ}(\ahyp,\swprice)
=
\max\left\{\left(\frac{\retailprice}{\swprice}\right)^{1/\ahyp}-1,\,0\right\}.
\end{equation}
The corresponding myopic optimal order in the original problem is
\begin{equation}
\label{eq:myopic-order-closed}
\order^{\circ}(\ahyp,\bhyp,\wholesaleprice)
\coloneqq
\bhyp\,\sorder^{\circ}(\ahyp,\wholesaleprice),
\end{equation}
and the manufacturer's myopic optimal wholesale price is
\begin{equation}
\label{eq:myopic-wholesale}
\swprice^{\circ}(\ahyp)
\in
\argmax_{\swprice\in\wprices}
\spayoffmanufacturer\bigl(\swprice,\sorder^{\circ}(\ahyp,\swprice)\mid \ahyp\bigr),
\qquad
\wholesaleprice^{\circ}(\ahyp,\bhyp)\coloneqq \swprice^{\circ}(\ahyp),
\end{equation}
where $\spayoffmanufacturer$ is defined as in \cref{eq:standardized-payoff-manufacturer}.

%
%
%

\subsection{Learning Incentives in Wholesale Pricing}
\label{subsec:learning-driven-pricing}

We first compare the equilibrium wholesale price and its myopic benchmark. \Cref{fig:whole_path_T=5,fig:whole_path_T=10} show that, in the  nonterminal states, the wholesale price chosen by a forward-looking manufacturer is lower than the one in the myopic benchmark.  
By \cref{eq:MPE-SMPS-exponential}, the equilibrium wholesale price satisfies $\wholesaleprice_{\ttime}^{*}(\ahyp,\bhyp)=\swprice_{\ttime}^{*}(\ahyp)$, \ie it depends on the period $\ttime$ and the shape hyperparameter $\ahyp_{\ttime}$, but not on the rate hyperparameter $\bhyp_{\ttime}$. 

Under the myopic benchmark, the retailer's best response is given by \cref{eq:myopic-standardized-order-closed}. 
Given a wholesale price $\swprice \in (0,\retailprice)$, the myopic order $\sorder^{\circ}(\ahyp,\swprice)$ is strictly positive, and its sensitivity to the wholesale price becomes weaker as $\ahyp$ increases. 
Applying the comparative static result in \citet[theorem~3]{LarPorMSOM2001} to the corresponding single-period problem, we see that the manufacturer's myopic wholesale price is increasing in $\ahyp$. 
The hyperparameter  $\ahyp_{\ttime}$ increases only after uncensored observations; as a consequence, the myopic wholesale-price function  increases only after uncensored demand observations.

\begin{figure} 
\centering
\includegraphics[width=1\linewidth]{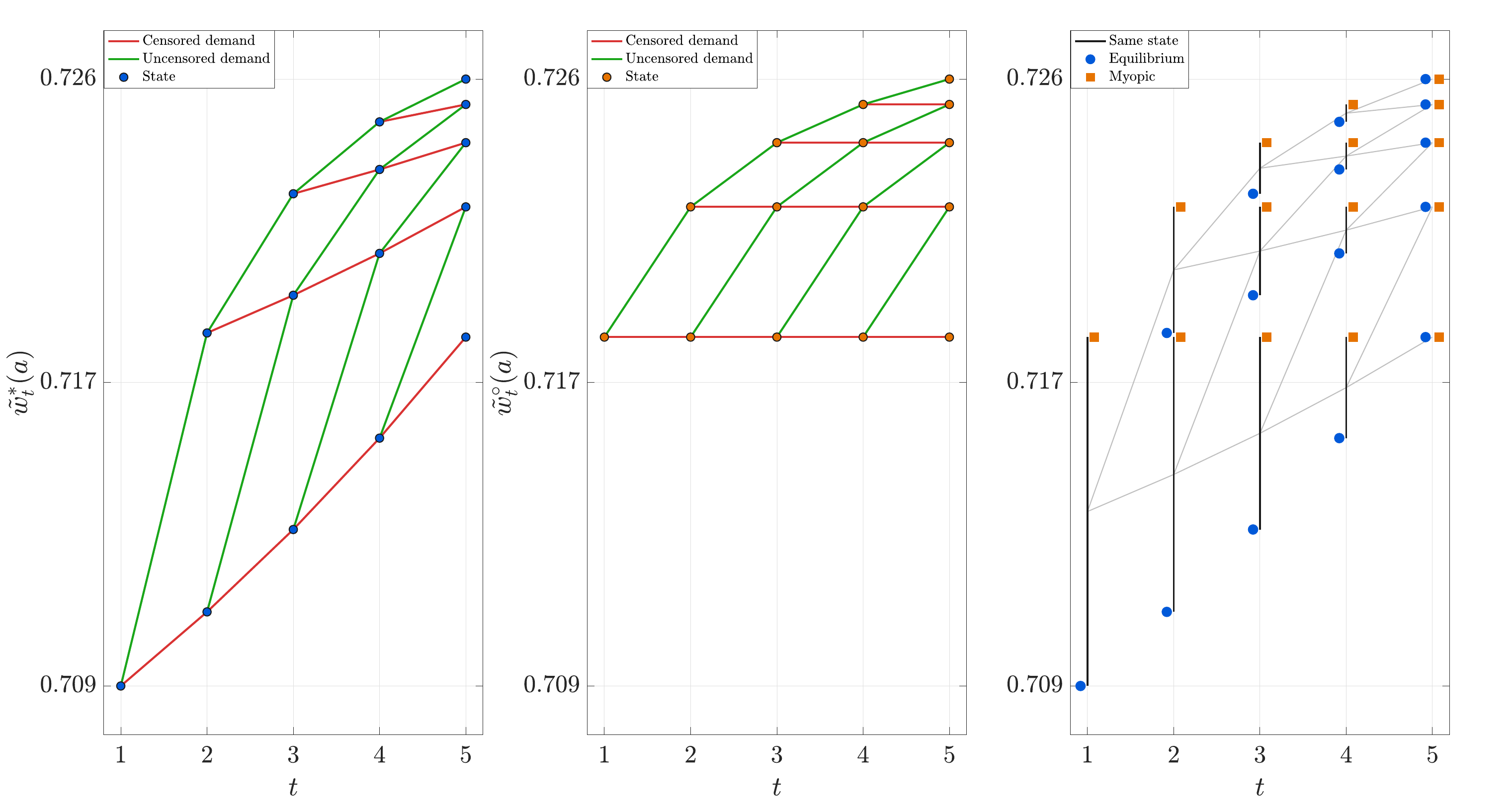}
\caption{Wholesale-price functions for $\horizon=5$, $\ahyp_{1}=2$, $\retailprice=1$, and $\cost=0.5$. Left panel: equilibrium wholesale-price function $\swprice_{\ttime}^{*}(\ahyp)$. Middle panel: myopic wholesale-price function $\swprice_{\ttime}^{\circ}(\ahyp)$. Right panel: overlay of the equilibrium and myopic wholesale prices.}
\label{fig:whole_path_T=5}
\end{figure}

\Cref{fig:whole_path_T=5} shows that in all nonterminal states, the equilibrium wholesale price lies strictly below the myopic benchmark. 
By \cref{eq:z-t^{*}(a)_c>0,eq:MPE-SMPS-exponential}, this lower wholesale price induces a larger order quantity. 
Because censoring occurs when demand exceeds the order quantity, larger orders make censoring less likely, thereby improving future learning.

This pattern is more pronounced when the horizon is longer: the equilibrium wholesale price is lower in the earlier periods and moves toward the myopic benchmark as the terminal period approaches. 

\Cref{fig:whole_path_T=10} gives the same comparison for a longer horizon and a higher production cost. 
The figure again shows that the equilibrium wholesale price is below the myopic benchmark in nonterminal states and moves toward it as the terminal period approaches.

\begin{figure}[h]
\centering
\includegraphics[width=1\linewidth]{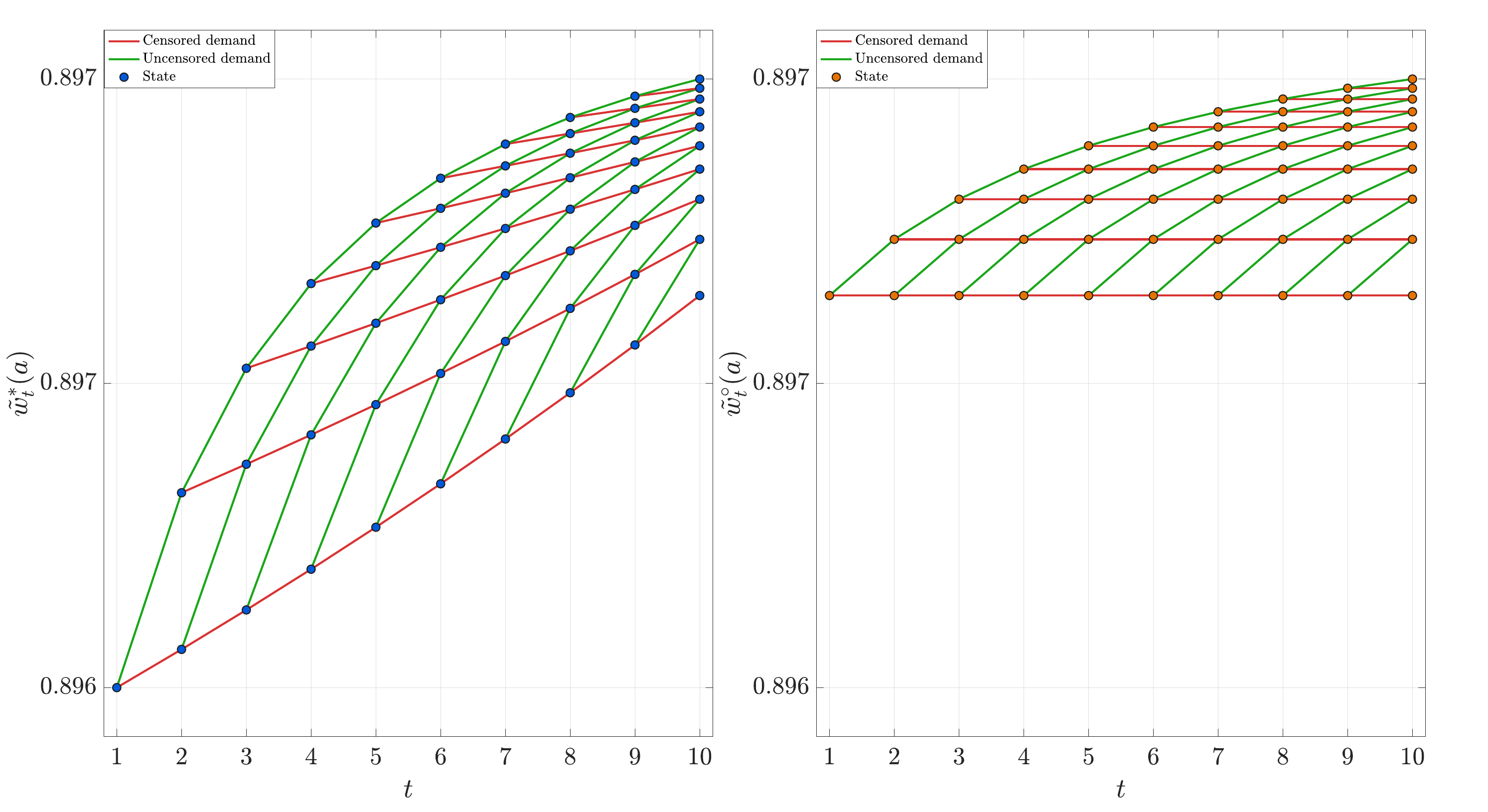}
\caption{Wholesale-price functions for $\horizon=10$, $\ahyp_{1}=2$, $\retailprice=1$, and $\cost=0.8$. 
Left panel: equilibrium wholesale-price function $\swprice_{\ttime}^{*}(\ahyp)$. Right panel: myopic wholesale-price function $\swprice_{\ttime}^{\circ}(\ahyp)$.}
\label{fig:whole_path_T=10}
\end{figure}

After a censored observation, the shape parameter $\ahyp_{\ttime}$ does not increase, whereas it does after an uncensored observation. 
Hence, a censored observation is less informative than an uncensored observation.\footnote{\citet{LarPor:MS1999} interpret larger $\ahyp$ as higher precision, using the fact that the coefficient of variation of the posterior is $1/\sqrt{\ahyp}$.}
Consistent with this, the numerical examples display lower equilibrium wholesale prices along histories with censored observations. 
Intuitively, the manufacturer's learning-oriented pricing motive is stronger when uncertainty remains high.

\Cref{fig:whole_c/p} shows how the production cost affects the equilibrium wholesale price.
Lower production costs shift the equilibrium price function downward, whereas higher production costs shift it upward. 
This pattern is consistent with a stronger learning-oriented pricing motive if inducing larger orders is less costly. 
However, \cref{subsec:ordering-responses} will show that the retailer's own strategic ordering incentive may push in the opposite direction.

\begin{figure} 
\centering
\includegraphics[width=1\linewidth]{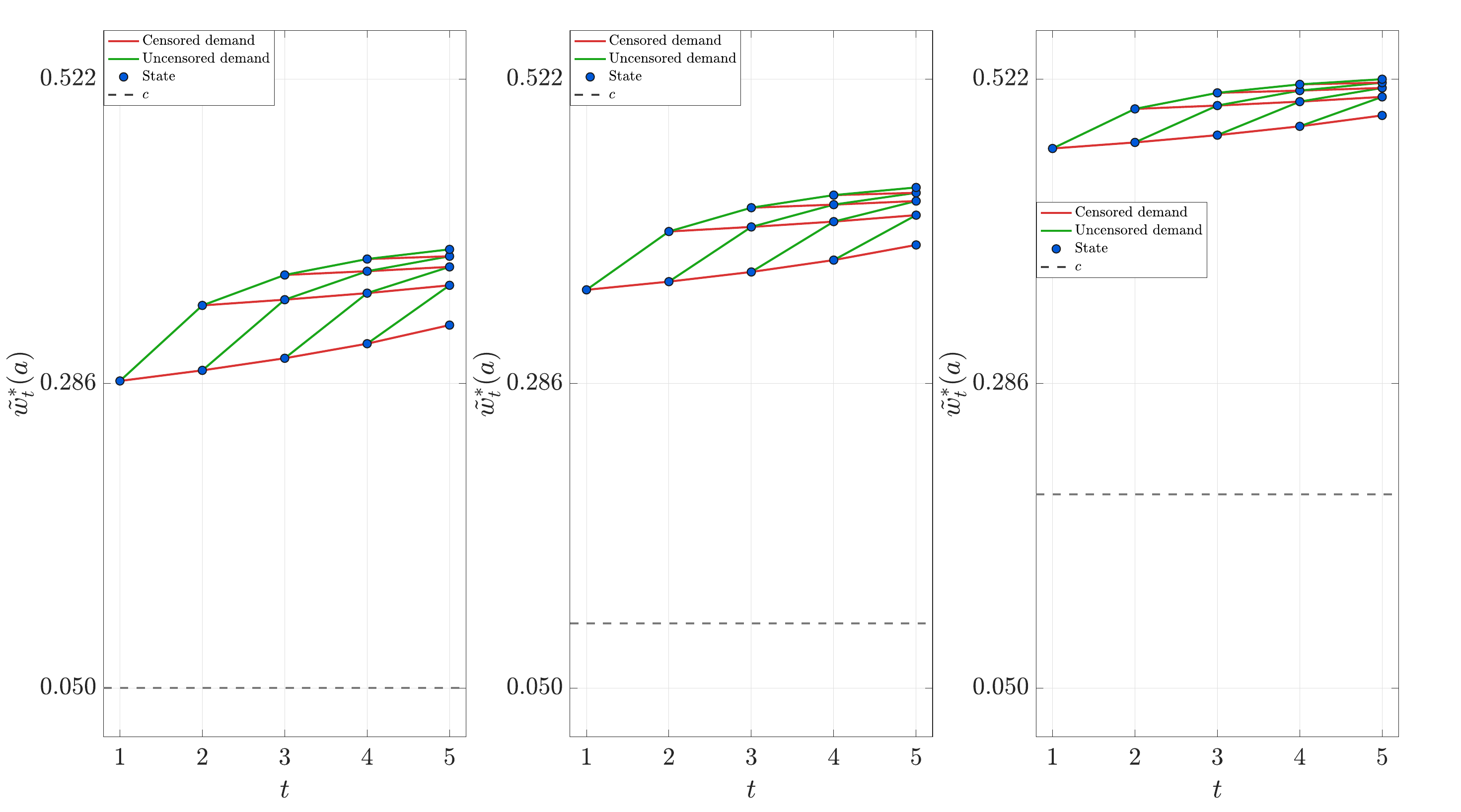}
\caption{Equilibrium wholesale-price functions for $\horizon=5$, $\ahyp_{1}=2$, $\retailprice=1$, and $\cost\in\{0.05,0.1,0.2\}$. Left panel: $\cost=0.05$. Middle panel: $\cost=0.1$. Right panel: $\cost=0.2$. The dashed line marks $\cost$.}
\label{fig:whole_c/p}
\end{figure}

%
%
%

\subsection{Value of Public Information}
\label{subsec:value-levels}

In this subsection, we study how the players' standardized value functions vary with the public belief. 
In the standardized problem, the belief is indexed by the single state variable $\ahyp$. 
\Cref{fig:stand_value_fun} shows how the standardized value functions at $\ttime=1$ vary with $\ahyp_{1}$ for different values of $\cost$. 
In all three panels, the manufacturer's standardized value function is increasing in $\ahyp_{1}$. 
The retailer's standardized value function, however, need not be monotone in $\ahyp_{1}$. 
When $\cost=0.1$, it is decreasing in $\ahyp_{1}$; when $\cost=0.4$, it is increasing in $\ahyp_{1}$; and when $\cost=0.25$, it is nonmonotone.

\begin{figure}[h] 
\centering
\includegraphics[width=1\linewidth]{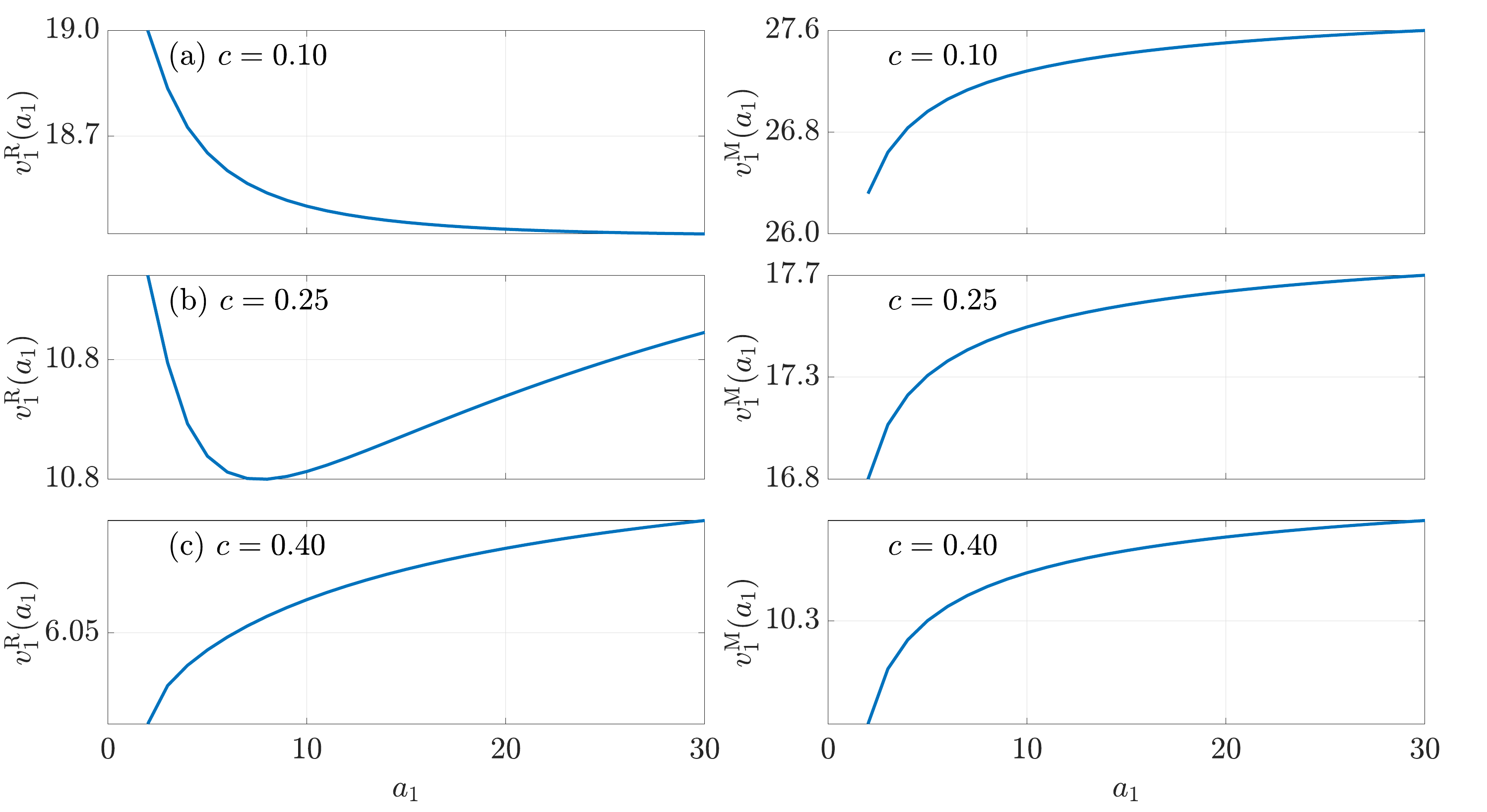}
\caption{Standardized value functions at $\ttime=1$ as functions of $\ahyp_{1}$ for $\horizon=100$, $\retailprice=1$, and $\cost\in\{0.1,0.25,0.4\}$. The left column plots the retailer's standardized value function $\svalueret_{1}(\ahyp_{1})$, and the right column plots the manufacturer's standardized value function $\svaluemanuf_{1}(\ahyp_{1})$. Panels (a), (b), and (c) correspond to $\cost=0.1$, $\cost=0.25$, and $\cost=0.4$, respectively.}
\label{fig:stand_value_fun}
\end{figure}

This pattern reflects two opposing effects for the retailer. 
On the one hand, a larger value of $\ahyp$ increases precision about the unknown demand parameter and thus tends to raise the retailer's standardized value function. 
On the other hand, when the public belief is already sufficiently concentrated, the manufacturer has a smaller incentive to keep future wholesale prices low in order to promote learning. 
This second effect is stronger when $\cost$ is low, because the manufacturer can induce learning at a lower cost. 
In the displayed examples, the second effect dominates at a low production cost, the first dominates at a high production cost, and neither dominates uniformly at an intermediate cost, yielding the nonmonotone shape in panel~(b).

%
%
%

\subsection{Standardized Marginal Value of an Additional Uncensored Observation}
\label{subsec:marginal-information}
In this subsection, we quantify the standardized marginal value of public information.
By the updating rule for $\ahyp_{\ttime}$, moving from $\ahyp_{1}$ to $\ahyp_{1}+1$ corresponds to one additional uncensored demand observation in the public belief state. 
In the standardized problem, the public belief is indexed by the single state variable $\ahyp$; therefore, we compare the standardized value functions at $\ahyp_{1}$ and $\ahyp_{1}+1$.
We interpret
 $\svalueret_{2}(\ahyp_{1}+1)-\svalueret_{2}(\ahyp_{1})
\quad\text{and}\quad
\svaluemanuf_{2}(\ahyp_{1}+1)-\svaluemanuf_{2}(\ahyp_{1})$ as the standardized marginal value of moving from $\ahyp_{1}$ to $\ahyp_{1}+1$ for the retailer and the manufacturer, respectively.

\begin{figure}[h] 
\centering
\includegraphics[width=1\linewidth]{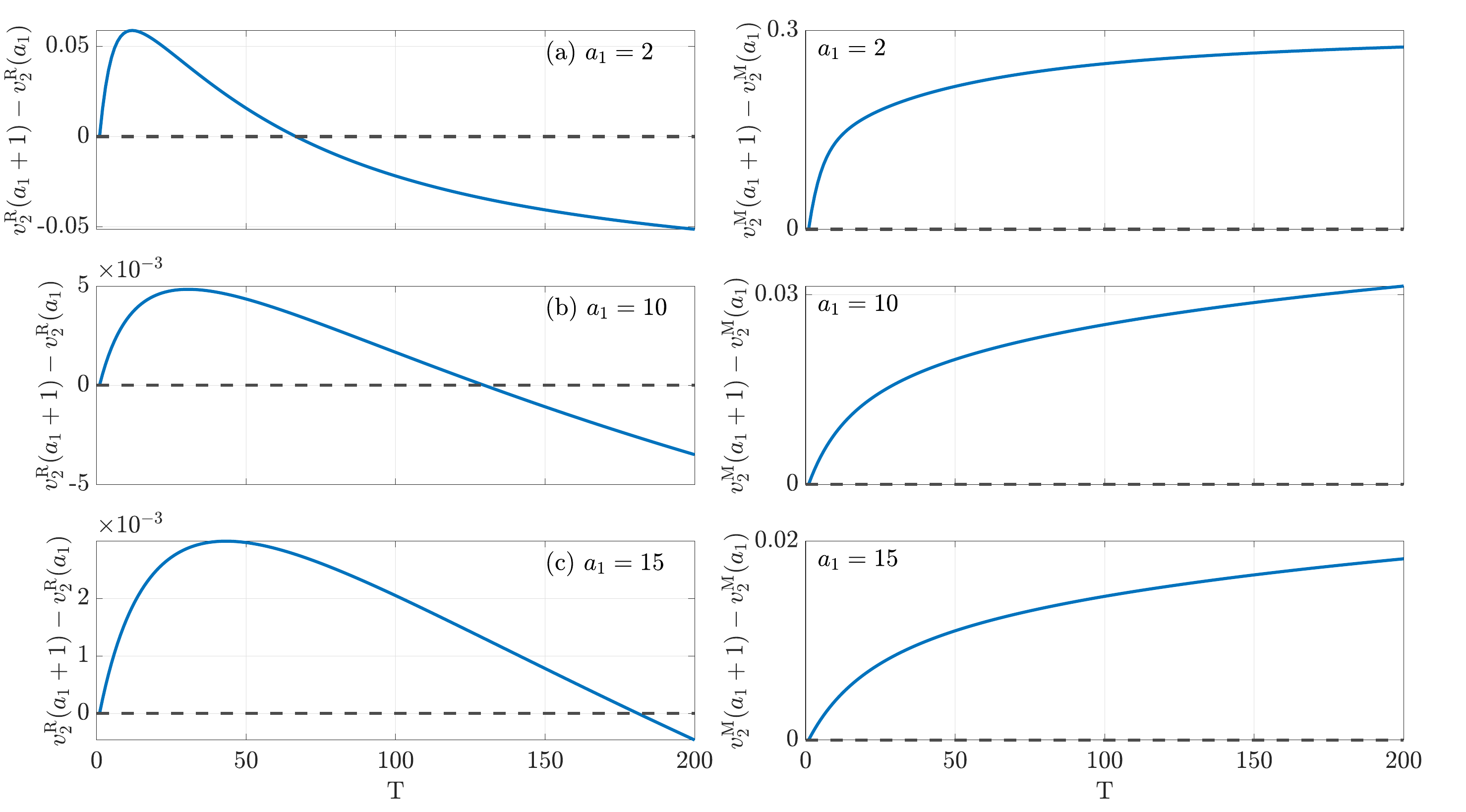}
\caption{Differences in standardized value functions as functions of the horizon $\horizon$ for $\retailprice=1$, $\cost=0.25$, and $\ahyp_{1}\in\{2,10,15\}$. The left column plots $\svalueret_{2}(\ahyp_{1}+1)-\svalueret_{2}(\ahyp_{1})$, and the right column plots $\svaluemanuf_{2}(\ahyp_{1}+1)-\svaluemanuf_{2}(\ahyp_{1})$. Panels (a), (b), and (c) correspond to $\ahyp_{1}=2$, $\ahyp_{1}=10$, and $\ahyp_{1}=15$, respectively.}
\label{fig:diffe_stand_value_fun}
\end{figure}

\Cref{fig:diffe_stand_value_fun} illustrates a clear asymmetry between the two players.
For the manufacturer, the standardized marginal value $\svaluemanuf_{2}(\ahyp_{1}+1)-\svaluemanuf_{2}(\ahyp_{1})$
is positive and increases with $\horizon$. 
Thus, in these examples, an additional uncensored observation at the beginning of the relationship is valuable to the manufacturer, and this difference is larger when the horizon is longer.
For the retailer, the difference
$\svalueret_{2}(\ahyp_{1}+1)-\svalueret_{2}(\ahyp_{1})$ 
is positive for short horizons but decreases with $\horizon$ and eventually becomes negative for all three displayed values of  $\ahyp_{1}$. 
Therefore, an additional uncensored observation may benefit the retailer when the relationship is short, but reduces the retailer’s standardized value function when the relationship is long. 
This effect is strongest when $\ahyp_{1}$ is small and becomes weaker as $\ahyp_{1}$ increases, which is consistent with the fact that the incremental informational gain from moving from $\ahyp_{1}$ to $\ahyp_{1}+1$ is smaller when the public belief is already concentrated.

The horizon pattern in \Cref{fig:diffe_stand_value_fun} reflects the trade-off discussed in the previous subsection. 
An additional uncensored observation improves demand information, but it also strengthens the manufacturer's future pricing position. 
For short horizons, the first effect dominates, so the retailer's standardized marginal value of information is positive. 
For long horizons, the second effect can dominate, making the retailer's marginal value of information negative. By contrast, in the displayed examples, the manufacturer's standardized marginal value of information is positive throughout. 
The next subsection shows how this asymmetry is reflected in the retailer's ordering incentives.

%
%
%

\subsection{Ordering Responses under Dynamic Learning}
\label{subsec:ordering-responses}

A central result in the Bayesian inventory literature is that a forward-looking retailer orders more than a myopic retailer. 
In the exponential-demand setting studied by \citet[theorem~3(c),(e)]{LarPor:MS1999}, this result is a direct consequence of the fact that the standardized return is strictly increasing in the shape hyperparameter. 
Our strategic setting shows that this conclusion need not survive when learning also affects future wholesale pricing, that is, for a given wholesale price, the retailer may prefer to order less than in the myopic benchmark.

The retailer's decision is affected not only by the manufacturer's choice of the wholesale price, but also by the effect that the ordered quantity has on public learning. 
To understand how dynamic learning affects the retailer's order, it is useful to compare the equilibrium order with the myopic benchmark and separate the effect of the manufacturer's lower wholesale price from the retailer's own strategic ordering incentives. Accordingly, we decompose the gap between the equilibrium order and the myopic order as follows:
\begin{equation}
\label{eq:decomposition-order-gap}
\underbrace{\sordertwo_{\ttime}^{*}(\ahyp)-\sorder^{\circ}(\ahyp,\swprice^{\circ}(\ahyp))}_{\text{total order gap}}
=
\underbrace{\Bigl(\sordertwo_{\ttime}^{*}(\ahyp)-\sorder^{\circ}(\ahyp,\swprice_{\ttime}^{*}(\ahyp))\Bigr)}_{\text{forward-looking effect}}
+
\underbrace{\Bigl(\sorder^{\circ}(\ahyp,\swprice_{\ttime}^{*}(\ahyp))-\sorder^{\circ}(\ahyp,\swprice^{\circ}(\ahyp))\Bigr)}_{\text{price effect}}.
\end{equation}

The first term on the right-hand side is the forward-looking effect: it compares the equilibrium order with the myopic order at the equilibrium wholesale price $\swprice_{\ttime}^{*}(\ahyp)$. 
The second term is the price effect: it captures the change in the myopic order generated by the gap between the equilibrium and myopic wholesale prices.

\begin{figure}[h] 
\centering
\includegraphics[width=1\linewidth]{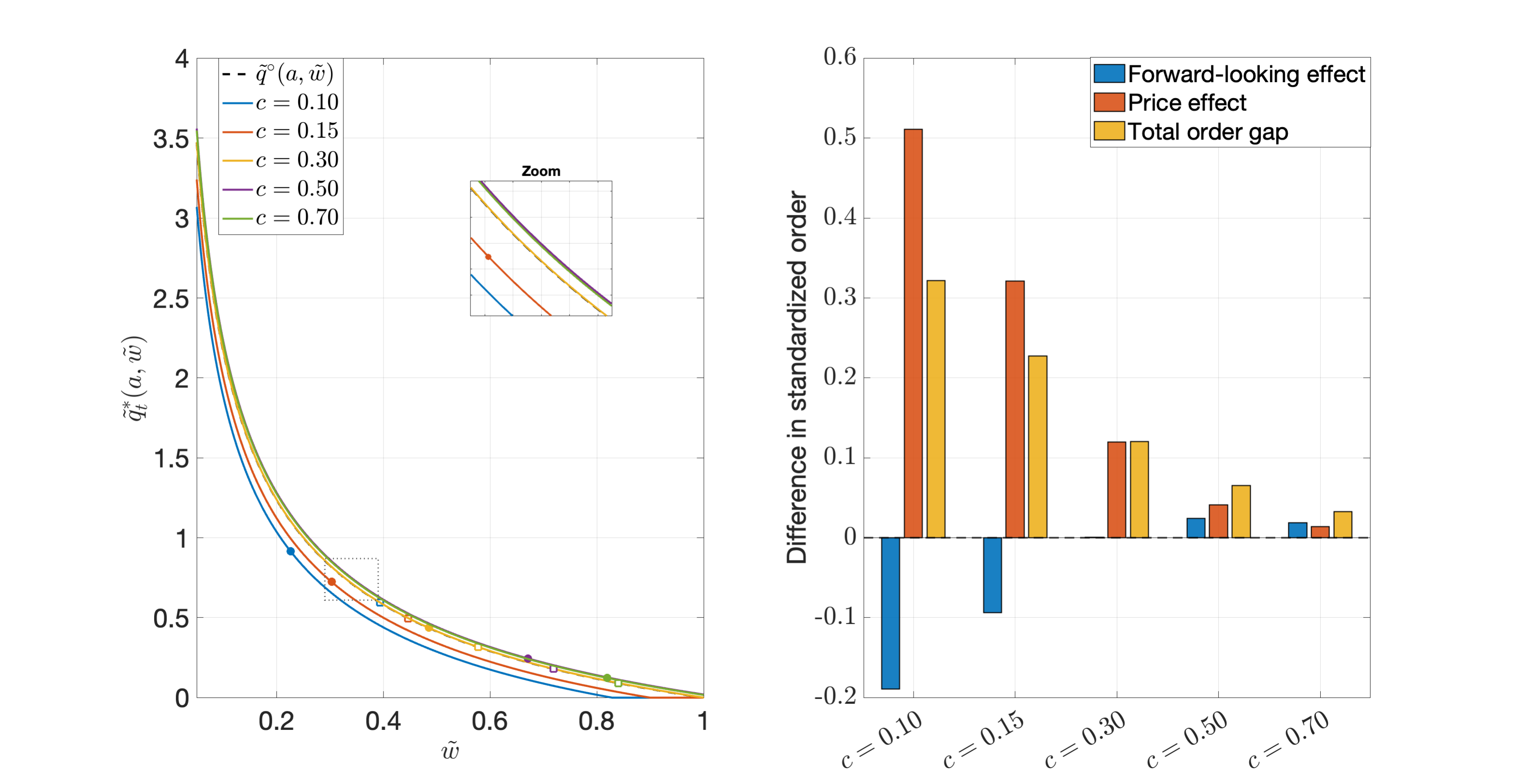}
\caption{Retailer's standardized response and decomposition of the equilibrium--myopic order gap for $\horizon=100$, $\ahyp_{1}=2$, $\ttime=1$, $\retailprice=1$, and $\cost\in\{0.1,0.15,0.3,0.5,0.7\}$. 
Left panel: retailer's standardized best-response functions $\sorder^{*}_{1}(\ahyp_{1},\swprice)$ and the standardized myopic retailer response $\sorder^{\circ}(\ahyp_{1},\swprice)$. 
Right panel: decomposition of $\sordertwo^{*}_{1}(\ahyp_{1})-\sorder^{\circ}(\ahyp_{1},\swprice^{\circ}(\ahyp_{1}))$ into the forward-looking effect, the price effect, and the total order gap.}
\label{fig:rachet_effe}
\end{figure}
\Cref{fig:rachet_effe} shows that the retailer's ordering response under dynamic learning can go in either direction. 
The left panel fixes $\ttime=1$ and $\ahyp_{1}=2$ and compares the retailer's standardized best-response function with the standardized myopic response for different values of $\cost$. 
For low production cost ($\cost=0.1,0.15$), the equilibrium best-response function lies below the myopic benchmark over the relevant range of wholesale prices, so, for a fixed wholesale price, the retailer would order less than in the myopic benchmark. 
When $\cost=0.3$, the two responses are nearly indistinguishable, whereas for $\cost=0.5$ and $\cost=0.7$ the equilibrium best response lies slightly above the myopic benchmark. 
Thus, in the displayed examples, the retailer's forward-looking effect can be negative, negligible, or positive, depending on the production cost.

The right panel quantifies the decomposition in \cref{eq:decomposition-order-gap}. 
The forward-looking effect is negative for $\cost=0.1$ and $\cost=0.15$, essentially zero for $\cost=0.3$, and positive for $\cost=0.5$ and $\cost=0.7$. 
By contrast, the price effect is positive in all cases and is strongest when production costs are low. 
This pattern is consistent with \cref{subsec:learning-driven-pricing}: 
a lower production cost widens the equilibrium--myopic price gap, which amplifies the price effect.

The decomposition also clarifies why the forward-looking effect differs from the single-agent problem. 
At a fixed wholesale price, choosing a larger order makes an uncensored observation more likely, which implies faster learning. 
In the single-agent problem, this is beneficial because the retailer's continuation value increases with the shape parameter. 
In our setting, however, \Cref{fig:diffe_stand_value_fun} shows that the retailer's standardized marginal value of an additional uncensored observation can be negative. 
For this reason, the retailer may prefer to induce slower learning, even though the manufacturer prefers the opposite.
The decomposition in \cref{eq:decomposition-order-gap}, illustrated in the right panel of \Cref{fig:rachet_effe}, shows that the total order gap is positive even when the forward-looking retailer has an  incentive to reduce orders.
The manufacturer's equilibrium price reduction more than offsets the retailer's incentive to slow down learning.

\begin{remark}
\label{re:ratchet}
The forward-looking effect identified in this subsection is related to the ratchet effect in dynamic incentive problems. 
In the classical principal-agent literature on the ratchet effect, the agent's current performance reveals information that can later be exploited by the principal, so the agent strategically restrains current performance \citep{Weitzman1980,FreGuesTir:RES1985,LafTir:E1988}.

In a supply-chain setting, \citet{MitShinYoon:JMR2022} show that a privately informed retailer may curtail current demand-enhancing investment in order to conceal favorable market conditions and avoid a higher future wholesale price.
This intuition is also related to \citet{Han:PhD2016}, who notes that a retailer may prefer to ``create censorship'' to avoid an unfavorable future wholesale price. 
In \citet{Cis:RES2018}, a long-run player and the market share a common prior and learn from a noisy public signal, but the player's actions affect the evolution of the signal under imperfect monitoring. 
In \citet{BhaskarRoketskiy:JET2023}, the worker and the firm learn symmetrically about job difficulty, but effort is unobservable.

Our model is different in that the demand parameter $\Parameter$ is unknown to both players, the retailer's order quantity is publicly observed, and both parties update the same posterior from the same censored demand data. Thus, the relevant state variable is a common public belief at every date. There is no private information, no hidden action, and no privately controlled signal.

Nevertheless, a ratchet-like force may arise endogenously.  Holding the wholesale price fixed, a larger order quantity makes an uncensored observation more likely and  speeds up public learning. 
When faster public learning can lead to higher future wholesale prices, the retailer may prefer to order less today in order to slow down that learning. 
In this sense, the forward-looking effect in \cref{eq:decomposition-order-gap} is analogous to a ratchet effect: current behavior is distorted because it affects future wholesale prices through the evolution of beliefs. 
The difference is that, in our model, this distortion is driven entirely by symmetric public learning from censored demand rather than by asymmetric information.
\end{remark}

%
%
%

\section{Concluding Remarks}
\label{se:conclusions}
The Bayesian inventory literature on censored-demand learning has mostly studied a single decision maker. This paper studies a finite-horizon dynamic wholesale-pricing contract in which a manufacturer and a retailer observe sales, update a common posterior from censored demand observations, and interact strategically over time. For Weibull demand, we establish existence of an \ac{MPE} using dimensionality reduction. 
For exponential demand, we prove uniqueness of this \ac{MPE} and recursively characterize equilibrium wholesale prices, order quantities, and value functions.

Our analysis shows that public learning may create conflicting incentives in the supply chain. 
In our model, the manufacturer uses the wholesale price not only to extract current margin, but also to influence how much future sales observations reveal about demand. 
This generates an asymmetry in the value of public information. 
In our numerical analysis, more precise information benefits the manufacturer, whereas its value to the retailer can be positive, negative, or can vary with the horizon. 
As a result, the classical stock-more result from Bayesian inventory management need not survive in a strategic setting: the retailer may prefer slower learning when faster public learning strengthens the manufacturer’s future pricing position.

The model also suggests several directions for future research. 
One is to introduce retailer-side private information, which would allow a closer comparison between our mechanism and the classical ratchet literature. 
Other extensions include richer contract forms, such as buyback or revenue-sharing contracts, and inventory carryover across periods.

%
%
%
%

\section{Proofs}
\label{se:proofs}

\subsection{Berge's Theorems}
\label{suse:Berge}

To prove  
\cref{th:main}, we will use three key results, two of them from \citep[chapter VI, section 3]{Ber:DUNOD1959}. 
For the reader's convenience, we briefly restate these results following the style of \citet{AliBor:SPRINGER2006}.

Given a correspondence $\maximiz$, the symbol $\Gr(\maximiz)$ denotes its graph. 
\begin{lemma}[\protect{\citet[lemma~17.30]{AliBor:SPRINGER2006}}]
\label{le:up-up-up}
Let $\maximiz \colon \wprices \rightrightarrows \sorders $ be an upper hemicontinuous correspondence between topological spaces with nonempty compact values, and let
\begin{equation*}
\uvalue \colon \Gr(\maximiz) \to \reals
\end{equation*}
be an upper semicontinuous objective function. 
Define the function $\svalue \colon \wprices \to \reals $ by
\[
\svalue(\wholesaleprice) = \max_{\sorder \in \maximiz(\wholesaleprice)} \uvalue(\wholesaleprice,\sorder).
\]
Then the function $\svalue$ is upper semicontinuous.
\end{lemma}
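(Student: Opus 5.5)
The statement is the classical maximum theorem for upper semicontinuous objectives (Berge; Aliprantis--Border, Lemma 17.30), so I would prove it directly from the definitions. The plan is to fix $\wholesaleprice_0 \in \wprices$ and a real number $\alpha > \svalue(\wholesaleprice_0)$, and to show that $\svalue < \alpha$ on a neighbourhood of $\wholesaleprice_0$. This is the neighbourhood form of upper semicontinuity, so nothing more is required.

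First, observe that $\svalue$ is well defined. For each $\wholesaleprice$, the set $\maximiz(\wholesaleprice)$ is nonempty and compact, and $\sorder \mapsto \uvalue(\wholesaleprice,\sorder)$ is upper semicontinuous on it, being the restriction of $\uvalue$ to the slice $\{\wholesaleprice\}\times\maximiz(\wholesaleprice)$ of $\Gr(\maximiz)$. An upper semicontinuous function on a nonempty compact set attains its maximum.

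Second, use the upper semicontinuity of $\uvalue$ on $\Gr(\maximiz)$ at each point of $\{\wholesaleprice_0\}\times\maximiz(\wholesaleprice_0)$. For every $\sorder \in \maximiz(\wholesaleprice_0)$ we have $\uvalue(\wholesaleprice_0,\sorder)\le \svalue(\wholesaleprice_0)<\alpha$. Hence there are open sets $N_{\sorder}\ni \wholesaleprice_0$ and $O_{\sorder}\ni\sorder$ such that $\uvalue<\alpha$ on $(N_{\sorder}\times O_{\sorder})\cap \Gr(\maximiz)$.

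Third, apply compactness. Finitely many $O_{\sorder_1},\dots,O_{\sorder_m}$ cover $\maximiz(\wholesaleprice_0)$; let $O$ be their union and $N' = \bigcap_i N_{\sorder_i}$. By upper hemicontinuity, there is an open $N''\ni\wholesaleprice_0$ with $\maximiz(\wholesaleprice)\subseteq O$ for all $\wholesaleprice\in N''$. Put $N = N'\cap N''$.

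Finally, take any $\wholesaleprice\in N$ and any $\sorder\in\maximiz(\wholesaleprice)$. Then $\sorder\in O_{\sorder_i}$ for some $i$ and $\wholesaleprice\in N_{\sorder_i}$, so $\uvalue(\wholesaleprice,\sorder)<\alpha$. Since the maximum defining $\svalue(\wholesaleprice)$ is attained, $\svalue(\wholesaleprice)<\alpha$. Thus $\{\svalue<\alpha\}$ is open, i.e.\ $\svalue$ is upper semicontinuous.

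The only delicate point is the third step. Upper semicontinuity of $\uvalue$ gives product neighbourhoods only locally around each point of the fiber. The argument needs both compactness of $\maximiz(\wholesaleprice_0)$, to get a single neighbourhood of $\wholesaleprice_0$, and upper hemicontinuity, to ensure that nearby fibers stay inside the finite union $O$. Everything else is routine.
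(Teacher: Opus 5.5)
Your proof is correct and complete. It is the standard direct argument: upper semicontinuity of $\uvalue$ gives product neighbourhoods at each point of the fibre, compactness of $\maximiz(\wholesaleprice_0)$ reduces these to finitely many, and upper hemicontinuity keeps nearby fibres inside their union. The paper does not prove this lemma but imports it verbatim from Aliprantis and Border (Lemma~17.30), so there is no argument in the paper to compare yours against.
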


\begin{theorem}[\protect{\citet[theorem~17.31]{AliBor:SPRINGER2006}}]
\label{thm:cont-cont-cont-up}
Let $\corresp \colon \wprices \rightrightarrows \sorders $ be a continuous correspondence between topological spaces with nonempty compact values, and let $\uvalue \colon  \Gr(\corresp) \to \reals$
be a continuous objective function.
Define the \emph{value function} $\svalue \colon \wprices \to \reals $ by
\[
\svalue(\wholesaleprice) = \max_{\sorder \in \corresp(\wholesaleprice)} \uvalue(\wholesaleprice,\sorder),
\]
and the correspondence of maximizers $\maximiz \colon \wprices \rightrightarrows \sorders$ by
\[
\maximiz(\wholesaleprice) = \braces*{\sorder \in \corresp(\wholesaleprice) \colon \uvalue(\wholesaleprice,\sorder) = \svalue(\wholesaleprice)}.
\]

Then:
\begin{enumerate}
\item The value function $\svalue$ is continuous.
  
\item The ``$\argmax$'' correspondence $\maximiz$ has nonempty compact values.
  
\item If either $\uvalue$ has a continuous extension to all of $\wprices\times\sorders $, or if $\sorders$ is Hausdorff, then the ``$\argmax$'' correspondence $\maximiz$ is upper hemicontinuous.
  \end{enumerate}
\end{theorem}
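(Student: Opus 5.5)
The plan is to argue directly from the definitions, using the net characterizations of hemicontinuity for correspondences with nonempty compact values and invoking \cref{le:up-up-up} for the upper half of continuity. We order the argument as (2), then (1), then (3), because continuity of $\svalue$ is used in the last step.

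\textbf{Step (2): nonempty compact argmax.} Fix $\wholesaleprice$. The section $\sorder \mapsto \uvalue(\wholesaleprice,\sorder)$ is continuous on the compact set $\corresp(\wholesaleprice)$, so it attains its maximum $\svalue(\wholesaleprice)$ and $\maximiz(\wholesaleprice)\neq\emptyset$. Moreover, $\maximiz(\wholesaleprice)$ is the preimage of the closed set $\{\svalue(\wholesaleprice)\}$ under this continuous map. It is therefore closed in $\corresp(\wholesaleprice)$, and a closed subset of a compact set is compact.

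\textbf{Step (1): continuity of $\svalue$.} This splits into two semicontinuity claims.
\begin{itemize}
\item Upper semicontinuity follows at once from \cref{le:up-up-up}, applied with $\maximiz$ replaced by $\corresp$. This is legitimate because $\corresp$ is upper hemicontinuous with nonempty compact values and $\uvalue$ is in particular upper semicontinuous.
\item For lower semicontinuity at $\wholesaleprice_0$, suppose not. Then there are $\varepsilon>0$ and a net $\wholesaleprice_\alpha\to\wholesaleprice_0$ with $\svalue(\wholesaleprice_\alpha)<\svalue(\wholesaleprice_0)-\varepsilon$ for all $\alpha$.
\item Pick $\sorder_0\in\maximiz(\wholesaleprice_0)$. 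By lower hemicontinuity of $\corresp$, in its net form, there are a subnet $(\wholesaleprice_\beta)$ and points $\sorder_\beta\in\corresp(\wholesaleprice_\beta)$ with $\sorder_\beta\to\sorder_0$.
\item Continuity of $\uvalue$ on $\Gr(\corresp)$ then gives $\svalue(\wholesaleprice_\beta)\ge \uvalue(\wholesaleprice_\beta,\sorder_\beta)\to \uvalue(\wholesaleprice_0,\sorder_0)=\svalue(\wholesaleprice_0)$, which is a contradiction.
\end{itemize}

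\textbf{Step (3): upper hemicontinuity of $\maximiz$.} I would use the criterion that a correspondence with compact values is upper hemicontinuous at $\wholesaleprice_0$ if every net $\wholesaleprice_\alpha\to\wholesaleprice_0$ with $\sorder_\alpha\in\maximiz(\wholesaleprice_\alpha)$ has a subnet converging to a point of $\maximiz(\wholesaleprice_0)$.
\begin{itemize}
\item Since $\maximiz\subseteq\corresp$ and $\corresp$ is upper hemicontinuous with compact values, such a net has a subnet $\sorder_\beta\to\sorder$ with $\sorder\in\corresp(\wholesaleprice_0)$. Hence $(\wholesaleprice_0,\sorder)\in\Gr(\corresp)$.
\item Along the subnet, $\uvalue(\wholesaleprice_\beta,\sorder_\beta)=\svalue(\wholesaleprice_\beta)\to\svalue(\wholesaleprice_0)$ by step (1).
\item Also $\uvalue(\wholesaleprice_\beta,\sorder_\beta)\to\uvalue(\wholesaleprice_0,\sorder)$ by continuity of $\uvalue$, or of its continuous extension in the first alternative. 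Hence $\sorder\in\maximiz(\wholesaleprice_0)$.
\end{itemize}

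For the first alternative, there is a cleaner route. Write $\maximiz(\wholesaleprice)=\corresp(\wholesaleprice)\cap\{\sorder:\tilde\uvalue(\wholesaleprice,\sorder)\ge\svalue(\wholesaleprice)\}$, where the second correspondence has closed graph because $\tilde\uvalue$ and $\svalue$ are continuous. Then use the standard fact that the intersection of a compact-valued upper hemicontinuous correspondence with a closed-graph correspondence is upper hemicontinuous.

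The main obstacle is step (3), specifically the justification of the net characterization of upper hemicontinuity for general topological spaces. The converse direction (limit-point property implies upper hemicontinuity) needs compact values, which step (2) supplies. Extracting a convergent subnet from $\corresp$, and making sure limits are unique so that the two limits of $\uvalue(\wholesaleprice_\beta,\sorder_\beta)$ can be identified, is where the Hausdorff hypothesis on $\sorders$ enters. I would first try to settle this by citing the corresponding characterization in \citet{AliBor:SPRINGER2006}. In our application, $\sorders=\reals_+$ and $\wprices\subset\reals$, so nets can be replaced by sequences and all these subtleties disappear.
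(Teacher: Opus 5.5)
Your proof is correct. The paper does not prove this statement; it only cites Berge's theorem from \citet{AliBor:SPRINGER2006}, so the relevant comparison is with the standard textbook argument.

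For parts (1) and (2) you follow essentially that argument:
\begin{itemize}
\item upper semicontinuity of $\svalue$ comes from \cref{le:up-up-up};
\item lower semicontinuity comes from lower hemicontinuity of $\corresp$, which you handle through the net characterization rather than a separate lemma;
\item $\maximiz(\wholesaleprice)$ is compact as a closed subset of the compact set $\corresp(\wholesaleprice)$.
\end{itemize}

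For part (3) the standard proof takes what you call the cleaner route. It writes $\maximiz=\corresp\cap\gamma$ with $\gamma$ of closed graph and applies the intersection lemma. The extension or Hausdorff hypothesis is needed there only to make that graph closed in $\wprices\times\sorders$.

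Your net argument avoids closed graphs entirely, and it never actually uses either alternative hypothesis. Two facts make it work in full generality:
\begin{itemize}
\item The subnet $\sorder_\beta\to\sorder\in\corresp(\wholesaleprice_0)$ exists for any upper hemicontinuous correspondence at a point with compact value. No separation axiom on $\sorders$ is required.
\item Since $(\wholesaleprice_0,\sorder)\in\Gr(\corresp)$, continuity of $\uvalue$ on the graph is enough. The two limits of $\uvalue(\wholesaleprice_\beta,\sorder_\beta)$ are then compared in $\reals$, which is Hausdorff.
\end{itemize}
So your argument proves upper hemicontinuity of $\maximiz$ unconditionally, which slightly strengthens the cited statement.

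Your closing paragraph therefore misplaces the difficulty. Uniqueness of limits in $\sorders$ is never needed. Also, the implication ``every such net has a subnet converging into $\maximiz(\wholesaleprice_0)$ $\Rightarrow$ upper hemicontinuity at $\wholesaleprice_0$'' holds in arbitrary topological spaces without assuming compact values; the net property itself forces them. These are errors in the commentary, not in the proof.
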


A correspondence $\corresp \colon \wprices \rightrightarrows \sorders$ is called \emph{weakly measurable} if, for every open set $Z\subseteq \sorders$,
\[
\{\wholesaleprice\in\wprices :
\corresp(\wholesaleprice)\cap Z\neq\varnothing\}
\]
is a Borel subset of $\wprices$.

A function $\uvalue\colon \wprices\times\sorders\to\reals$ is \emph{Carathéodory} if, for every $\sorder\in\sorders$, the map
$\wholesaleprice\mapsto \uvalue(\wholesaleprice,\sorder)$ is measurable and, for every $\wholesaleprice\in\wprices$, the map
$\sorder\mapsto \uvalue(\wholesaleprice,\sorder)$ is continuous.
\begin{theorem}[\protect{\citet[theorem~18.19]{AliBor:SPRINGER2006}}]
\label{thm:measurable-maximum}
Let $\wprices$ and $\sorders$ be separable metrizable spaces endowed with their respective Borel $\sigma$-algebras.
Let $\corresp \colon \wprices \rightrightarrows \sorders$ be a weakly measurable correspondence with nonempty compact values, and let
$\uvalue \colon \wprices \times \sorders \to \reals$
be a Carathéodory function.
Define the \emph{value function}
$\svalue \colon \wprices \to \reals$ by
\[
\svalue(\wholesaleprice)
=
\max_{\sorder\in\corresp(\wholesaleprice)}
\uvalue(\wholesaleprice,\sorder),
\]
and the correspondence of maximizers
$\maximiz \colon \wprices \rightrightarrows \sorders$ by
\[
\maximiz(\wholesaleprice)
=
\braces*{
\sorder\in\corresp(\wholesaleprice)
:
\uvalue(\wholesaleprice,\sorder)
=
\svalue(\wholesaleprice)
}.
\]

Then:
\begin{enumerate}
\item The value function $\svalue$ is measurable.

\item The ``$\argmax$'' correspondence $\maximiz$ has nonempty compact values.

\item The ``$\argmax$'' correspondence $\maximiz$ is measurable and admits a measurable selector.
\end{enumerate}
\end{theorem}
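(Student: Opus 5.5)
The statement is Theorem~18.19 of \citet{AliBor:SPRINGER2006}, so in the paper it is simply cited. If I were to reprove it, I would reduce everything to countably many measurable functions, using a Castaing representation of $\corresp$. Since $\sorders$ is separable metrizable and $\corresp$ is weakly measurable with nonempty compact (hence closed) values, the Kuratowski--Ryll-Nardzewski selection theorem, applied to suitable restrictions, yields measurable selectors $f_{1},f_{2},\dots$ of $\corresp$ with $\corresp(\wholesaleprice)=\overline{\{f_{n}(\wholesaleprice):n\in\naturals\}}$ for every $\wholesaleprice$. Since $\wprices$ and $\sorders$ are separable metrizable, a Carath\'eodory function is jointly Borel measurable. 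Hence each composite $\wholesaleprice\mapsto\uvalue(\wholesaleprice,f_{n}(\wholesaleprice))$ is measurable.

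\textbf{Step 1 (value function).} Because $\sorder\mapsto\uvalue(\wholesaleprice,\sorder)$ is continuous and $\{f_{n}(\wholesaleprice)\}$ is dense in the compact set $\corresp(\wholesaleprice)$, the maximum is attained and
\[
\svalue(\wholesaleprice)=\sup_{n}\uvalue(\wholesaleprice,f_{n}(\wholesaleprice)).
\]
This is a countable supremum of measurable functions, so $\svalue$ is measurable.

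\textbf{Step 2 (nonempty compact argmax).} For each fixed $\wholesaleprice$, the set $\maximiz(\wholesaleprice)$ is the set of maximizers of a continuous function on a nonempty compact set. It is therefore nonempty, and it is closed in $\corresp(\wholesaleprice)$, hence compact.

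\textbf{Step 3 (measurability of $\maximiz$).} This is the main obstacle. For compact-valued correspondences into a metrizable space, it suffices to show that the lower inverse $\{\wholesaleprice:\maximiz(\wholesaleprice)\cap F\neq\varnothing\}$ is Borel for every closed set $F$. I would prove the identity
\[
\{\wholesaleprice:\maximiz(\wholesaleprice)\cap F\neq\varnothing\}
=\bigcap_{k\ge1}\bigcup_{n}\Bigl\{\wholesaleprice: d\bigl(f_{n}(\wholesaleprice),F\bigr)<\tfrac1k,\ \uvalue\bigl(\wholesaleprice,f_{n}(\wholesaleprice)\bigr)>\svalue(\wholesaleprice)-\tfrac1k\Bigr\}.
\]
The right-hand side is Borel by Step~1, since it is built from measurable functions.

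For the inclusion "$\subseteq$", take $\sorder\in\maximiz(\wholesaleprice)\cap F$. By density, some $f_{n}(\wholesaleprice)$ is arbitrarily close to $\sorder$. Continuity of $\uvalue(\wholesaleprice,\cdot)$ at $\sorder$, together with $\uvalue(\wholesaleprice,\sorder)=\svalue(\wholesaleprice)$, then gives both strict inequalities for each $k$.

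For the inclusion "$\supseteq$", pick for each $k$ an index $n_{k}$ satisfying the two conditions. The points $f_{n_{k}}(\wholesaleprice)$ lie in the compact set $\corresp(\wholesaleprice)$, so a subsequence converges to some $\sorder\in\corresp(\wholesaleprice)$. Closedness of $F$ gives $\sorder\in F$, and continuity gives $\uvalue(\wholesaleprice,\sorder)\ge\svalue(\wholesaleprice)$. Hence $\sorder\in\maximiz(\wholesaleprice)\cap F$.

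Finally, $\maximiz$ is measurable with nonempty closed values in a separable metrizable space. The Kuratowski--Ryll-Nardzewski theorem then yields a measurable selector. I would apply it after passing to a Polish completion, or equivalently by using compactness of the values directly.
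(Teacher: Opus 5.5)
The paper gives no proof of this statement. It is imported verbatim from \citet[theorem~18.19]{AliBor:SPRINGER2006} and used as a black box in the proof of \cref{th:main}\ref{it:th:main-a}, to obtain measurable selections of the retailer's best response and of the leader-favorable tie-break. Your argument is a correct, self-contained reproof of it.

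The Castaing representation reduces everything to the countably many measurable maps $\wholesaleprice\mapsto\uvalue(\wholesaleprice,f_{n}(\wholesaleprice))$, after which Step~1 is immediate. Your verification of the countable formula for $\{\wholesaleprice:\maximiz(\wholesaleprice)\cap F\neq\varnothing\}$ is right in both directions, and compactness of $\corresp(\wholesaleprice)$ is what makes the inclusion ``$\supseteq$'' work.

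The citation buys brevity, and the textbook version allows an arbitrary measurable space as domain. Your route buys transparency at no cost in generality: you never use the topology of $\wprices$, because joint measurability of a Carath\'eodory function for the product $\sigma$-algebra only needs $\sorders$ to be separable metrizable.

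One point should be made explicit. Kuratowski--Ryll-Nardzewski, and hence the Castaing representation, needs a Polish range. So the passage to the completion $\widehat{\sorders}$, which you mention only for the final selection, is already needed in your first step. It works there for the same reasons:
\begin{itemize}
\item compact subsets of $\sorders$ stay closed in $\widehat{\sorders}$;
\item the lower inverse of an open $G\subseteq\widehat{\sorders}$ equals that of $G\cap\sorders$;
\item $\sorders$-valued selectors that are measurable into $\widehat{\sorders}$ are measurable into $\sorders$.
\end{itemize}
In the paper's own applications $\wprices$ is a compact interval and the order sets are closed subsets of $\reals$, so the issue does not arise there.
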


%
%
%

\subsection{Dimensionality Reduction}
\label{suse:proof-dimensionality-reduction}

We use Berge’s maximum theorem to obtain existence and upper hemicontinuity of the retailer’s best-response correspondence. 

\begin{proof}[Proof of \cref{th:main}] 
\ref{it:th:main-a} 
The standardized value functions are set to be $0$ at period $\horizon+1$. 

We assume that the standardized value functions at period $\ttime+1$ are well defined, and we construct $\parens*{\swprice^{*}_{\ttime}(\ahyp), \sorder^{*}_{\ttime}(\ahyp,\swprice), \svaluemanuf_{\ttime}(\ahyp),\svalueretbreve_{\ttime}(\ahyp,\swprice), \svalueret_{\ttime}(\ahyp)}$ 
so that \cref{it-de:standardized-2} in \cref{de:standardized} is satisfied at period~$\ttime$.

We first show that the retailer's action space can be restricted to a compact set that depends on $\ahyp$ and $\ttime$. 
Define
\begin{equation}
\label{eq:M-r_exist}  \uvalueret_{\ttime}(\swprice,\sorder \mid \ahyp)\coloneqq \spayoffretailer(\swprice,\sorder \mid \ahyp) + \frac{\ahyp-1}{\ahyp - 1/\weibull}  \preddemand(\sorder \mid \ahyp - 1/\weibull,1) \svalueret_{\ttime + 1}(\ahyp + 1)+ \survdemand (\sorder \mid \ahyp - 1/\weibull,1)\svalueret_{\ttime + 1}(\ahyp).
\end{equation}
Considering  \cref{eq:standardized-payoff-retailer} and replacing $\min(\Demand,\sorder)$ with $\Demand$ and $\swprice$ with $\underline{\wholesaleprice}$, we get
\begin{equation}
\label{eq:ineq-Pi-s}
\spayoffretailer(\swprice,\sorder \mid \ahyp) \le   
(\ahyp-1) \parens*{\retailprice \int_{0}^{\infty} \yvar \diff\preddemand(\yvar \mid \ahyp,1) -\underline{\wholesaleprice}\sorder}.
\end{equation}
From \cref{eq:M-r_exist,eq:ineq-Pi-s}, we obtain
\begin{equation}  \label{eq:exist_boun-M-r-t}
\uvalueret_{\ttime}(\swprice,\sorder \mid \ahyp) \leq (\ahyp-1)\parens*{\retailprice \int_{0}^{\infty} \yvar \diff\preddemand(\yvar \mid \ahyp,1)-\underline{\wholesaleprice}\sorder} + \frac{\ahyp-1}{\ahyp - 1/\weibull}  \svalueret_{\ttime + 1}(\ahyp + 1)+ \svalueret_{\ttime + 1}(\ahyp).
\end{equation}
Because $\swprice \ge \underline{\wholesaleprice}$, the \rhs of  \cref{eq:exist_boun-M-r-t} is bounded above by a linear function of $\sorder$ with strictly negative slope, uniformly over  $\swprice \in  \wprices$.
Hence, there exists $\overline{\sorder}_{\ttime}(\ahyp)$ such that any action strictly greater than   $\overline{\sorder}_{\ttime}(\ahyp)$ leads to a strictly negative payoff for every $\swprice \in  \wprices$.
Choosing $\sorder = 0 $ yields a payoff   $\svalueret_{\ttime + 1}(\ahyp)\geq 0$. 
Hence, any action strictly greater than $\overline{\sorder}_{\ttime}(\ahyp)$ is strictly dominated. 
This implies that the retailer's action space can be restricted to the compact set $\Xi_{\ttime}(\ahyp) \coloneqq [0, \overline{\sorder}_{\ttime}(\ahyp)] $.
 
To apply Berge's  maximum theorem to our problem, we fix $\ahyp$ and $\ttime$, and consider
\begin{itemize}
\item 
the constant correspondence $\swprice \mapsto  \Xi_{\ttime}(\ahyp)$, which is  continuous;

\item 
the objective function $(\swprice,\sorder) \mapsto \uvalueret_{\ttime}(\swprice,\sorder \mid \ahyp)$, which is continuous, because $\preddemand$ and $\spayoffretailer$ are  continuous.
\end{itemize}

Applying \cref{thm:cont-cont-cont-up}, we see that 
\begin{equation}
\label{eq:v-breve-def}
\svalueretbreve_{\ttime}(\ahyp,\swprice) \coloneqq \max_{\sorder} \uvalueret_{\ttime}(\swprice,\sorder \mid \ahyp)
\end{equation}
is continuous in $\swprice$ and the argmax correspondence, denoted by $\maximiz_{\ttime}(\ahyp,\swprice)$, has nonempty compact values and is upper hemicontinuous in $\swprice$. Moreover, the objective function is  Carath\'eodory and the constant correspondence  $\swprice \mapsto  \Xi_{\ttime}(\ahyp)$ is weakly Borel measurable. 
By \cref{thm:measurable-maximum},
the correspondence $\swprice \mapsto  \maximiz_{\ttime}(\ahyp,\swprice)$ of maximizers in  \cref{eq:v-breve-def} is measurable.
To select an element of $\maximiz_{\ttime}(\ahyp,\swprice)$, we define
\begin{equation}\label{eq:sup_Payof-t-c>0-toto}
\uvaluemanufsmall_{\ttime}(\swprice,\sorder \mid \ahyp)\coloneqq(\ahyp-1) (\swprice-\cost) \sorder + \frac{\ahyp-1}{\ahyp - 1/\weibull} \svaluemanuf_{\ttime + 1}(\ahyp + 1) \preddemand(\sorder \mid \ahyp - 1/\weibull,1) + \svaluemanuf_{\ttime + 1}(\ahyp) \survdemand (\sorder \mid \ahyp - 1/\weibull,1).
\end{equation}

Consider now
\begin{itemize}
\item the correspondence $\swprice \mapsto \maximiz_{\ttime}(\ahyp,\swprice)$, which is measurable and upper hemicontinuous and has nonempty compact values; 
\item the objective function $(\swprice,\sorder) \mapsto 
\uvaluemanufsmall_{\ttime}(\swprice,\sorder \mid \ahyp)$, which is continuous in $(\swprice,
\sorder)$, because $\preddemand$ is continuous.  
\end{itemize}

By \cref{le:up-up-up}, we have that 
\begin{equation}
\label{eq:selection}
\uvaluemanuf_{\ttime}(\swprice \mid \ahyp) \coloneqq \max_{\sorder \in \maximiz_{\ttime}(\ahyp,\swprice)} \uvaluemanufsmall_{\ttime}(\swprice,\sorder \mid \ahyp)  
\end{equation}
is upper semicontinuous in $\swprice$.
Moreover, the objective function is  Carath\'eodory and the correspondence $\maximiz_{\ttime}(\ahyp,\cdot)$ is weakly Borel measurable. Hence, by \cref{thm:measurable-maximum},
the set $\maximiz_{\ttime}^{*}(\ahyp,\swprice)$ of maximizers in  \cref{eq:selection} is measurable and there exists  a measurable selection $\sorder_{\ttime}^{*}(\ahyp,\swprice)$ of $\maximiz_{\ttime}^{*}(\ahyp,\swprice)$.
Therefore,  $\sorder_{\ttime}^{*}(\ahyp,\swprice)$ satisfies
$\uvaluemanuf_{\ttime}(\swprice \mid \ahyp)= \uvaluemanufsmall_{\ttime}(\swprice,\sorder_{\ttime}^{*}(\ahyp,\swprice)\mid \ahyp)$.

We now look at the manufacturer's optimization problem.
Because $\wprices \coloneqq[\underline{\wholesaleprice},\overline{\wholesaleprice}]$ is compact and $\swprice \mapsto \uvaluemanuf_{\ttime}(\swprice \mid \ahyp)$ is upper semicontinuous, one can define $\svaluemanuf_{\ttime}(\ahyp) \coloneqq \max_{\swprice} \uvaluemanuf_{\ttime}(\swprice \mid \ahyp)$ and pick a maximizer $\swprice_{\ttime}^{*}(\ahyp) \in \argmax_{\swprice} \uvaluemanuf_{\ttime}(\swprice \mid \ahyp)$.
We can then define $\svalueret_{\ttime}(\ahyp) \coloneqq \svalueretbreve_{\ttime}(\ahyp,\swprice_{\ttime}^{*}(\ahyp))$. 
This proves the existence of the value function and the corresponding maximizing strategies at period $\ttime$. 
This concludes the induction proof.

\noindent
\ref{it:th:main-b} follows from the construction in  \ref{it:th:main-c}.

\noindent
\ref{it:th:main-c} 
By \ref{it:th:main-a}, an \ac{SMPS} of $\contract$ exists. 
Let $\parens*{\swprice_{\ttime}^{*},\sorder_{\ttime}^{*}}_{1 \le \ttime \le \horizon}$ be such an \ac{SMPS} with standardized value functions 
$\parens*{\svaluemanuf_{\ttime},\svalueretbreve_{\ttime},\svalueret_{\ttime}}_{1 \le \ttime \le \horizon+1}$.
Define, for all $\ttime\in\braces*{1,\dots,\horizon}$, $\ahyp\in\Aset$, $\bhyp\in\Bset$, $\wholesaleprice\in\wprices$, 
\begin{equation}
\label{eq:ut-zt*}
\scalewholesaleprice_{\ttime}(\ahyp,\bhyp)
=\swprice_{\ttime}^{*}(\ahyp) \quad\text{and}\quad
\scaleorder_{\ttime}(\ahyp,\bhyp,\wholesaleprice)
=\bhyp^{1/\weibull}\sorder_{\ttime}^{*}(\ahyp,
\wholesaleprice),
\end{equation}

and for all $\ttime\in\braces*{1,\dots,\horizon+1}$, $\ahyp\in\Aset$, $\bhyp\in\Bset$, $\wholesaleprice\in\wprices$, 
\begin{equation}
\label{reduction-formulas}
\hvalueret_{\ttime}(\ahyp,\bhyp) = \frac{\bhyp^{1/\weibull}}{\ahyp-1}\svalueret_{\ttime}(\ahyp),\
\hvaluemanuf_{\ttime}(\ahyp,\bhyp)= \frac{\bhyp^{1/\weibull}}{\ahyp-1}\svaluemanuf_{\ttime}(\ahyp), \quad\text{and}\quad
\hvalueretbreve_{\ttime}(\ahyp,\bhyp,\wholesaleprice) = \frac{\bhyp^{1/\weibull}}{\ahyp-1}\svalueretbreve_{\ttime}(\ahyp,\wholesaleprice).
\end{equation}

To prove that $\parens{\scalewholesaleprice_{\ttime},\scaleorder_{\ttime}}_{1 \le \ttime \le \horizon}$  is an \ac{MPE}
with the above value functions, we check that the conditions of \cref{de:belief} are satisfied as a direct consequence of their counterpart in \cref{de:standardized}.
First, we have $\hvalueret_{\horizon+1}(\ahyp,\bhyp)=\hvaluemanuf_{\horizon+1}(\ahyp,\bhyp)=\hvalueretbreve_{\horizon+1}(\ahyp,\bhyp,\wholesaleprice)=0$; hence \cref{it:cond-def-belief-i} in \cref{de:belief} is satisfied. 
Next, we check that \cref{it:cond-def-belief-ii} in \cref{de:belief} is verified at any period $\ttime \in \{1, \dots, \horizon\}$.
From \cref{eq:retailer-value-def}, we get
\begin{align}
\begin{split}
\label{Proof_reduction-retailer_B}
&\quad
\max_{\order}\bigg\{ \payoffretailer(\wholesaleprice,\order \mid \ahyp,\bhyp) +  \int_{0}^{\order} \hvalueret_{\ttime + 1} \left(\ahyp+1,\bhyp+\yvar^{\weibull} \right)\diff \preddemand(\yvar \mid \ahyp,\bhyp)  \\
&\qquad\qquad+
\hvalueret_{\ttime + 1} \left( \ahyp,\bhyp+\order^{\weibull} \right)  \survdemand (\order\mid \ahyp,\bhyp)\bigg\}   
\end{split}
\\
\begin{split}
&\overset{\textup{(b)}}{=} 
\max_{\order}\bigg\{ \ \retailprice \int_{0}^{\order} \survdemand (\yvar \mid \ahyp,\bhyp)\diff \yvar -\wholesaleprice \order + \frac{\svalueret_{\ttime + 1} \left(\ahyp+1 \right)}{\ahyp} \int_{0}^{\order} \left(\bhyp + \yvar^{\weibull} \right)^{1/\weibull} \diff \preddemand(\yvar \mid \ahyp,\bhyp) \\
&\qquad\qquad+
\frac{\svalueret_{\ttime + 1} (\ahyp)}{\ahyp-1} \left(\bhyp + \order^{\weibull} \right)^{1/\weibull} \survdemand (\order\mid \ahyp,\bhyp) \bigg\}. 
\end{split}
\\
\begin{split}
&\overset{\textup{(c)}}{=} 
\bhyp^{1/\weibull} \max_{\sorder} \bigg\{  \retailprice \int_{0}^{\sorder} \survdemand (\xvar \mid \ahyp, 1)\diff\xvar  -  \wholesaleprice \sorder + \frac{\svalueret_{\ttime + 1}(\ahyp + 1)}{\ahyp} \int_{0}^{\sorder} (1 + \xvar^{\weibull})^{1/\weibull} \diff\preddemand(\xvar\mid \ahyp,1)  \\
&\qquad\qquad\qquad+ 
\frac{\svalueret_{\ttime + 1}(\ahyp)}{\ahyp-1}(1 + \sorder^{\weibull})^{1/\weibull} \survdemand ( \sorder \mid \ahyp,1) \bigg\}
\end{split}
\\
\begin{split}
&\overset{\textup{(d)}}{=} 
\bhyp^{1/\weibull} \max_{\sorder} \bigg\{\retailprice\int_{0}^{\sorder} {\survdemand (\xvar \mid \ahyp,1)} \diff \xvar -\wholesaleprice\sorder  + \frac{\svalueret_{\ttime + 1}(\ahyp + 1) }{\ahyp - 1/\weibull}  \preddemand(\sorder \mid \ahyp - 1/\weibull, 1) \\
&\qquad\qquad\qquad+ \frac{\svalueret_{\ttime + 1}(\ahyp)}{\ahyp-1} \survdemand (\sorder \mid \ahyp - 1/\weibull, 1) \bigg\}
\end{split}
\\
\label{Proof_reduction-retailer_R}
&=
\frac{\bhyp^{1/\weibull}}{\ahyp-1} \max_{\sorder} \bigg\{\spayoffretailer(\wholesaleprice,\sorder \mid \ahyp) + \frac{\ahyp-1}{\ahyp - 1/\weibull} \svalueret_{\ttime + 1}(\ahyp + 1) \preddemand(\sorder \mid \ahyp - 1/\weibull, 1) + \svalueret_{\ttime + 1}(\ahyp) \survdemand (\sorder \mid \ahyp - 1/\weibull, 1) \bigg\}\\
&= 
\frac{\bhyp^{1/\weibull}}{\ahyp-1} \svalueretbreve_{\ttime}(\ahyp,\wholesaleprice)=\hvalueretbreve_{\ttime}(\ahyp,\bhyp,\wholesaleprice).
\end{align} 
The equality (b) follows from \cref{reduction-formulas}. The equality (c) follows from the changes of variable $\sorder = \order /\bhyp^{1/\weibull}$ and $\xvar  = \yvar/ \bhyp^{1/\weibull}$, which imply
\begin{align}
\label{eq:nonral_Distri_{1}}
\survdemand (\order \mid \ahyp, \bhyp) = 
\survdemand (\sorder \mid \ahyp,1) \text{ and } \int_{0}^{\order} \left( \bhyp + \yvar^{\weibull} \right)^{1/\weibull} \diff\preddemand(\yvar\mid \ahyp, \bhyp)= 
\bhyp^{1/\weibull}\int_{0}^{\sorder}(1 + \xvar^{\weibull})^{1/\weibull} \diff\preddemand(\xvar \mid \ahyp,1).
\end{align}
The equality (d) follows from $(1 + \sorder^{\weibull})^{1/\weibull} \survdemand (\sorder \mid \ahyp,1)= \survdemand (\sorder \mid \ahyp - 1/\weibull, 1)$ and
\begin{equation}
\label{eq:nonral_Distri_3}
\int_{0}^{\sorder}(1 + \xvar^{\weibull})^{1/\weibull} \diff\preddemand(\xvar \mid \ahyp,1) 
= 
\frac{\ahyp}{\ahyp - 1/\weibull} \preddemand(\sorder \mid \ahyp - 1/\weibull, 1).
\end{equation}
Finally, we recognize in \cref{Proof_reduction-retailer_R} the \rhs of \cref{eq:standardized-retailer}. 
Moreover, $\sorder^{*}_{\ttime}(\ahyp,\wholesaleprice)$ satisfies \cref{it-de:standardized-2} in \cref{de:standardized}, which implies that it is a maximizer in \cref{Proof_reduction-retailer_R}.
As a consequence, $\scaleorder_{\ttime}(\ahyp,\bhyp,\wholesaleprice)$ is a maximizer of \cref{Proof_reduction-retailer_B}.

Next, consider the manufacturer’s problem in \cref{eq:manufacturer-value-def}:
\begin{align}
\label{Proof_reduction-manufacturer_B}
\begin{split}
&\quad 
\max_{\wholesaleprice} \bigg\{ \payoffmanufacturer(\wholesaleprice,\scaleorder_{\ttime}(\ahyp,\bhyp,\wholesaleprice)) + \int_{0}^{\scaleorder_{\ttime}(\ahyp,\bhyp,\wholesaleprice)} \hvaluemanuf_{\ttime + 1}(\ahyp+1, \bhyp+\yvar^{\weibull})\, \diff\preddemand(\yvar \mid \ahyp, \bhyp)  \\
&\qquad\qquad+ \hvaluemanuf_{\ttime + 1}(\ahyp, \bhyp+(\scaleorder_{\ttime}(\ahyp,\bhyp,\wholesaleprice))^{\weibull})\, \survdemand (\scaleorder_{\ttime}(\ahyp,\bhyp,\wholesaleprice) \mid \ahyp, \bhyp) \bigg\}
\end{split}
\\
\begin{split}
&\overset{\textup{(e)}}{=} 
\max_{\wholesaleprice}\bigg\{(\wholesaleprice-\cost)\bhyp^{1/\weibull}\sorder^{*}_{\ttime}(\ahyp,\wholesaleprice) 
+ \int_{0}^{\bhyp^{1/\weibull}\sorder^{*}_{\ttime}(\ahyp,\wholesaleprice)}{\svaluemanuf_{\ttime + 1}(\ahyp + 1)\frac{(\bhyp + \yvar^{\weibull})^{1/\weibull}}{\ahyp}} \diff\preddemand(\yvar \mid \ahyp, \bhyp)  \\
&\qquad\qquad+ 
\svaluemanuf_{\ttime + 1}(\ahyp)\frac{(\bhyp + \bhyp(\sorder^{*}_{\ttime}(\ahyp,\wholesaleprice))^{\weibull})^{1/\weibull}}{\ahyp-1} \survdemand (\bhyp^{1/\weibull}\sorder^{*}_{\ttime}(\ahyp,\wholesaleprice)\mid \ahyp, \bhyp) \bigg\}
\end{split}
\\
\begin{split}
&= 
\max_{\wholesaleprice}\bigg\{(\wholesaleprice-\cost)\bhyp^{1/\weibull}\sorder^{*}_{\ttime}(\ahyp,\wholesaleprice) 
+ \frac{\svaluemanuf_{\ttime + 1}(\ahyp + 1)}{\ahyp} \int_{0}^{\bhyp^{1/\weibull}\sorder^{*}_{\ttime}(\ahyp,\wholesaleprice)}{(\bhyp + \yvar^{\weibull})^{1/\weibull}} \diff\preddemand(\yvar \mid \ahyp, \bhyp)  \\
&\qquad\qquad+ 
\svaluemanuf_{\ttime + 1}(\ahyp)\frac{\bhyp^{1/\weibull}}{\ahyp-1}(1 + (\sorder^{*}_{\ttime}(\ahyp,\wholesaleprice))^{\weibull})^{1/\weibull} \survdemand (\bhyp^{1/\weibull}\sorder^{*}_{\ttime}(\ahyp,\wholesaleprice)\mid \ahyp, \bhyp) \bigg\}
\end{split}
\\
\begin{split}
&\overset{\textup{(f)}}{=} 
\max_{\wholesaleprice}\bigg\{(\wholesaleprice-\cost)\bhyp^{1/\weibull}\sorder^{*}_{\ttime}(\ahyp,\wholesaleprice) 
+ \frac{\svaluemanuf_{\ttime + 1}(\ahyp + 1)}{\ahyp}  \frac{\ahyp }{\ahyp - 1/\weibull}\bhyp^{1/\weibull}\preddemand(\sorder^{*}_{\ttime}(\ahyp,\wholesaleprice) \mid \ahyp - 1/\weibull, 1)  \\
&\qquad\qquad+ 
\svaluemanuf_{\ttime + 1} (\ahyp) \frac{\bhyp^{1/\weibull}}{\ahyp-1}\survdemand (\sorder^{*}_{\ttime}(\ahyp,\wholesaleprice) \mid \ahyp - 1/\weibull, 1) \bigg\}
\end{split}
\\
 \begin{split}
  &= 
 \frac{\bhyp^{1/\weibull}}{\ahyp-1} \max_{\wholesaleprice}\bigg\{(\wholesaleprice-\cost)(\ahyp-1)\sorder^{*}_{\ttime}(\ahyp,\wholesaleprice) 
 + \svaluemanuf_{\ttime + 1}(\ahyp + 1)\frac{\ahyp-1}{\ahyp - 1/\weibull}\preddemand(\sorder^{*}_{\ttime}(\ahyp,\wholesaleprice) \mid \ahyp - 1/\weibull, 1)  \\
 &\qquad\qquad+ 
 \svaluemanuf_{\ttime + 1} (\ahyp)\survdemand (\sorder^{*}_{\ttime}(\ahyp,\wholesaleprice) \mid \ahyp - 1/\weibull, 1) \bigg\}
 \end{split}
 \\
&\overset{\textup{(g)}}{=} 
\frac{\bhyp^{1/\weibull}}{\ahyp-1} \svaluemanuf_{\ttime}(\ahyp)= \hvaluemanuf_{\ttime}(\ahyp,\bhyp).
\end{align}
Equality~(e) stems from \cref{eq:ut-zt*,reduction-formulas}. Equality~(f) can be obtained by adapting  \cref{eq:nonral_Distri_{1},eq:nonral_Distri_3}, and Equality~(g) follows from dynamic programming for $\svaluemanuf_{\ttime}(\ahyp)$. 
Finally, we see that $\scalewholesaleprice_{\ttime}(\ahyp,\bhyp)$ is a maximizer in  \cref{Proof_reduction-manufacturer_B} because $\swprice_{\ttime}^{*}(\ahyp)$ satisfies \cref{eq:standardized-manufacturer} in \cref{de:standardized}. 
Therefore, 
\begin{equation*}
\hvalueretbreve_{\ttime}(\ahyp,\bhyp,\scalewholesaleprice_{\ttime}(\ahyp,\bhyp))
=\hvalueretbreve_{\ttime}(\ahyp,\bhyp,\swprice^{*}_{\ttime}(\ahyp))
=\frac{\bhyp^{1/\weibull}}{\ahyp-1}\svalueretbreve _{\ttime}(\ahyp,\swprice_{\ttime}^{*}(\ahyp))
=\frac{\bhyp^{1/\weibull}}{\ahyp-1}\svalueret_{\ttime}(\ahyp)
=\hvalueret_{\ttime}(\ahyp,\bhyp).
\end{equation*}
\end{proof}

\begin{proposition}
\label{pr:unique-SMPS}  
Let $\lnewsv(\yvar) = \yvar^{\weibull}$. 
If the \ac{SMPS} $\parens*{\swprice_{\ttime}^{*}, \sorder_{\ttime}^{*}}_{1 \le \ttime \le \horizon}$ of $\contract$ is unique, then there exists a unique \ac{MPE} $\parens*{\wholesaleprice_{\ttime}, \order_{\ttime}}_{1 \le \ttime \le \horizon}$ of $\gameenv$. 
This \ac{MPE} is  defined by \eqref{eq:traduction} and its value functions satisfy \cref{eq:reduction-formulas-theorem_a}.
\end{proposition}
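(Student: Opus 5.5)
Existence is already given by \cref{th:main}\ref{it:th:main-c}: the pair defined in \eqref{eq:traduction} from the unique \ac{SMPS} is an \ac{MPE} with value functions \cref{eq:reduction-formulas-theorem_a}. What remains is uniqueness. The plan is to show that every \ac{MPE} of $\gameenv$ standardizes into an \ac{SMPS} of $\contract$, which is unique by hypothesis, and to recover the original \ac{MPE} from it.

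Fix an arbitrary \ac{MPE} $\parens*{\wholesaleprice_{\ttime},\order_{\ttime}}$ with value functions $\parens*{\Valuemanuf_{\ttime},\Valueretbreve_{\ttime},\Valueret_{\ttime}}$. We argue by backward induction on $\ttime$, from $\horizon+1$ down to $1$. The inductive claim is that the value functions have the scaling form
\[
\Valuemanuf_{\ttime+1}(\ahyp,\bhyp)=\frac{\bhyp^{1/\weibull}}{\ahyp-1}\svaluemanuf_{\ttime+1}(\ahyp),
\qquad
\Valueret_{\ttime+1}(\ahyp,\bhyp)=\frac{\bhyp^{1/\weibull}}{\ahyp-1}\svalueret_{\ttime+1}(\ahyp),
\]
where $\svaluemanuf_{\ttime+1}$ and $\svalueret_{\ttime+1}$ are the value functions of the unique \ac{SMPS}. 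The base case $\ttime+1=\horizon+1$ is trivial, since all value functions vanish there.

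For the inductive step at period $\ttime$ we reuse the chain of equalities (b)--(d) from the proof of \cref{th:main}\ref{it:th:main-c}, now read in reverse. Substitute the scaled continuation values into the retailer's objective in \cref{eq:retailer-value-def} and change variables $\order=\bhyp^{1/\weibull}\sorder$. The retailer's objective then equals $\frac{\bhyp^{1/\weibull}}{\ahyp-1}$ times the standardized objective in \cref{eq:standardized-retailer}. Hence $\order$ is a maximizer if and only if $\order/\bhyp^{1/\weibull}$ is a standardized maximizer, and $\Valueretbreve_{\ttime}$ has the scaling form. Define $\sorder_{\ttime}(\ahyp,\wholesaleprice)\coloneqq \bhyp_{1}^{-1/\weibull}\order_{\ttime}(\ahyp,\bhyp_{1},\wholesaleprice)$, evaluated at the reference point $\bhyp=\bhyp_1$.

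By the same scaling, the manufacturer's objective at any $\bhyp$ is $\frac{\bhyp^{1/\weibull}}{\ahyp-1}$ times the standardized manufacturer objective, evaluated along the standardized response $\order_{\ttime}(\ahyp,\bhyp,\cdot)/\bhyp^{1/\weibull}$. Therefore, at each fixed $\bhyp$, the pair $\bigl(\wholesaleprice_{\ttime}(\ahyp,\bhyp),\order_{\ttime}(\ahyp,\bhyp,\cdot)/\bhyp^{1/\weibull}\bigr)$, completed by the \ac{SMPS} at later periods, satisfies \cref{de:standardized} at period $\ttime$. This gives a standardized solution that agrees with the unique one from period $\ttime+1$ onward. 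It also yields the standardized value functions, which establishes the scaling form of $\Valuemanuf_{\ttime}$ and $\Valueret_{\ttime}$.

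We then invoke uniqueness of the \ac{SMPS}. Since the later periods are already fixed, uniqueness forces, for every $\bhyp$,
\[
\wholesaleprice_{\ttime}(\ahyp,\bhyp)=\swprice^{*}_{\ttime}(\ahyp),
\qquad
\order_{\ttime}(\ahyp,\bhyp,\wholesaleprice)=\bhyp^{1/\weibull}\sorder^{*}_{\ttime}(\ahyp,\wholesaleprice),
\]
which is exactly \eqref{eq:traduction}, with value functions given by \cref{eq:reduction-formulas-theorem_a}.

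The main obstacle is the logical step of applying uniqueness period by period. The hypothesis asserts uniqueness of the whole \ac{SMPS}, not of each period's stage problem. The argument must be arranged so that a different behaviour at some period $\ttime$ and some $\bhyp$ produces a second \ac{SMPS} in its entirety: use the \ac{MPE}-derived choice at period $\ttime$ together with the unique \ac{SMPS} continuation at periods after $\ttime$, and then complete the earlier periods $1,\dots,\ttime-1$ by the existence construction of \cref{th:main}\ref{it:th:main-a}. That construction works for any given continuation, because its backward-induction step only needs well-defined period-$(\ttime+1)$ value functions.

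A second point needs care. The \ac{MPE} strategies at different values of $\bhyp$ could in principle select different standardized maximizers. Each such selection must give a valid \ac{SMPS}, and measurability has to be checked. Measurability is inherited from the \ac{MPE} strategy when $\bhyp$ is fixed, so each fixed-$\bhyp$ slice yields a legitimate \ac{SMPS}. Uniqueness then forces all slices to coincide, which makes the \ac{MPE} independent of the choice of reference $\bhyp$.
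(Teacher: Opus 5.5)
Your proof has the same backbone as the paper's: backward induction on $\ttime$, using the change of variables $\order=\bhyp^{1/\weibull}\sorder$ from the proof of \cref{th:main}\ref{it:th:main-c}. Once the period-$(\ttime+1)$ value functions have the scaled form, this shows that both period-$\ttime$ objectives are $\bhyp^{1/\weibull}/(\ahyp-1)$ times their standardized counterparts.

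The difference is in how uniqueness is used. The paper simply asserts that uniqueness of the \ac{SMPS} forces the standardized maximizers $\order_{\ttime}(\ahyp,\bhyp,\wholesaleprice)/\bhyp^{1/\weibull}$ and $\wholesaleprice_{\ttime}(\ahyp,\bhyp)$ to equal $\sorder^{*}_{\ttime}(\ahyp,\wholesaleprice)$ and $\swprice^{*}_{\ttime}(\ahyp)$. This tacitly reads uniqueness of the whole \ac{SMPS} as uniqueness of every stage argmax.

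Your competing \ac{SMPS} supplies exactly that missing implication. It uses the $\bhyp$-slice of the \ac{MPE} at period $\ttime$, the unique \ac{SMPS} afterwards, and the backward-induction existence argument for the earlier periods. It also explains why slices at different $\bhyp$ must agree, which the $\bhyp$-free scaling of $\Valuemanuf_{\ttime}$ and $\Valueret_{\ttime}$ requires. So your claim that the scaling form holds at period $\ttime$ should come after, not before, the appeal to uniqueness.

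The cost is that your completion step rests on the existence construction of \cref{th:main}\ref{it:th:main-a}. That construction needs the compact price set $\wprices=[\underline{\wholesaleprice},\overline{\wholesaleprice}]$ to guarantee a manufacturer maximizer. It therefore does not directly cover $\wprices=\reals_{++}$, which is exactly where the paper invokes this proposition, in \cref{th:exponential}.

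In that case, however, the paper's shortcut is legitimate. The proof of \cref{prop:Norma-properties-exp-c>0} shows that each stage problem has a unique maximizer: the retailer's objective is strictly concave, and the manufacturer's first-order condition has a single zero below $\retailprice+\svalueret_{\ttime+1}(\ahyp+1)-\svalueret_{\ttime+1}(\ahyp)$. So no competing \ac{SMPS} is needed there.
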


\begin{proof}
Assume that the \ac{SMPS} $\parens*{\swprice_{\ttime}^{*},\sorder_{\ttime}^{*}}_{1 \le \ttime \le \horizon}$ of $\contract$ with standardized value functions $\parens*{\svalueret_{\ttime},\svalueretbreve_{\ttime},\svaluemanuf_{\ttime}}_{1 \le \ttime \le \horizon+1}$ is unique. 
Then, by \cref{th:main}, the pair  $(\scalewholesaleprice_{\ttime},\scaleorder_{\ttime})$ defined in \cref{eq:traduction} is an \ac{MPE} of $\gameenv$ with value functions  $\parens*{\hvalueret_{\ttime},\hvalueretbreve_{\ttime},\hvaluemanuf_{\ttime}}_{1 \le \ttime \le \horizon+1}$, (following the notations used in the proof of \cref{th:main}).
Let $\parens*{\wholesaleprice_{\ttime}, \order_{\ttime}}_{1 \le \ttime \le \horizon}$ be any \ac{MPE} of $\gameenv$ with value functions  $\parens*{\Valueret_{\ttime},\Valueretbreve_{\ttime},\Valuemanuf_{\ttime}}_{1 \le \ttime \le \horizon+1}$. 
We show, by backward induction, that the two \acp{MPE}s and their value functions coincide. 
At period $\horizon+1$, the value functions coincide because they are all equal to zero.
We now assume that, for some 
$\ttime \in \braces*{1, \dots, \horizon}$ the value functions are equal for $\ttime + 1, \dots,\horizon+1$, and we prove  that $(\wholesaleprice_{\ttime}, \order_{\ttime}) = (\scalewholesaleprice_{\ttime},\scaleorder_{\ttime})$, which implies that the value functions also coincide at $\ttime$. 
Replacing $\hvalueret_{\ttime + 1}(\argdot,\argdot)$ with  $\Valueret_{\ttime + 1}(\argdot,\argdot)$ in \cref{Proof_reduction-retailer_R}, we obtain
\begin{equation}
\begin{split}
&\quad \max_{\order}\bigg\{ \payoffretailer(\wholesaleprice,\order \mid \ahyp,\bhyp) +  \int_{0}^{\order} \Valueret_{\ttime + 1} (\ahyp+1,\bhyp+\yvar^{\weibull}) \diff\preddemand(\yvar \mid \ahyp, \bhyp) +\Valueret_{\ttime + 1} ( \ahyp,\bhyp+\order^{\weibull})  \survdemand (\order \mid \ahyp, \bhyp) \bigg\}\\
&= 
\frac{\bhyp^{1/\weibull}}{\ahyp-1} \max_{\sorder} \bigg\{\spayoffretailer(\wholesaleprice,\sorder \mid \ahyp) + \frac{\ahyp-1}{\ahyp - 1/\weibull} \svalueret_{\ttime + 1}(\ahyp + 1) \preddemand(\sorder \mid \ahyp - 1/\weibull, 1) + \svalueret_{\ttime + 1}(\ahyp) \survdemand (\sorder \mid \ahyp - 1/\weibull, 1) \bigg\}.
\end{split}
\end{equation} 
We know that $\order_{\ttime}(\ahyp,\bhyp,\wholesaleprice)$ is a maximizer on the left-hand side of the equation.
Therefore, $\order_{\ttime}(\ahyp,\bhyp,\wholesaleprice)/\bhyp^{1/\weibull}$ is a maximizer on the right-hand side of the equation. 
Uniqueness of the \ac{SMPS} implies that 
\begin{equation}
\frac{\order_{\ttime}(\ahyp,\bhyp,\wholesaleprice)}{\bhyp^{1/\weibull}}=\sorder^{*}_{\ttime}(\ahyp,\wholesaleprice)\ \iff \order_{\ttime}(\ahyp,\bhyp,\wholesaleprice)
=\bhyp^{1/\weibull}\sorder^{*}_{\ttime}(\ahyp,\wholesaleprice)
=\scaleorder_{\ttime}(\ahyp,\bhyp,\wholesaleprice).
\end{equation}
Moreover, 
\begin{equation}
\Valueretbreve_{\ttime}(\ahyp,\bhyp,\wholesaleprice)
= 
\frac{\bhyp^{1/\weibull}}{\ahyp-1} \svalueretbreve_{\ttime}(\ahyp,\wholesaleprice)
=
\hvalueretbreve_{\ttime}(\ahyp,\bhyp,\wholesaleprice).
\end{equation}
We now consider the problem of the manufacturer at period $\ttime$. 
Replacing $\hvaluemanuf_{\ttime + 1}(\argdot,\argdot)$ by $\Valuemanuf_{\ttime + 1}(\argdot,\argdot)$ in \cref{Proof_reduction-manufacturer_B}, we obtain
\begin{equation}
\begin{split}
&  \max_{\wholesaleprice} \bigg\{ \payoffmanufacturer(\wholesaleprice,\order_{\ttime}(\ahyp,\bhyp,\wholesaleprice)) 
+ \int_0^{\order_{\ttime}(\ahyp,\bhyp,\wholesaleprice)} \Valuemanuf_{\ttime + 1}(\ahyp+1, \bhyp+\yvar^{\weibull})\, \diff\preddemand(\yvar \mid \ahyp, \bhyp) \\
&\qquad + \Valuemanuf_{\ttime + 1}(\ahyp,\bhyp{+}(\order_{\ttime}(\ahyp,\bhyp,\wholesaleprice))^{\weibull})\, \survdemand (\order_{\ttime}(\ahyp,\bhyp,\wholesaleprice) \mid \ahyp, \bhyp) \bigg\} 
\\
&=  
\frac{\bhyp^{1/\weibull}}{\ahyp-1} \max_{\wholesaleprice} \bigg\{\spayoffmanufacturer(\wholesaleprice,\sorder^{*}_{\ttime}(\ahyp,\wholesaleprice)\mid \ahyp)  + \frac{\ahyp-1}{\ahyp - 1/\weibull} \svaluemanuf_{\ttime + 1}(\ahyp + 1)\preddemand(\sorder^{*}_{\ttime}(\ahyp,\wholesaleprice) \mid \ahyp - 1/\weibull, 1)\\
& \hspace{1cm}+ \svaluemanuf_{\ttime + 1}(\ahyp) \survdemand (\sorder^{*}_{\ttime}(\ahyp,\wholesaleprice) \mid \ahyp - 1/\weibull, 1) \bigg\}. 
\end{split}
\end{equation}
Any maximizer on the left-hand side of the equation is a maximizer on the right-hand side. 
By uniqueness of $\swprice^{*}_{\ttime}(\ahyp)$, it follows that, for every $\bhyp\in \Bset$, 
$\wholesaleprice_{\ttime}(\ahyp,\bhyp)=\swprice^{*}_{\ttime}(\ahyp)
=\scalewholesaleprice_{\ttime}(\ahyp,\bhyp)$.
Moreover, because the value function satisfies dynamic programming, we deduce the equality of the value functions at period $\ttime$:
\begin{align}
\Valuemanuf_{\ttime}(\ahyp,\bhyp) 
&= \frac{\bhyp^{1/\weibull}}{\ahyp-1}\svaluemanuf_{\ttime}(\ahyp)
= \hvaluemanuf_{\ttime}(\ahyp,\bhyp) 
, \text{ and } 
\Valueret_{\ttime}(\ahyp,\bhyp) 
= \frac{\bhyp^{1/\weibull}}{\ahyp-1} \svalueret_{\ttime}(\ahyp)
=
\hvalueret_{\ttime}(\ahyp,\bhyp).
\end{align}
This concludes the proof.
\end{proof}

\subsection{Characterization of \ac{MPE} with  Exponential Demand}
\label{suse:Proof-Exponential}

We first use backward induction to establish  existence and uniqueness of the equilibrium; then we derive the characterization of \cref{th:exponential}.

\begin{proposition}
\label{prop:Norma-properties-exp-c>0}
Assume $ \lnewsv(\yvar) = \yvar $, $\ahyp \in \Aset$, $\wprices=\reals_{++}$, and $\sorders=\reals_{+}$. 
Then there exists a unique \ac{SMPS} $ (\swprice^{*}_{\ttime}(\ahyp), \sorder^{*}_{\ttime}(\ahyp,\swprice))_{\ttime\in\{1,\dots,\horizon\}}$ and  standardized value functions $\svalueret_{\ttime}(\ahyp),\svalueretbreve_{\ttime}(\ahyp,\swprice),\svaluemanuf_{\ttime}(\ahyp)$ that satisfy the following conditions  for all $ \ttime \in \{1, \dots, \horizon\}$  and $\ahyp \in \Aset$:

\begin{enumerate}[label={\rm (\roman*)}, ref=(\roman*)]
\item 
\label{it:w*-bou-c>0}
$\swprice_{\ttime}^{*}(\ahyp)\in \parens*{0,\retailprice+ \svalueret_{\ttime + 1}(\ahyp+1)-\svalueret_{\ttime + 1}(\ahyp)}$,

\item 
\label{it:z*-bou-c>0}
$0<\sordertwo_{\ttime}^{*}(\ahyp)<\infty$,

\item
\label{it:z>0-c>0}
$\retailprice+ \svalueret_{\ttime}(\ahyp+1)-\svalueret_{\ttime}(\ahyp)>0$, 

\item
\label{it:w<wmax-c>0}
$\retailprice-\cost+ \svalueret_{\ttime}(\ahyp+1)+\svaluemanuf_{\ttime}(\ahyp+1)>\svalueret_{\ttime}(\ahyp)+\svaluemanuf_{\ttime}(\ahyp)$.
\end{enumerate}
\end{proposition}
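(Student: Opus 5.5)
The plan is a backward induction on $\ttime$ in which conditions \ref{it:z>0-c>0}--\ref{it:w<wmax-c>0}, asserted at period $\ttime+1$, serve as the induction hypothesis. They are exactly what makes the period-$\ttime$ problems well posed with unique interior solutions, and the induction step then consists in re-establishing them at $\ttime$. At $\ttime=\horizon+1$ all standardized values vanish, so \ref{it:z>0-c>0} reads $\retailprice>0$ and \ref{it:w<wmax-c>0} reads $\retailprice>\cost$; both hold by assumption.

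\textbf{Retailer's problem.} With $\weibull=1$ we have $\preddemand(\sorder\mid\ahyp,1)=1-(1+\sorder)^{-\ahyp}$, the factor $(\ahyp-1)/(\ahyp-1/\weibull)$ equals $1$, and $(\ahyp-1)\Expect[\min(\Demand,\sorder)\mid \ahyp,1]=1-(1+\sorder)^{1-\ahyp}$. Set
\[
K_{\ttime}(\ahyp)\coloneqq\retailprice+\svalueret_{\ttime+1}(\ahyp+1)-\svalueret_{\ttime+1}(\ahyp).
\]
The retailer's standardized objective then collapses to
\[
K_{\ttime}(\ahyp)\bigl(1-(1+\sorder)^{1-\ahyp}\bigr)-(\ahyp-1)\swprice\sorder+\svalueret_{\ttime+1}(\ahyp).
\]
By \ref{it:z>0-c>0} at $\ttime+1$ we have $K_{\ttime}(\ahyp)>0$, so this objective is strictly concave in $\sorder$. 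Its maximizer is unique and is given by \cref{eq:z-t^{*}(a)_c>0}. In particular no tie-breaking issue arises, and $\svalueretbreve_{\ttime}$ is explicit.

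\textbf{Manufacturer's problem.} Define
\[
\Delta^{\manufacturer}_{\ttime}(\ahyp)\coloneqq\svaluemanuf_{\ttime+1}(\ahyp+1)-\svaluemanuf_{\ttime+1}(\ahyp).
\]
For $\swprice<K_{\ttime}(\ahyp)$ the map $\swprice\mapsto\sorder$ is a decreasing bijection onto $(0,\infty)$, while $\swprice\ge K_{\ttime}(\ahyp)$ yields $\sorder=0$ and payoff $\svaluemanuf_{\ttime+1}(\ahyp)$. I therefore change variables and optimize over $\sorder\ge0$ with $\swprice=K(1+\sorder)^{-\ahyp}$, where $K=K_{\ttime}(\ahyp)$. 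The manufacturer's objective becomes
\[
f(\sorder)=(\ahyp-1)\bigl(K(1+\sorder)^{-\ahyp}-\cost\bigr)\sorder+\Delta^{\manufacturer}_{\ttime}(\ahyp)\bigl(1-(1+\sorder)^{1-\ahyp}\bigr)+\svaluemanuf_{\ttime+1}(\ahyp),
\]
and its derivative factors as $f'(\sorder)=(\ahyp-1)(1+\sorder)^{-\ahyp-1}g(\sorder)$, where
\[
g(\sorder)=\bigl(K+\Delta^{\manufacturer}_{\ttime}(\ahyp)\bigr)(1+\sorder)-\ahyp K\sorder-\cost(1+\sorder)^{\ahyp+1}.
\]
Since $\cost>0$ and $\ahyp+1>1$, $g$ is strictly concave with $g(\sorder)\to-\infty$. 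Moreover $g(0)=K+\Delta^{\manufacturer}_{\ttime}(\ahyp)-\cost>0$ is exactly \ref{it:w<wmax-c>0} at $\ttime+1$. Hence $g$ has a unique root $\sordertwo^{*}_{\ttime}(\ahyp)\in(0,\infty)$, and $f$ attains its unique global maximum there; this also beats the no-trade region $\sorder=0$. This yields \ref{it:z*-bou-c>0} and $\swprice^{*}_{\ttime}(\ahyp)=K(1+\sordertwo^{*}_{\ttime}(\ahyp))^{-\ahyp}\in(0,K)$, which is \ref{it:w*-bou-c>0}. Rewriting $g=0$ in terms of $\swprice$ gives the fixed-point equation \cref{eq:optimal_wholes_exp_c>0}, and uniqueness of the \ac{SMPS} at period $\ttime$ follows.

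\textbf{Main obstacle: propagating \ref{it:z>0-c>0} and \ref{it:w<wmax-c>0} to period $\ttime$.} Substituting the first-order conditions into the recursions gives per-period increments
\[
\svalueret_{\ttime}(\ahyp)-\svalueret_{\ttime+1}(\ahyp)=\retailprice-\swprice^{*}_{\ttime}(\ahyp)\bigl(1+\ahyp\sordertwo^{*}_{\ttime}(\ahyp)\bigr)
\]
(note that $(1+\sorder)^{1-\ahyp}=\swprice(1+\sorder)/K$), together with the analogous manufacturer increment; iterating these yields \cref{eq:fs-t_exp,eq:fr-t_exp}. For \ref{it:z>0-c>0} at $\ttime$, I would show that $\retailprice$ minus this increment at $\ahyp$, plus the increment at $\ahyp+1$, combined with $K_{\ttime+1}$-type terms, stays positive. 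I would first try a revealed-preference comparison: at state $\ahyp$ the retailer can mimic an order policy, and the increment is bounded by the current-period surplus, which lies in $(0,\retailprice)$ because $\swprice^{*}(1+\ahyp\sordertwo^{*})<K$. For \ref{it:w<wmax-c>0}, I would treat the joint surplus $\svalueret+\svaluemanuf$ as a channel value and argue that at state $\ahyp+1$ the pair can replicate the state-$\ahyp$ path up to a per-period loss smaller than $\retailprice-\cost$. This is where I expect the real work to lie; failing a clean argument, I would prove the inequalities directly from the explicit sums and the fixed-point bounds on $\sordertwo^{*}$ as functions of $\ahyp$.
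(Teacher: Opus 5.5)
\textbf{Verdict: there is a genuine gap.} Your period-$\ttime$ analysis is correct, and it is somewhat cleaner than the paper's. Reparametrizing the manufacturer's problem by the induced order $\sorder$ turns existence and uniqueness into the sign pattern of the strictly concave $g$, and $g(0)>0$ is exactly (iv) at $\ttime+1$. The paper instead differentiates in $\swprice$ and needs a two-case sign analysis of $\oneplus_{\ttime}$.

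The gap is the step you flag as the main obstacle, which is the actual content of the induction: you never establish (iii) and (iv) at period $\ttime$, and the tools you propose would not get you there.
\begin{itemize}
\item Your increment identity is misindexed. Substituting the retailer's first-order condition gives $\svalueret_{\ttime}(\ahyp)-\svalueret_{\ttime+1}(\ahyp+1)=\retailprice-\swprice^{*}_{\ttime}(\ahyp)\bigl(1+\ahyp\sordertwo^{*}_{\ttime}(\ahyp)\bigr)$; this is what telescopes into the sums over $\ahyp+i$. The quantity lying in $(0,K)$ is the other increment, $\svalueret_{\ttime}(\ahyp)-\svalueret_{\ttime+1}(\ahyp)=K-\swprice^{*}_{\ttime}(\ahyp)\bigl(1+\ahyp\sordertwo^{*}_{\ttime}(\ahyp)\bigr)$. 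The former is only guaranteed to be $<\retailprice$.
\item The ``channel replication'' idea for (iv) is not available. $\svalueret_{\ttime}+\svaluemanuf_{\ttime}$ is a sum of equilibrium values, not the value of a joint optimization, so no revealed-preference argument lets ``the pair'' follow a chosen path.
\item The fallback through the explicit sums has no evident handle, since $\svalueret_{\ttime}$ need not be monotone in $\ahyp$.
\end{itemize}

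\textbf{The missing idea} is a pair of unilateral no-trade deviations at state $\ahyp+1$ in period $\ttime$.
\begin{itemize}
\item The retailer can order $0$ at any price, so $\svalueret_{\ttime}(\ahyp+1)\ge\svalueret_{\ttime+1}(\ahyp+1)$.
\item The manufacturer can set $\swprice\ge K_{\ttime}(\ahyp+1)$ and induce a zero order, so $\svaluemanuf_{\ttime}(\ahyp+1)\ge\svaluemanuf_{\ttime+1}(\ahyp+1)$. In your notation this is $f(\sordertwo^{*})>f(0)$ at state $\ahyp+1$.
\end{itemize}
With the correctly indexed identity, (iii) is immediate:
\[
\retailprice+\svalueret_{\ttime}(\ahyp+1)-\svalueret_{\ttime}(\ahyp)\ge\swprice^{*}_{\ttime}(\ahyp)\bigl(1+\ahyp\sordertwo^{*}_{\ttime}(\ahyp)\bigr)>0 .
\]
For (iv), put $S_{\ttime}\coloneqq\svalueret_{\ttime}+\svaluemanuf_{\ttime}$ and
\[
D\coloneqq K_{\ttime}(\ahyp)+\Delta^{\manufacturer}_{\ttime}(\ahyp)=\retailprice+S_{\ttime+1}(\ahyp+1)-S_{\ttime+1}(\ahyp),
\]
so that $D>\cost$ by (iv) at $\ttime+1$. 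Sum the two period-$\ttime$ recursions, writing $\sordertwo^{*}\coloneqq\sordertwo^{*}_{\ttime}(\ahyp)$. This gives $S_{\ttime}(\ahyp)=S_{\ttime+1}(\ahyp)+D\,\preddemand(\sordertwo^{*}\mid\ahyp-1,1)-(\ahyp-1)\cost\sordertwo^{*}$, and hence
\[
\retailprice-\cost+S_{\ttime}(\ahyp+1)-S_{\ttime}(\ahyp)=\bigl[S_{\ttime}(\ahyp+1)-S_{\ttime+1}(\ahyp+1)\bigr]+D\,\survdemand(\sordertwo^{*}\mid\ahyp-1,1)-\cost+(\ahyp-1)\cost\sordertwo^{*}.
\]
The right-hand side is strictly positive for three reasons:
\begin{itemize}
\item $D\,\survdemand>\cost\,\survdemand$, because $\survdemand(\sordertwo^{*}\mid\ahyp-1,1)>0$;
\item the bracket is nonnegative by the two no-trade inequalities;
\item $\preddemand(\sorder\mid\ahyp-1,1)=1-(1+\sorder)^{1-\ahyp}\le(\ahyp-1)\sorder$ by Bernoulli's inequality.
\end{itemize}
This is exactly the paper's argument; its $\auxf\le 0$ is the last inequality.
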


\begin{proof}[Proof of \cref{prop:Norma-properties-exp-c>0}]
We use backward induction to construct the sequence of functions $ (\swprice^{*}_{\ttime}(\ahyp),$ $\sorder^{*}_{\ttime}(\ahyp,\swprice),\svalueret_{\ttime}(\ahyp),\svalueretbreve_{\ttime}(\ahyp,\swprice),\svaluemanuf_{\ttime}(\ahyp))_{\ttime\in\{1,\dots,\horizon\}}$, and to prove that they satisfy \cref{it:w*-bou-c>0,it:z*-bou-c>0,it:z>0-c>0,it:w<wmax-c>0}.
More precisely, we prove that, if at period $\ttime+1$ the value functions are well defined  and satisfy \cref{it:z>0-c>0} and \cref{it:w<wmax-c>0}, then it is possible to define strategies and  value functions that satisfy \cref{it:w*-bou-c>0,it:z*-bou-c>0,it:z>0-c>0,it:w<wmax-c>0} at period $\ttime$.

We start the induction at period $\horizon+1$, where the value functions are well defined and identically equal to $0$. 
Because $\retailprice>0$ and $\retailprice-\cost>0$,
\cref{it:z>0-c>0,it:w<wmax-c>0} are true for $\ttime=\horizon+1$.

We now assume that for some $ \ttime \leq \horizon$ and $\ahyp\in \Aset$,  $\svalueret_{\ttime+1}(\ahyp),\svalueretbreve_{\ttime+1}(\ahyp,\swprice),\svaluemanuf_{\ttime+1}(\ahyp)$ are well defined and satisfy \cref{it:z>0-c>0,it:w<wmax-c>0}:
\begin{align}
\label{eq:Induction-for_z>0}
\retailprice+ \svalueret_{\ttime + 1}(\ahyp+1)-\svalueret_{\ttime + 1}(\ahyp)
&>0,\\
\label{eq:Induction-for_w}
\retailprice-\cost+ \svalueret_{\ttime + 1}(\ahyp+1)+\svaluemanuf_{\ttime + 1}(\ahyp+1)
&>\svalueret_{\ttime + 1}(\ahyp)+\svaluemanuf_{\ttime + 1}(\ahyp).
\end{align}
We first construct $\sorder^{*}_{\ttime}(\ahyp,\swprice)$ and $\svalueretbreve_{\ttime}(\ahyp,\swprice)$ by analyzing the retailer’s payoff and choices.
Consider the expression in \cref{eq:standardized-retailer} and define
\begin{equation}
\label{eq:Reta-Payof-t-c>0}
\uvalueret_{\ttime}(\swprice,\sorder \mid \ahyp) \coloneqq \retailprice\preddemand(\sorder \mid \ahyp-1,1) - (\ahyp-1)\swprice\sorder  + \svalueret_{\ttime + 1}(\ahyp + 1) \preddemand(\sorder \mid \ahyp - 1,1) 
+ \svalueret_{\ttime + 1}(\ahyp) \survdemand (\sorder \mid \ahyp - 1,1). 
\end{equation}
If we define  
\begin{equation}
\label{eq:p-bar}
\overline{\retailprice}_{\ttime + 1}(\ahyp) \coloneqq \retailprice+\svalueret_{\ttime + 1}(\ahyp+1)-\svalueret_{\ttime + 1}(\ahyp),    
\end{equation}
and we use the equality 
\begin{equation}
\label{ewq:G-G-prime}   \preddemand'(\sorder \mid \ahyp-1,1)=(\ahyp-1)\survdemand (\sorder \mid \ahyp,1), 
\end{equation}
then we get 
\begin{equation}
\label{eq:partial-M-R}
\frac{\partial \uvalueret_{\ttime}(\swprice,\sorder \mid \ahyp)}{\partial \sorder}   
=
\left(\ahyp-1\right)\left[ \survdemand (\sorder \mid \ahyp,1)  \overline{\retailprice}_{\ttime + 1}(\ahyp)-\swprice \right].
\end{equation}
By the inductive hypothesis of \cref{it:z>0-c>0}, we know that $\overline{\retailprice}_{\ttime + 1}(\ahyp)>0$; hence, the partial derivative in \cref{eq:partial-M-R} is strictly decreasing in $\sorder$. 
Moreover, the limit of this derivative as $\sorder \to 0$ is $(\ahyp-1)\left[\overline{\retailprice}_{\ttime + 1}(\ahyp)-\swprice\right]$, whereas the limit as $\sorder \to +\infty$ is $-(\ahyp-1)\swprice$.

We have two cases. If $\swprice < \overline{\retailprice}_{\ttime + 1}(\ahyp)$, then the limit at $0$ of the above partial derivative is positive, and the equation $\partial \uvalueret_{\ttime}(\swprice,\sorder \mid \ahyp)/\partial \sorder = 0$ is equivalent to
\begin{equation}
\label{eq:optimal-order-t-c>0}
\survdemand(\sorder \mid \ahyp,1) = \frac{\swprice}{\overline{\retailprice}_{\ttime + 1}(\ahyp)},
\end{equation}
whose unique solution is
\begin{equation}
\label{eq:z*-t-c>0}
\sorder_{\ttime}^{*}(\ahyp,\swprice)  = \left(\frac{\overline{\retailprice}_{\ttime + 1}(\ahyp)}{\swprice}  \right)^{1/\ahyp} - 1.
\end{equation}

\noindent If $\swprice\geq \overline{\retailprice}_{\ttime + 1}(\ahyp)$, then the maximum is attained at zero. Summarizing both cases, we obtain 
\begin{equation}
\label{z*-t-c>0_global}
\sorder_{\ttime}^{*}(\ahyp,\swprice)=\max\left(\left(\frac{\overline{\retailprice}_{\ttime + 1}(\ahyp)}{\swprice}  \right)^{1/\ahyp} - 1,0\right), \quad \text{for all }\swprice \in \wprices.
\end{equation}
We can now define the retailer's value function at period $\ttime$ as 
$\svalueretbreve_{\ttime}(\ahyp,\swprice)\coloneqq \uvalueret_{\ttime}(\swprice,\sorder_{\ttime}^{*}(\ahyp,\swprice) \mid \ahyp).
$

We now turn to the manufacturer and prove that  $\swprice^{*}_{\ttime}(\ahyp)$ exists, is unique, and satisfies \cref{it:w*-bou-c>0}. 
Consider the expression in \cref{eq:standardized-manufacturer} and define
\begin{equation}
\label{eq:nor-pro-supl-t-c>0}
\uvaluemanuf_{\ttime}(\swprice\mid \ahyp) \coloneqq (\ahyp-1)(\swprice-\cost)\sorder^{*}_{\ttime}(\ahyp,\swprice)  + \svaluemanuf_{\ttime + 1}(\ahyp + 1) \preddemand(\sorder^{*}_{\ttime}(\ahyp,\swprice)\mid \ahyp - 1,1) + \svaluemanuf_{\ttime + 1}(\ahyp) \survdemand (\sorder^{*}_{\ttime}(\ahyp,\swprice)\mid \ahyp - 1,1). 
\end{equation}

If $\swprice\geq \overline{\retailprice}_{\ttime + 1}(\ahyp)$, then  \cref{z*-t-c>0_global,eq:nor-pro-supl-t-c>0} imply $\uvaluemanuf_{\ttime}(\swprice\mid \ahyp)=\svaluemanuf_{\ttime + 1}(\ahyp).$
If $ \swprice < \overline{\retailprice}_{\ttime + 1}(\ahyp)$, then, by \cref{ewq:G-G-prime},  
\begin{equation}
\label{eq:dev-nor-pro-supl-t-c>0}
\begin{split}
\frac{\partial \uvaluemanuf_{\ttime}(\swprice\mid \ahyp)}{\partial \swprice}   
&= (\ahyp-1) \left[\sorder^{*}_{\ttime}(\ahyp,\swprice)+   (\swprice-\cost)\frac{\partial \sorder^{*}_{\ttime}(\ahyp,\swprice)}{\partial \swprice}   +  \frac{\partial \sorder^{*}_{\ttime}(\ahyp,\swprice)}{\partial \swprice}\survdemand (\sorder^{*}_{\ttime}(\ahyp,\swprice)\mid \ahyp,1) \right.\\
&\quad\left.\left(\svaluemanuf_{\ttime + 1}(\ahyp + 1) -\svaluemanuf_{\ttime + 1}(\ahyp)\right)\right].
\end{split}
\end{equation}
We now study the behavior of $\swprice \mapsto \uvaluemanuf_{\ttime}(\swprice\mid \ahyp)$ on $(0, \overline{\retailprice}_{\ttime + 1}(\ahyp))$.  Differentiating \cref{eq:z*-t-c>0} with respect to $\swprice$, we obtain
\begin{equation}
\label{eq:dev_z-t}
\frac{\partial \sorder^{*}_{\ttime}(\ahyp,\swprice)}{\partial \swprice} 
= -\frac{\parens*{\overline{\retailprice}_{\ttime + 1}(\ahyp)/\swprice}^{1/\ahyp}}{\ahyp \swprice}=-\frac{1+\sorder^{*}_{\ttime}(\ahyp,\swprice)}{\ahyp \swprice}.
\end{equation}
Plugging  \cref{eq:dev_z-t} into \cref{eq:dev-nor-pro-supl-t-c>0}, we obtain 
\begin{equation}
\label{eq:partial-M-s-t}
\frac{\partial \uvaluemanuf_{\ttime}(\swprice\mid \ahyp)}{\partial \swprice}  
=  
(\ahyp-1)\bracks*{\left(1+ \sorder^{*}_{\ttime}(\ahyp,\swprice)\right)\parens*{1-\frac{\swprice-\cost}{\ahyp \swprice}-\frac{\survdemand (\sorder^{*}_{\ttime}(\ahyp,\swprice)\mid \ahyp,1)\parens*{\svaluemanuf_{\ttime + 1}(\ahyp + 1)
-\svaluemanuf_{\ttime + 1}(\ahyp)}}{\ahyp \swprice}}-1}.
\end{equation}
Because $\sorder_{\ttime}^{*}(\ahyp,\swprice)$ is the solution of \cref{eq:optimal-order-t-c>0}, plugging its value into \cref{eq:partial-M-s-t}, we obtain
\begin{equation}
\label{eq:Der-supli-t-c>0}
\frac{\partial \uvaluemanuf_{\ttime}(\swprice\mid \ahyp)}{\partial \swprice}   =   (\ahyp-1)\left[\left(1+ \sorder^{*}_{\ttime}(\ahyp,\swprice)\right)\left(1-\frac{1}{\ahyp}+\frac{\cost}{\ahyp \swprice}-\frac{\svaluemanuf_{\ttime + 1}(\ahyp + 1) -\svaluemanuf_{\ttime + 1}(\ahyp)}{\ahyp\overline{\retailprice}_{\ttime + 1}(\ahyp)}\right)-1\right].
\end{equation}
We show that $\partial \uvaluemanuf_{\ttime}(\swprice\mid \ahyp)/\partial \swprice=0$ has a unique solution $\swprice^{*}_{\ttime}(\ahyp)$ with $0<\swprice^{*}_{\ttime}(\ahyp)<\overline{\retailprice}_{\ttime+1}(\ahyp)$. 
Let
\begin{equation}
\label{eq:zeta_function}
\oneplus_{\ttime}(\ahyp,\swprice) \coloneqq 1  - \frac{1}{\ahyp} + \frac{\cost}{\ahyp \swprice} - \frac{\svaluemanuf_{\ttime + 1}(\ahyp + 1) - \svaluemanuf_{\ttime + 1}(\ahyp)}
{\ahyp \overline{\retailprice}_{\ttime + 1}(\ahyp)}.
\end{equation}
Then \cref{eq:Der-supli-t-c>0} becomes 
\begin{equation}
\label{eq:Der2-supli-t-c>0}
\frac{\partial \uvaluemanuf_{\ttime}(\swprice\mid \ahyp)}{\partial \swprice}   
= (\ahyp-1)\left[ \left(1+ \sorder^{*}_{\ttime}(\ahyp,\swprice)\right)\oneplus_{\ttime}(\ahyp,\swprice)-1\right].
\end{equation}
This partial derivative is continuous in $\swprice$, because it is a sum and product of continuous functions.

By the inductive  hypothesis \cref{eq:Induction-for_z>0} and using \cref{eq:z*-t-c>0}, we have that $ \lim_{\swprice \searrow 0}{\sorder^{*}_{\ttime}(\ahyp,\swprice)}=+\infty$, and $ \lim_{\swprice \searrow 0}{\oneplus_{\ttime}(\ahyp,\swprice)}=+\infty$. 
It follows that $ \lim_{\swprice \searrow 0}\partial \uvaluemanuf_{\ttime}(\swprice\mid \ahyp)/\partial \swprice=+\infty$. 
Moreover, by \cref{eq:z*-t-c>0}, $\sorder^{*}_{\ttime}\parens*{\ahyp,\overline{\retailprice}_{\ttime + 1}(\ahyp)}=0$ and after some algebra, we obtain
\begin{align}
\oneplus_{\ttime}(\ahyp,\overline{\retailprice}_{\ttime + 1}(\ahyp))-1& =\frac{\cost-\overline{\retailprice}_{\ttime + 1}(\ahyp)-\svaluemanuf_{\ttime + 1}(\ahyp + 1)+ \svaluemanuf_{\ttime + 1}(\ahyp)}{\ahyp\overline{\retailprice}_{\ttime + 1}(\ahyp)}\\
\label{eq:zeta-a-pa}
&
=\frac{\cost-\retailprice- \svalueret_{\ttime + 1}(\ahyp + 1)+\svalueret_{\ttime + 1}(\ahyp)-\svaluemanuf_{\ttime + 1}(\ahyp + 1)+ \svaluemanuf_{\ttime + 1}(\ahyp)}{\ahyp\overline{\retailprice}_{\ttime + 1}(\ahyp)}.
\end{align}
In \cref{eq:zeta-a-pa}, the denominator is strictly positive by \cref{eq:Induction-for_z>0}, and the numerator is strictly negative by \cref{eq:Induction-for_w}. 
It follows that
$\oneplus_{\ttime}\left(\ahyp,\overline{\retailprice}_{\ttime + 1}(\ahyp)\right)-1<0$; therefore
$\partial \uvaluemanuf_{\ttime}(\overline{\retailprice}_{\ttime + 1}(\ahyp)\mid \ahyp)/
\partial \swprice<0$.
Together, the two boundary results and continuity imply that the equation 
$\partial \uvaluemanuf_{\ttime}(\swprice\mid \ahyp)/\partial \swprice=0$ has at least one solution in
$\left(0,\overline{\retailprice}_{\ttime+1}(\ahyp)\right).$

It remains to show that this equation cannot have more than one solution. By \cref{eq:z*-t-c>0}, the function $\sorder^{*}_{\ttime}(\ahyp,\swprice)$ is strictly decreasing in  $\swprice$ and positive for $\swprice\in (0, \overline{\retailprice}_{\ttime + 1}(\ahyp))$, with $\sorder^{*}_{\ttime}\parens*{\ahyp,\overline{\retailprice}_{\ttime + 1}(\ahyp)}=0$. From \cref{eq:zeta_function}, the function $\oneplus_{\ttime}(\ahyp,\swprice)$ is strictly decreasing for $\swprice\in\wprices$. There are two cases: Either $\oneplus_{\ttime}(\ahyp,\swprice)$ is strictly positive on 
$(0,\overline{\retailprice}_{\ttime + 1}(\ahyp))$, or there exists 
$\varepsilon \in (0,\overline{\retailprice}_{\ttime + 1}(\ahyp)]$ such that 
$\oneplus_{\ttime}(\ahyp,\swprice)$ is strictly positive on $(0,\varepsilon)$ and nonpositive on 
$[\varepsilon,\overline{\retailprice}_{\ttime + 1}(\ahyp)]$.

In the first case, by \cref{eq:Der2-supli-t-c>0}, the partial derivative 
$\partial \uvaluemanuf_{\ttime}(\swprice\mid \ahyp)/\partial \swprice$ is strictly decreasing in $\swprice$ on 
$(0,\overline{\retailprice}_{\ttime + 1}(\ahyp)]$ because 
$\left(1+\sorder^{*}_{\ttime}(\ahyp,\swprice)\right)\oneplus_{\ttime}(\ahyp,\swprice)$ is the product of two functions, 
$1+\sorder^{*}_{\ttime}(\ahyp,\swprice)$ and $\oneplus_{\ttime}(\ahyp,\swprice)$, that are both strictly decreasing in $\swprice$ and strictly positive. 
Hence, the first-order condition admits at most one solution.

In the second case, the preceding argument applies on $(0,\varepsilon)$; hence, the first-order condition admits at most one solution on $(0,\varepsilon)$. Moreover, for 
$\swprice\in[\varepsilon,\overline{\retailprice}_{\ttime + 1}(\ahyp)]$, we have 
$\oneplus_{\ttime}(\ahyp,\swprice)\leq 0$ and 
$1+\sorder^{*}_{\ttime}(\ahyp,\swprice)>0$. Therefore, by \cref{eq:Der2-supli-t-c>0}, $\partial \uvaluemanuf_{\ttime}(\swprice\mid \ahyp)/\partial \swprice
\leq -(\ahyp-1)<0$, because $\ahyp>1$. 
Thus, the first-order condition cannot be satisfied on 
$[\varepsilon,\overline{\retailprice}_{\ttime + 1}(\ahyp)]$.

This proves \cref{it:w*-bou-c>0} at period $\ttime$.
Because $\swprice^{*}_{\ttime}(\ahyp)<\overline{\retailprice}_{\ttime + 1}(\ahyp)$, using \cref{z*-t-c>0_global}, we get 
\begin{equation}
\label{eq:z-closef-form}
\sordertwo^{*}_{\ttime}(\ahyp)=\sorder^{*}_{\ttime}\left(\ahyp,\swprice^{*}_{\ttime}(\ahyp)\right)=\left(\frac{\overline{\retailprice}_{\ttime + 1}(\ahyp)}{\swprice^{*}_{\ttime}(\ahyp)}\right)^{1/\ahyp}-1 \in (0,\infty),
\end{equation}
proving \cref{it:z*-bou-c>0} at period $\ttime$. 
Moreover, one can define
\begin{equation}
\label{eq:value-functions}    
\svalueret_{\ttime}(\ahyp) 
\coloneqq  \uvalueret_{\ttime}\bigl(\swprice^{*}_{\ttime}(\ahyp), \sordertwo_{\ttime}^{*}(\ahyp) \mid \ahyp \bigr),    \text{ and } \svaluemanuf_{\ttime}(\ahyp) 
\coloneqq 
\uvaluemanuf_{\ttime}\bigl(\swprice^{*}_{\ttime}(\ahyp)
\mid \ahyp \bigr).
\end{equation}
We now prove that the above value functions satisfy \cref{it:z>0-c>0,it:w<wmax-c>0}. 
From \cref{eq:Reta-Payof-t-c>0}, one has
\begin{equation}
\label{eq:ret_ineq-t-c>0}
\svalueret_{\ttime}(\ahyp)  
= \retailprice - \swprice^{*}_{\ttime}(\ahyp)\sordertwo_{\ttime}^{*}(\ahyp)(\ahyp-1) - \survdemand (\sordertwo_{\ttime}^{*}(\ahyp)\mid \ahyp-1,1)\overline{\retailprice}_{\ttime + 1}(\ahyp)+\svalueret_{\ttime + 1}(\ahyp+1).
\end{equation}
Choosing $\sorder_{\ttime}(\ahyp+1,\swprice)=0$, we can prove that, for any  $\swprice\in\wprices$, we have $\svalueret_{\ttime + 1}(\ahyp+1) \le \svalueret_{\ttime}(\ahyp+1)$.
It follows that 
\begin{align}
\svalueret_{\ttime}(\ahyp)  &\leq \retailprice - \swprice^{*}_{\ttime}(\ahyp)\sordertwo_{\ttime}^{*}(\ahyp)(\ahyp-1) - \survdemand (\sordertwo_{\ttime}^{*}(\ahyp)\mid \ahyp-1,1)\overline{\retailprice}_{\ttime + 1}(\ahyp)+\svalueret_{\ttime}(\ahyp+1).
\end{align}
Rearranging and using the inductive  hypothesis \cref{eq:Induction-for_z>0,it:z*-bou-c>0}, we obtain 
\begin{equation}
\retailprice+\svalueret_{\ttime}(\ahyp+1)-\svalueret_{\ttime}(\ahyp)\geq \swprice^{*}_{\ttime}(\ahyp)\sordertwo_{\ttime}^{*}(\ahyp)(\ahyp-1) + \survdemand (\sordertwo_{\ttime}^{*}(\ahyp)\mid \ahyp-1,1)\overline{\retailprice}_{\ttime + 1}(\ahyp)>0,
\end{equation}
which proves \cref{it:z>0-c>0} at period $\ttime$.

From \cref{eq:nor-pro-supl-t-c>0}, one has
\begin{equation}
\label{eq:norma_sup_pay-t}
\svaluemanuf_{\ttime}(\ahyp) 
= (\ahyp-1)(\swprice^{*}_{\ttime}(\ahyp)-\cost)\sordertwo_{\ttime}^{*}(\ahyp)  + \svaluemanuf_{\ttime + 1}(\ahyp + 1) \preddemand(\sordertwo_{\ttime}^{*}(\ahyp)\mid \ahyp - 1,1) + \svaluemanuf_{\ttime + 1}(\ahyp) \survdemand (\sordertwo_{\ttime}^{*}(\ahyp)\mid \ahyp - 1,1).
\end{equation}
From \cref{eq:p-bar,eq:ret_ineq-t-c>0}, the retailer's  standardized value function can be written as
\begin{equation}
\label{eq:norma_ret_pay-t}
\begin{split}
\svalueret_{\ttime}(\ahyp) 
&= \retailprice\preddemand(\sordertwo_{\ttime}^{*}(\ahyp)\mid \ahyp-1,1) - (\ahyp-1)\swprice^{*}_{\ttime}(\ahyp)\sordertwo_{\ttime}^{*}(\ahyp)  + \svalueret_{\ttime + 1}(\ahyp + 1) \preddemand(\sordertwo_{\ttime}^{*}(\ahyp)\mid \ahyp - 1,1)  \\
&\quad + \svalueret_{\ttime + 1}(\ahyp) \survdemand (\sordertwo_{\ttime}^{*}(\ahyp)\mid \ahyp - 1,1).
\end{split}
\end{equation}
Summing \cref{eq:norma_sup_pay-t,eq:norma_ret_pay-t}, we get
\begin{equation}
\label{eq:sum_ineq-t,c>0}
\begin{split}
\svaluemanuf_{\ttime}(\ahyp)+ \svalueret_{\ttime}(\ahyp) 
&= - (\ahyp-1)\cost\sordertwo_{\ttime}^{*}(\ahyp)+\preddemand(\sordertwo_{\ttime}^{*}(\ahyp)\mid \ahyp - 1,1)\left(\retailprice+\svalueret_{\ttime + 1}(\ahyp + 1)+\svaluemanuf_{\ttime + 1}(\ahyp + 1)\right)\\
& \quad + \survdemand (\sordertwo_{\ttime}^{*}(\ahyp)\mid \ahyp - 1,1)\left(\svalueret_{\ttime + 1}(\ahyp )+\svaluemanuf_{\ttime + 1}(\ahyp )\right). 
\end{split}
\end{equation}
\cref{eq:Induction-for_w}  implies
\begin{equation}
\begin{split}
\svaluemanuf_{\ttime}(\ahyp)+ \svalueret_{\ttime}(\ahyp) 
&< - (\ahyp-1)\cost\sordertwo_{\ttime}^{*}(\ahyp)+\preddemand(\sordertwo_{\ttime}^{*}(\ahyp)\mid \ahyp - 1,1)\left(\retailprice+\svalueret_{\ttime + 1}(\ahyp + 1)+\svaluemanuf_{\ttime + 1}(\ahyp + 1)\right)\\
& \quad + \survdemand (\sordertwo_{\ttime}^{*}(\ahyp)\mid \ahyp - 1,1)\left(\retailprice-\cost+\svalueret_{\ttime + 1}(\ahyp + 1)+\svaluemanuf_{\ttime + 1}(\ahyp + 1)\right) \\
    &= - (\ahyp-1)\cost\sordertwo_{\ttime}^{*}(\ahyp)+\retailprice+\svalueret_{\ttime + 1}(\ahyp + 1)+\svaluemanuf_{\ttime + 1}(\ahyp + 1)-\cost\survdemand (\sordertwo_{\ttime}^{*}(\ahyp)\mid \ahyp - 1,1),    
\end{split}    
\end{equation}
where the equality follows from $\preddemand(\sordertwo_{\ttime}^{*}(\ahyp)\mid \ahyp - 1,1)+\survdemand(\sordertwo_{\ttime}^{*}(\ahyp)\mid \ahyp - 1,1)=1$, and some algebra. Equivalently,
\begin{equation}
\label{eq:sum_ineq}
\svaluemanuf_{\ttime}(\ahyp)+\svalueret_{\ttime}(\ahyp)<\cost\preddemand(\sordertwo_{\ttime}^{*}(\ahyp)\mid \ahyp - 1,1)- (\ahyp-1)\cost \sordertwo_{\ttime}^{*}(\ahyp)+\retailprice-\cost+\svalueret_{\ttime + 1}(\ahyp + 1)+\svaluemanuf_{\ttime + 1}(\ahyp + 1). 
\end{equation}
Consider the function $\auxf(\sorder) \coloneqq \cost\preddemand(\sorder \mid \ahyp - 1,1)- (\ahyp-1)\cost \sorder.$
By choosing $\swprice \geq \overline{\retailprice}_{\ttime+1}(\ahyp+1)$, we have that $\svaluemanuf_{\ttime+1} (\ahyp+1) \leq \svaluemanuf_{\ttime}(\ahyp+1)$.  
Combining this with the inequality $\svalueret_{\ttime+1}(\ahyp+1)\leq \svalueret_{\ttime}(\ahyp+1)$, \cref{eq:sum_ineq} implies
\begin{equation}\label{eq:order}
\svalueret_{\ttime}(\ahyp)+\svaluemanuf_{\ttime}(\ahyp)< \auxf(\sordertwo_{\ttime}^{*}(\ahyp))+\retailprice-\cost+\svalueret_{\ttime}(\ahyp + 1)+\svaluemanuf_{\ttime}(\ahyp+ 1). 
\end{equation}
We have $\auxf(0)=0$ and, using \cref{ewq:G-G-prime},
\begin{equation}
\label{eq:der-psi}
\auxf'(\sorder)
= \cost \preddemand'(\sorder \mid \ahyp - 1,1)- (\ahyp-1)\cost
=\cost(\ahyp-1)(\survdemand (\sorder \mid \ahyp,1)-1).
\end{equation}
For every $\sorder>0$, $\survdemand(\sorder \mid \ahyp,1)<1$, and $\survdemand'(\sorder \mid \ahyp,1)<0$. 
We conclude that $\auxf(\sorder) < 0$, for all $\sorder>0$; therefore \cref{eq:order} implies \cref{it:w<wmax-c>0} at period $\ttime$.
\end{proof}

\begin{proof}[Proof of \cref{th:exponential}]
We first prove \ref{it:th:exponential-a}. From \cref{prop:Norma-properties-exp-c>0}, there exists a unique \ac{SMPS} $(\swprice^{*}_{\ttime}(\ahyp), \sorder^{*}_{\ttime}(\ahyp,\swprice))$ of $\contract$. 
Plugging $\overline{\retailprice}_{\ttime + 1}(\argdot)$ from \cref{eq:p-bar} into \cref{z*-t-c>0_global}, we get  \cref{eq:z-t^{*}(a)_c>0}. 
Moreover, \cref{eq:optimal-standa-order} was already established in the proof of \cref{prop:Norma-properties-exp-c>0} (see \cref{eq:z-closef-form}). 

Given that  $\swprice^{*}_{\ttime}(\ahyp)$ satisfies the first-order conditions, \cref{eq:Der-supli-t-c>0} can be rewritten as 
\begin{equation}
\label{eq:manufacturer_foc-t-c>0}
\left(1+ \sordertwo_{\ttime}^{*}(\ahyp)\right)\left[1+\frac{\cost}{\ahyp \swprice^{*}_{\ttime}(\ahyp)}-\frac{1}{\ahyp}-\frac{\svaluemanuf_{\ttime + 1}(\ahyp + 1)-\svaluemanuf_{\ttime + 1}(\ahyp)}{\ahyp
(\retailprice+\svalueret_{\ttime + 1}(\ahyp + 1) -\svalueret_{\ttime + 1}(\ahyp))}\right]=1.
\end{equation}
From \cref{eq:predictive-distr-func}, we see that \cref{eq:optimal-order-t-c>0} is equivalent to
\begin{align}\label{eq:alt_z}
\left(1+\sordertwo_{\ttime}^{*}(\ahyp)\right)^\ahyp= \frac{\overline{\retailprice}_{\ttime + 1}(\ahyp)}{\swprice_{\ttime}^{*}(\ahyp)}.
\end{align}
Raising both sides of \cref{eq:manufacturer_foc-t-c>0} to the power $\ahyp$ and using \cref{eq:alt_z}, we obtain \cref{eq:optimal_wholes_exp_c>0}.  

We now focus on the standardized value functions. 
First we prove the following equalities:
\begin{align}
\label{eq:difference_retailer}\svalueret_{\ttime}(\ahyp) 
& = \retailprice - \swprice^{*}_{\ttime}(\ahyp )\left(1+\ahyp\sordertwo^{*}_{\ttime }(\ahyp )\right)+\svalueret_{\ttime + 1}(\ahyp+1),  \\
\label{eq:difference_manufacturer}
\svaluemanuf_{\ttime}(\ahyp)
&= \swprice^{*}_{\ttime}(\ahyp)-\cost(1+\ahyp \sordertwo_{\ttime}^{*}(\ahyp)) + \svaluemanuf_{\ttime + 1}(\ahyp+1).
\end{align}
Recall that \cref{eq:ret_ineq-t-c>0} is
\begin{equation}
\label{eq:ret_val_a-1}
\svalueret_{\ttime}(\ahyp) = \retailprice - (\ahyp-1)\swprice^{*}_{\ttime}(\ahyp)\sordertwo_{\ttime}^{*}(\ahyp) - \survdemand (\sordertwo_{\ttime}^{*}(\ahyp)\mid \ahyp-1,1)\overline{\retailprice}_{\ttime + 1}(\ahyp)+\svalueret_{\ttime + 1}(\ahyp+1).
\end{equation}
From the first-order conditions, we write \cref{eq:optimal-order-t-c>0} as
\begin{equation}
\label{eq:ret_foc_a-1-c>0}
\survdemand (\sordertwo_{\ttime}^{*}(\ahyp)\mid \ahyp-1,1)\overline{\retailprice}_{\ttime + 1}(\ahyp) =\swprice_{\ttime}^{*}(\ahyp)(1+\sordertwo_{\ttime}^{*}(\ahyp)).
\end{equation}
Substituting \cref{eq:ret_foc_a-1-c>0} into \cref{eq:ret_val_a-1}, one obtains \cref{eq:difference_retailer}. 

The manufacturer’s  value function in \cref{eq:norma_sup_pay-t} can be written as
\begin{equation}
\label{eq:v-S-expression}
    \svaluemanuf_{\ttime}(\ahyp) = (\ahyp-1)(\swprice^{*}_{\ttime}(\ahyp)-\cost)\sordertwo_{\ttime}^{*}(\ahyp) +\survdemand (\sordertwo_{\ttime}^{*}(\ahyp)\mid \ahyp - 1,1) (\svaluemanuf_{\ttime + 1}(\ahyp)-\svaluemanuf_{\ttime + 1}(\ahyp + 1)) + \svaluemanuf_{\ttime + 1}(\ahyp+1). 
\end{equation}
Substituting \cref{eq:ret_foc_a-1-c>0} into \cref{eq:v-S-expression}, we get
\begin{equation}
\label{eq:Manufacturer_norm_prof_c>0}
\svaluemanuf_{\ttime}(\ahyp) = (\ahyp-1)(\swprice^{*}_{\ttime}(\ahyp)-\cost)\sordertwo_{\ttime}^{*}(\ahyp) - \swprice^{*}_{\ttime}(\ahyp)(1+\sordertwo_{\ttime}^{*}(\ahyp))\frac{\svaluemanuf_{\ttime + 1}(\ahyp+1)-\svaluemanuf_{\ttime + 1}(\ahyp )}{\retailprice + \svalueret_{\ttime + 1}(\ahyp+1) - \svalueret_{\ttime + 1}(\ahyp)} + \svaluemanuf_{\ttime + 1}(\ahyp+1). 
\end{equation}
Multiplying both sides of  \cref{eq:manufacturer_foc-t-c>0} by $\ahyp \swprice_{\ttime}^{*}(\ahyp)$, we obtain
\begin{equation}
\left(1+ \sordertwo_{\ttime}^{*}(\ahyp)\right)\left[\ahyp \swprice^{*}_{\ttime}(\ahyp)+\cost-\swprice^{*}_{\ttime}(\ahyp)-\swprice^{*}_{\ttime}(\ahyp)\frac{\svaluemanuf_{\ttime + 1}(\ahyp + 1)-\svaluemanuf_{\ttime + 1}(\ahyp)}{\retailprice+\svalueret_{\ttime + 1}(\ahyp + 1) -\svalueret_{\ttime + 1}(\ahyp)}\right]=\ahyp \swprice^{*}_{\ttime}(\ahyp),
\end{equation}
or, equivalently,
\begin{equation}\label{eq:sup_foc_exp_c>0}
\swprice^{*}_{\ttime}(\ahyp)\left(1+ \sordertwo_{\ttime}^{*}(\ahyp)\right)\frac{\svaluemanuf_{\ttime + 1}(\ahyp + 1)-\svaluemanuf_{\ttime + 1}(\ahyp)}{\retailprice+\svalueret_{\ttime + 1}(\ahyp + 1) -\svalueret_{\ttime + 1}(\ahyp)}=\left(1+ \sordertwo_{\ttime}^{*}(\ahyp)\right)\left[\ahyp \swprice^{*}_{\ttime}(\ahyp)+\cost-\swprice^{*}_{\ttime}(\ahyp)\right]-\ahyp \swprice^{*}_{\ttime}(\ahyp).
\end{equation}
Substituting the left-hand side of  \cref{eq:sup_foc_exp_c>0} into  \cref{eq:Manufacturer_norm_prof_c>0} and simplifying, we obtain \cref{eq:difference_manufacturer}.

The telescoping series implied by \cref{eq:difference_retailer,eq:difference_manufacturer}, and the terminal conditions  $\svalueret_{\horizon+1}(\ahyp)=\svaluemanuf_{\horizon+1}(\ahyp)=0$ yield \cref{eq:fr-t_exp,eq:fs-t_exp}.

We now prove \ref{it:th:exponential-b}. 
Because the \ac{SMPS}
$\parens*{\swprice_{\ttime}^{*}(\ahyp),
\sorder_{\ttime}^{*}(\ahyp,\wholesaleprice)}_{1\le \ttime\le \horizon}$
exists and is unique, \cref{pr:unique-SMPS} implies the existence and uniqueness of the corresponding
\ac{MPE} of $\gameenv$.

Moreover, because $\lnewsv(\yvar)=\yvar$ corresponds to $\weibull=1$, the scaling relation
in \cref{eq:traduction} gives, for every
$\ttime\in\braces*{1,\dots,\horizon}$, $\ahyp\in\Aset$,
$\bhyp\in\Bset$, and $\wholesaleprice\in\wprices$,
\[
\wholesaleprice_{\ttime}^{*}(\ahyp,\bhyp)
=
\swprice_{\ttime}^{*}(\ahyp),
\qquad
\order_{\ttime}^{*}(\ahyp,\bhyp,\wholesaleprice)
=
\bhyp\,\sorder_{\ttime}^{*}(\ahyp,\wholesaleprice).
\]
This is exactly \cref{eq:MPE-SMPS-exponential}, which proves
\ref{it:th:exponential-b}.
\end{proof}

%
%
%
%

\section{List of Symbols}
\label{se:symbols}

\begin{longtable}{p{.18\textwidth} p{.77\textwidth}}

$\ahyp_{\ttime}$ & shape hyperparameter of the posterior distribution at time $\ttime$\\

$\Aset$ & $\braces*{\ahyp_{1}+\nuncens\colon \nuncens\in\naturals}$\\

$\bhyp_{\ttime}$ & rate hyperparameter of the posterior distribution at time $\ttime$\\
$\Bset$ & $[\bhyp_{1},\infty)$\\
$\nuncens_{\ttime}$ & number of uncensored observations up to period $\ttime$\\

$\cost$ & manufacturer’s per-unit production cost\\

$\demand_{\ttime}$ & realized demand at time $\ttime$\\
$\Demand_{\ttime}$ & random demand at time $\ttime$\\

$\Expect$ & expectation operator\\
$\Expectgen$ & standardized continuation-value operator, defined before \cref{de:standardized}\\

$\ffunc$ &  function used in the definition of $\Expectgen$\\
$\distrdemand(\argdot\mid\parameter)$ & conditional demand distribution function given $\Parameter=\parameter$\\
$\preddemand(\argdot\mid\ahyp,\bhyp)$ & predictive demand distribution function given $(\ahyp,\bhyp)$\\
$\survdemand(\argdot\mid\ahyp,\bhyp)$ & predictive demand survival function, defined by $\survdemand(\argdot\mid\ahyp,\bhyp)\coloneqq 1-\preddemand(\argdot\mid\ahyp,\bhyp)$\\
$\gameenv$ & dynamic wholesale-price game induced by $\contract$\\
$\Gr(\maximiz)$ & graph of the correspondence $\maximiz$\\

$\hvalueret_{\ttime}$ & scaled retailer value function used in the proof of \cref{th:main}, defined in \cref{reduction-formulas}\\
$\hvalueretbreve_{\ttime}$ & scaled retailer value function after observing the wholesale price, used in the proof of \cref{th:main}, defined in \cref{reduction-formulas}\\
$\hvaluemanuf_{\ttime}$ & scaled manufacturer value function used in the proof of \cref{th:main}, defined in \cref{reduction-formulas}\\

$\weibull$ & Weibull shape parameter; $\lnewsv(\yvar)=\yvar^{\weibull}$ in the Weibull case\\

$\lnewsv$ & function defining the newsvendor demand family in \cref{eq:demand-distribution}\\

$\manufacturer$ & manufacturer\\

$\naturals$ & set of nonnegative integers, $\braces*{0,1,2,\ldots}$\\

$\retailprice$ & exogenous retail price\\
$\overline{\retailprice}_{\ttime+1}(\ahyp)$ & effective retail price in the exponential-demand recursion, defined in \cref{eq:p-bar}\\
$\Prob$ & probability operator\\

$\order_{\ttime}$ & order quantity at time $\ttime$\\
$\order_{\ttime}^{*}$ & equilibrium order quantity at time $\ttime$\\
$\order^{\circ}$ & myopic optimal order quantity in the original problem, defined in \cref{eq:myopic-order-closed}\\
$\sorder_{\ttime}$ & standardized order quantity at time $\ttime$\\
$\scaleorder_{\ttime}$ & scaled order strategy used in the proof of \cref{th:main}\\
$\sorder_{\ttime}^{*}$ & equilibrium standardized retailer best response at time $\ttime$\\
$\sordertwo_{\ttime}^{*}$ & equilibrium standardized order quantity at time $\ttime$, after substituting the equilibrium wholesale price\\
$\sorder^{\circ}$ & myopic optimal standardized order quantity, defined in \cref{eq:myopic-standardized-order}\\
$\overline{\sorder}_{\ttime}(\ahyp)$ & upper bound used to restrict the retailer's action space in the proof of \cref{th:main}\\
$\orders$ & set of admissible order quantities\\
$\sorders$ & set of admissible standardized order quantities\\

$\retailer$ & retailer\\
$\reals$ & set of real numbers\\
$\reals_{+}$ & $[0,\infty)$\\
$\reals_{++}$ & $(0,\infty)$\\

$\sale_{\ttime}$ & realized sales at time $\ttime$\\
$\Sale_{\ttime}$ & random sales at time $\ttime$\\

$\ttime$ & discrete time period\\
$\horizon$ & time horizon\\

$\uvaluemanufsmall_{\ttime}$ & manufacturer’s auxiliary objective before leader-favorable tie-breaking, defined in \cref{eq:sup_Payof-t-c>0-toto}\\
$\uvalue$ & generic objective function used in \cref{suse:Berge}\\
$\uvalueret_{\ttime}$ & retailer’s auxiliary objective used in the proofs; see \cref{eq:M-r_exist,eq:Reta-Payof-t-c>0}\\
$\uvaluemanuf_{\ttime}$ & manufacturer’s auxiliary objective used in the proofs; see \cref{eq:selection,eq:nor-pro-supl-t-c>0}\\

$\svalue$ & generic value function used in \cref{suse:Berge}\\
$\svalueret_{\ttime}$ & retailer’s standardized value function at time $\ttime$, defined in \cref{de:standardized}\\
$\svalueretbreve_{\ttime}$ & retailer’s standardized value function after observing the wholesale price at time $\ttime$, defined in \cref{de:standardized}\\
$\svaluemanuf_{\ttime}$ & manufacturer’s standardized value function at time $\ttime$, defined in \cref{de:standardized}\\
$\Valueret_{\ttime}$ & retailer’s value function at time $\ttime$, defined in \cref{de:belief}\\
$\Valueretbreve_{\ttime}$ & retailer’s value function after observing the wholesale price at time $\ttime$, defined in \cref{de:belief}\\
$\Valuemanuf_{\ttime}$ & manufacturer’s value function at time $\ttime$, defined in \cref{de:belief}\\

$\underline{\wholesaleprice}$ & lower bound of the compact wholesale-price set in \cref{th:main}\\
$\overline{\wholesaleprice}$ & upper bound of the compact wholesale-price set in \cref{th:main}\\
$\wholesaleprice_{\ttime}$ & wholesale price at time $\ttime$\\
$\wholesaleprice_{\ttime}^{*}$ & equilibrium wholesale price at time $\ttime$\\
$\wholesaleprice^{\circ}$ & myopic optimal wholesale price in the original problem, defined in \cref{eq:myopic-wholesale}\\
$\swprice_{\ttime}$ & standardized wholesale price at time $\ttime$\\
$\scalewholesaleprice_{\ttime}$ & scaled wholesale-price strategy used in the proof of \cref{th:main}\\
$\swprice_{\ttime}^{*}$ & equilibrium standardized wholesale price at time $\ttime$\\
$\swprice^{\circ}$ & myopic optimal standardized wholesale price, defined in \cref{eq:myopic-wholesale}\\
$\wprices$ & set of admissible wholesale prices\\

$\Gamma(\argdot)$ & Gamma function\\

$\maximiz$ &  correspondence used in \cref{suse:Berge}\\

$\oneplus_{\ttime}$ & auxiliary function used in the proof of \cref{prop:Norma-properties-exp-c>0}, defined in \cref{eq:zeta_function}\\

$\parameter$ & realization of $\Parameter$\\
$\Parameter$ & unknown demand parameter\\

$\corresp$ &  correspondence used in \cref{suse:Berge}\\
$\Xi_{\ttime}(\ahyp)$ & compact retailer action set used in the proof of \cref{th:main}\\

$\spayoffretailer$ & retailer’s standardized per-period expected profit, defined in \cref{eq:standardized-payoff-retailer}\\
$\spayoffmanufacturer$ & manufacturer’s standardized per-period profit, defined in \cref{eq:standardized-payoff-manufacturer}\\
$\payoffretailer$ & retailer’s expected per-period profit, defined in \cref{eq:expected-per-period-profit-retailer}\\
$\payoffmanufacturer$ & manufacturer’s per-period profit, defined in \cref{eq:expected-per-period-profit-manufacturer}\\

$\prior$ & common prior density of $\Parameter$\\
$\transition$ & belief-transition function, defined in \cref{eq:Transi_funnction}\\

$\auxf$ & auxiliary function used in the proof of \cref{prop:Norma-properties-exp-c>0}\\
$\contract$ & model environment, $(\Parameter,\prior,(\distrdemand(\argdot\mid\parameter))_{\parameter},\retailprice,\cost,\horizon,\wprices,\orders)$\\

$\ind_{\{\argdot\}}$ & indicator function\\

\end{longtable}

%
%
%
%

\subsection*{Acknowledgments}

Marco Scarsini and Xavier Venel are members of GNAMPA-INdAM.
This work was partially supported by the  MIUR PRIN 2022EKNE5K ``Learning in markets and society,'' and the European Union-Next Generation EU Grant P2022XT8C8, component M4C2,
investment 1.1.

\bibliographystyle{apalike} 

\bibliography{Bibliog.bib}

\begin{thebibliography}{}

\bibitem[Aliprantis and Border, 2006]{AliBor:SPRINGER2006}
Aliprantis, C.~D. and Border, K.~C. (2006).
\newblock {\em Infinite Dimensional Analysis. A Hitchhiker's Guide}.
\newblock Springer, Berlin, third edition.

\bibitem[Anand et~al., 2008]{AnaAnuBas:MS2008}
Anand, K., Anupindi, R., and Bassok, Y. (2008).
\newblock Strategic inventories in vertical contracts.
\newblock {\em Management Sci.}, 54(10):1792--1804.

\bibitem[Azoury, 1985]{Azo:MS1985}
Azoury, K.~S. (1985).
\newblock Bayes solution to dynamic inventory models under unknown demand
  distribution.
\newblock {\em Management Sci.}, 31(9):1150--1160.

\bibitem[Bensoussan et~al., 2025]{BenSetWan:OR2025}
Bensoussan, A., Sethi, S., and Wang, S. (2025).
\newblock Technical note---{A} stationary infinite-horizon supply contract
  under asymmetric inventory information.
\newblock {\em Oper. Res.}, 73(1):270--277.

\bibitem[Berge, 1959]{Ber:DUNOD1959}
Berge, C. (1959).
\newblock {\em Espaces topologiques: {F}onctions multivoques}, volume Vol. III
  of {\em Collection Universitaire de Math\'ematiques}.
\newblock Dunod, Paris.

\bibitem[Besbes et~al., 2022]{BesChaMoaSS2022}
Besbes, O., Chaneton, J.~M., and Moallemi, C.~C. (2022).
\newblock The exploration-exploitation trade-off in the newsvendor problem.
\newblock {\em Stoch. Syst.}, 12(4):319--339.

\bibitem[Bhaskar and Roketskiy, 2023]{BhaskarRoketskiy:JET2023}
Bhaskar, V. and Roketskiy, N. (2023).
\newblock The ratchet effect: a learning perspective.
\newblock {\em J. Econom. Theory}, 214:Paper No. 105762, 22.

\bibitem[Bisi et~al., 2011]{BisDadTok:MSOM2011}
Bisi, A., Dada, M., and Tokdar, S. (2011).
\newblock A censored-data multiperiod inventory problem with newsvendor demand
  distributions.
\newblock {\em Manufacturing Service Oper. Management}, 13(4):525--533.

\bibitem[Braden and Freimer, 1991]{BraFre:MS1991}
Braden, D.~J. and Freimer, M. (1991).
\newblock Informational dynamics of censored observations.
\newblock {\em Management Sci.}, 37(11):1390--1404.

\bibitem[Breton et~al., 1988]{BreAljHau:JOTA1988}
Breton, M., Alj, A., and Haurie, A. (1988).
\newblock Sequential {S}tackelberg equilibria in two-person games.
\newblock {\em J. Optim. Theory Appl.}, 59(1):71--97.

\bibitem[Cachon, 2003]{Cac:HORMS2003}
Cachon, G.~P. (2003).
\newblock Supply chain coordination with contracts.
\newblock In {\em Supply Chain Management: Design, Coordination and Operation},
  pages 227--339. Elsevier.

\bibitem[Cesa-Bianchi et~al., 2023]{CesCesOsoScaWas:IFAAMS2023}
Cesa-Bianchi, N., Cesari, T., Osogami, T., Scarsini, M., and Wasserkrug, S.
  (2023).
\newblock Learning the {S}tackelberg equilibrium in a newsvendor game.
\newblock AAMAS '23, pages 242--250, Richland, SC. International Foundation for
  Autonomous Agents and Multiagent Systems.

\bibitem[Chuang and Kim, 2023]{ChuKim:OR2023}
Chuang, Y.-T. and Kim, M.~J. (2023).
\newblock Bayesian inventory control: accelerated demand learning via
  exploration boosts.
\newblock {\em Oper. Res.}, 71(5):1515--1529.

\bibitem[Cisternas, 2018]{Cis:RES2018}
Cisternas, G. (2018).
\newblock Two-sided learning and the ratchet principle.
\newblock {\em Rev. Econ. Stud.}, 85(1):307--351.

\bibitem[Erhun et~al., 2008]{ErhKesTay:IIET2008}
Erhun, F., Keskinocak, P., and Tayur, S. (2008).
\newblock Dynamic procurement in a capacitated supply chain facing uncertain
  demand.
\newblock {\em IIE Trans.}, 40(8):733--748.

\bibitem[Feng and Krishnan, 2022]{FenKri:JSR2022}
Feng, S. and Krishnan, T.~V. (2022).
\newblock Contract length determination in the {B2B} service industry: Role of
  economic factors, business relationship, and learning.
\newblock {\em J. Service Res.}, 25(3):422--439.

\bibitem[Freixas et~al., 1985]{FreGuesTir:RES1985}
Freixas, X., Guesnerie, R., and Tirole, J. (1985).
\newblock Planning under incomplete information and the ratchet effect.
\newblock {\em The Review of Economic Studies}, 52(2):173--191.

\bibitem[Fudenberg and Tirole, 1991]{FudTir:MITP1991}
Fudenberg, D. and Tirole, J. (1991).
\newblock {\em Game Theory}.
\newblock MIT Press, Cambridge, MA.

\bibitem[Han, 2016]{Han:PhD2016}
Han, Y. (2016).
\newblock {\em Demand Learning in Two Operations Models}.
\newblock {P}h{D} thesis, Columbia University, New York, NY.

\bibitem[Iskhakov et~al., 2016]{IskRusSch:RES2016}
Iskhakov, F., Rust, J., and Schjerning, B. (2016).
\newblock Recursive lexicographical search: finding all {M}arkov perfect
  equilibria of finite state directional dynamic games.
\newblock {\em Rev. Econ. Stud.}, 83(2):658--703.

\bibitem[Kadiyala et~al., 2020]{KadOzeBen:MS2020}
Kadiyala, B., \"{O}zer, O., and Bensoussan, A. (2020).
\newblock A mechanism design approach to vendor managed inventory.
\newblock {\em Management Sci.}, 66(6):2628--2652.

\bibitem[Kamenica and Gentzkow, 2011]{KamGen:AER2011}
Kamenica, E. and Gentzkow, M. (2011).
\newblock Bayesian persuasion.
\newblock {\em Amer. Econ. Rev.}, 101(6):2590–2615.

\bibitem[Kreps and Wilson, 1982]{KreWil:E1982}
Kreps, D.~M. and Wilson, R. (1982).
\newblock Sequential equilibria.
\newblock {\em Econometrica}, 50(4):863--894.

\bibitem[Laffont and Tirole, 1988]{LafTir:E1988}
Laffont, J.-J. and Tirole, J. (1988).
\newblock The dynamics of incentive contracts.
\newblock {\em Econometrica}, 56(5):1153--1175.

\bibitem[Lariviere and Porteus, 1999]{LarPor:MS1999}
Lariviere, M.~A. and Porteus, E.~L. (1999).
\newblock Stalking information: {B}ayesian inventory management with unobserved
  lost sales.
\newblock {\em Management Sci.}, 45(3):346--363.

\bibitem[Lariviere and Porteus, 2001]{LarPorMSOM2001}
Lariviere, M.~A. and Porteus, E.~L. (2001).
\newblock Selling to the newsvendor: an analysis of price-only contracts.
\newblock {\em Manufacturing Service Oper. Management}, 3(4):293--305.

\bibitem[Leitmann, 1978]{Lei:JOTA1978}
Leitmann, G. (1978).
\newblock On generalized {S}tackelberg strategies.
\newblock {\em J. Optim. Theory Appl.}, 26(4):637--643.

\bibitem[Liu and Rong, 2024]{LiuRon:arXiv2024}
Liu, L. and Rong, Y. (2024).
\newblock No-regret learning for {S}tackelberg equilibrium computation in
  newsvendor pricing games.
\newblock Technical report, arXiv 2404.00203.

\bibitem[Lobel and Xiao, 2017]{LobXia:OR2017}
Lobel, I. and Xiao, W. (2017).
\newblock Technical note---optimal long-term supply contracts with asymmetric
  demand information.
\newblock {\em Oper. Res.}, 65(5):1275--1284.

\bibitem[Mart{\'\i}nez-de Alb{\'e}niz and Simchi-Levi, 2013]{MarSim:POM2013}
Mart{\'\i}nez-de Alb{\'e}niz, V. and Simchi-Levi, D. (2013).
\newblock Supplier-buyer negotiation games: Equilibrium conditions and supply
  chain efficiency.
\newblock {\em Production Oper. Management}, 22(2):397--409.

\bibitem[Maskin and Tirole, 1988]{MasTir:E1988-a}
Maskin, E. and Tirole, J. (1988).
\newblock A theory of dynamic oligopoly. {I}. {O}verview and quantity
  competition with large fixed costs.
\newblock {\em Econometrica}, 56(3):549--569.

\bibitem[Maskin and Tirole, 2001]{MasTir:JET2001}
Maskin, E. and Tirole, J. (2001).
\newblock Markov perfect equilibrium. {I}. {O}bservable actions.
\newblock {\em J. Econom. Theory}, 100(2):191--219.

\bibitem[{McKinsey \& Company}, 2024]{McKinseyGrocery2024}
{McKinsey \& Company} (2024).
\newblock State of grocery {E}urope 2024: Signs of hope.

\bibitem[Mersereau, 2015]{Mer:MSOM2015}
Mersereau, A.~J. (2015).
\newblock Demand estimation from censored observations with inventory record
  inaccuracy.
\newblock {\em Manufacturing Service Oper. Management}, 17(3):335--349.

\bibitem[Mittendorf et~al., 2022]{MitShinYoon:JMR2022}
Mittendorf, B., Shin, J., and Yoon, D.-H. (2022).
\newblock Information disclosure policy and its implications: Ratcheting in
  supply chains.
\newblock {\em Journal of Marketing Research}, 59(2):290--305.

\bibitem[\"Ozer et~al., 2007]{OzeUncWei:EJOR2007}
\"Ozer, O., Uncu, O., and Wei, W. (2007).
\newblock Selling to the ``newsvendor'' with a forecast update: analysis of a
  dual purchase contract.
\newblock {\em European J. Oper. Res.}, 182(3):1150--1176.

\bibitem[Scarf, 1959]{Sca:AMS1959}
Scarf, H. (1959).
\newblock Bayes solutions of the statistical inventory problem.
\newblock {\em Ann. Math. Statist.}, 30:490--508.

\bibitem[Scarf, 1960]{Sca:NRLQ1960}
Scarf, H.~E. (1960).
\newblock Some remarks on {B}ayes solutions to the inventory problem.
\newblock {\em Naval Res. Logist. Quart.}, 7:591--596.

\bibitem[Shen et~al., 2019]{SheChoMin:IJPR2019}
Shen, B., Choi, T.-M., and Minner, S. (2019).
\newblock A review on supply chain contracting with information considerations:
  information updating and information asymmetry.
\newblock {\em Intern. J. Production Res.}, 57(15-16):4898--4936.

\bibitem[Sheng et~al., 2017]{SheGraHuhNag:ORL2017}
Sheng, L., Granot, D., Huh, W.~T., and Nagarajan, M. (2017).
\newblock A dynamic price-only contract: exact and asymptotic results.
\newblock {\em Oper. Res. Lett.}, 45(6):620--624.

\bibitem[Souza, 2023]{talkbusiness2023walmart}
Souza, K. (2023).
\newblock Walmart requiring all suppliers to move to {L}uminate data service.
\newblock TB\&P.

\bibitem[von Stengel and Zamir, 2010]{vonZam:GEB2010}
von Stengel, B. and Zamir, S. (2010).
\newblock Leadership games with convex strategy sets.
\newblock {\em Games Econom. Behav.}, 69(2):446--457.

\bibitem[{Walmart Data Ventures}, 2025]{WalmartETBrowneScintilla2025}
{Walmart Data Ventures} (2025).
\newblock Influenced: Uncovering a viral moment that boosted sales.

\bibitem[{Walmart Inc.}, 2024]{WalmartLuminate2024}
{Walmart Inc.} (2024).
\newblock Walmart announces global launch of {W}almart {L}uminate.

\bibitem[Weitzman, 1980]{Weitzman1980}
Weitzman, M.~L. (1980).
\newblock The ``ratchet principle'' and performance incentives.
\newblock {\em The Bell Journal of Economics}, 11(1):302--308.

\bibitem[Zhang et~al., 2010]{ZhaNagSos:OR2010}
Zhang, H., Nagarajan, M., and So\v{s}i\'c, G. (2010).
\newblock Dynamic supplier contracts under asymmetric inventory information.
\newblock {\em Oper. Res.}, 58(5):1380--1397.

\bibitem[Zhang et~al., 2026]{ZhaLiQinXuZhu:arXiv2026}
Zhang, W., Li, C., Qin, H., Xu, Y., and Zhu, R. (2026).
\newblock Thompson sampling for repeated newsvendor.
\newblock Technical report, arXiv:2502.09900.

\bibitem[Zhao et~al., 2026]{ZhaZhuHas:MS2026}
Zhao, X., Zhu, R., and Haskell, W.~B. (2026).
\newblock Learning to price supply chain contracts against a learning retailer.
\newblock {\em Management Sci.}, 72(3):2168--2187.

\end{thebibliography}

\end{document}